\documentclass[12pt,a4paper,leqno]{amsart}

\title[Conormal distributions in the Shubin calculus]{Conormal distributions in the Shubin calculus of pseudodifferential operators}

\author[M. Cappiello]{Marco Cappiello}

\address{Department of Mathematics, University of Torino, Via Carlo Alberto 10, 10123 Torino, Italy.}

\email{marco.cappiello[AT]unito.it}

\author[R. Schulz]{Ren\'e Schulz}

\address{Leibniz Universit\"at Hannover, Institut f\"ur Analysis, Welfenplatz 1, D--30167 Hannover, Germany}

\email{rschulz[AT]math.uni-hannover.de}

\author[P. Wahlberg]{Patrik Wahlberg}

\address{Department of Mathematics, Linn{\ae}us University, SE--351 95 V\"axj\"o, Sweden}

\email{patrik.wahlberg[AT]lnu.se}

\usepackage{tikz,xifthen}

\usepackage{latexsym}
\usepackage{amsmath}
\usepackage{amssymb}
\usepackage{amsthm}
\usepackage{bbm}
\usepackage{amsfonts}
\usepackage{mathrsfs}
\usepackage{calc}
\usepackage{cite}
\usepackage{color}
\usepackage{graphicx}
\usepackage{enumerate}
\usepackage{setspace}

\setcounter{section}{\value{section}-1}   

\numberwithin{equation}{section}          

\newtheorem{thm}{Theorem}
\numberwithin{thm}{section}

\newcommand{\rubrik}{}
\newtheorem{prop}[thm]{Proposition}
\newtheorem{cor}[thm]{Corollary}
\newtheorem{lem}[thm]{Lemma}

\theoremstyle{definition}

\newtheorem{defn}[thm]{Definition}
\newtheorem{example}[thm]{Example}

\theoremstyle{remark}

\newtheorem{rem}[thm]{Remark}              


\newcommand{\Ker}{\operatorname{Ker}}

\newcommand{\ro}{\mathbb R}
\newcommand{\no}{\mathbb N}
\newcommand{\rr}[1]{\mathbb R^{#1}}
\newcommand{\nn}[1]{\mathbb N^{#1}}

\newcommand{\co}{\mathbb C}

\newcommand{\cl}{\mathrm{cl}}
\newcommand{\dd}{\mathrm {d}}

\newcommand{\eabs}[1]{\langle #1\rangle}

\newcommand{\Sp}{\operatorname{Sp}}
\newcommand{\GL}{\operatorname{GL}}
\newcommand{\M}{\operatorname{M}}
\newcommand{\On}{\operatorname{O}}

\newcommand{\dbar}{{{{\ \mathchar'26\mkern-12mu \mathrm d}}}}

\newcommand{\WF}{\mathrm{WF}}
\newcommand{\dist}{\operatorname{dist}}


\newcommand{\cS}{\mathscr{S}}
\newcommand{\cT}{\mathcal{T}}
\newcommand{\cV}{\mathcal{V}}
\newcommand{\cTp}{\mathcal{T}_{\psi_0}}

\newcommand{\cF}{\mathscr{F}}

\newcommand{\wt}{\widetilde}
\newcommand{\wh}{\widehat}

\def\la{\langle}
\def\ra{\rangle}
\newcommand{\leqs}{\leqslant}
\newcommand{\geqs}{\geqslant}


\begin{document}

\begin{abstract}
We characterize the Schwartz kernels of pseudodifferential operators of Shubin type by means of an FBI transform. 
Based on this we introduce as a generalization a new class of tempered distributions called Shubin conormal distributions. 
We study their transformation behavior, normal forms and microlocal properties.
\end{abstract}

\keywords{Pseudodifferential operator, Shubin symbols, FBI transform, conormal distribution}
\subjclass[2010]{46F05,46F12,35A18,35A22}

\maketitle

\section{Introduction}

The theory of pseudodifferential operators has proven to be a powerful tool in many disciplines of mathematics. 
The space of conormal distributions was designed to contain the Schwartz kernels of pseudodifferential operators with H\"ormander symbols, see \cite[Chapter~18.2]{Hormander0}.
Conormal distributions are the starting point for the theory of Lagrangian distributions and Fourier integral operators \cite[Chapter~25]{Hormander0}, but it has also been studied in itself to a great extent, and it is essential in several theories, see e.g. \cite{Bony,MelroseAPS}.  A distribution $u$ defined on a smooth manifold is conormal with respect to a closed smooth submanifold if $Lu$ belongs to a certain Besov space locally for certain differential operators $L$ that depend on the submanifold. 

For the well-studied pseudodifferential operators on $\rr d$ with Shubin symbols \cite{Shubin1}, we are not aware of a concept corresponding to conormal distributions. 
In this paper we fill this gap by introducing a theory of conormal distributions with repect to linear subspaces of $\rr d$, adapted to Shubin operators. 
Recall that a Shubin symbol $a \in \Gamma_\rho^m$ of order $m \in \ro$ satisfies the estimates
\begin{equation*}
|\partial_x^\alpha\partial_\xi^\beta a(x,\xi)| \lesssim (1+|x|+|\xi|)^{m-\rho|\alpha+\beta|}, \quad (x,\xi) \in \rr d \times \rr d, \ \alpha, \beta \in \nn d,
\end{equation*}
where $0 \leqs \rho \leqs 1$. 

The key feature of the Shubin symbols that is difficult to describe by the standard techniques is the inherent isotropy, in particular that taking derivatives with respect to $x$ increases the decay in $\xi$. The tool that we employ to circumvent this issue is a version of the short-time Fourier transform, which is more suitable to isotropic symbols than the standard Fourier transform on which the classical theory is based.

Our work may be seen as phase space analysis of Shubin conormality. 
We extend Tataru's characterization \cite{Tataru} of the Schwartz kernels of pseudodifferential operators with $m=\rho=0$ to $0 \leqs \rho \leqs 1$ and order $m \in \ro$.  The behavior of the symbols with respect to derivatives and the order is reflected in phase space. 
 
Based on the characterization of the Schwartz kernels of Shubin operators, we define conormal tempered distributions on $\rr d$ with respect to a linear subspace and an order $m \in \ro$. To distinguish them from H\"ormander's notion of conormal distribution, we use the prefix $\Gamma$-conormal. The Schwartz kernels of Shubin operators are thus identical to the $\Gamma$-conormal distributions on $\rr {2d}$ with respect to the diagonal in $\rr {2d}$. 

We prove functional properties of $\Gamma$-conormal distributions and check that they transform well under the Fourier transform and linear coordinate transformations. We equip them with a topology such that these operators become continuous. 
The present paper can be seen as a first step in the direction of a phase space analysis for Lagrangian distributions in the Shubin calculus which, as far as we know, does not yet exist. This will be the subject of a forthcoming paper.

The paper is organized as follows: 
In Section \ref{sec:prelim} we introduce the FBI-type integral transform on which our analysis is based and state its basic properties. 
Section \ref{sec:shubchar} contains a phase space characterization of Shubin symbols in terms of the integral transform. 
In Section \ref{sec:pseudochar} we transfer the characterization to the Schwartz kernels of the associated class of global pseudodifferential operators. 
Along the way we give a simple proof of the continuity of these operators on the associated scale of Shubin--Sobolev modulation spaces. 
Finally in Section \ref{sec:gconorm} we define $\Gamma$-conormal distributions and discuss their functional and microlocal properties. 

\section{An integral transform of FBI type}\label{sec:prelim}

In this section we introduce the tool for the definition of Shubin conormal distributions, namely a variant of the FBI transform, and discuss its main properties. First we fix some notation. 

\subsection*{Basic notation}

We use $\cS(\rr d)$ and $\cS'(\rr d)$ for the Schwartz space of rapidly decaying smooth functions and its dual the tempered distributions. We write $\langle u,v\rangle$ for the bilinear pairing between a test function $v$ and a distribution $u$ and $(u,v)=\langle u,\overline{v}\rangle$ for the sesquilinear pairing as well as the $L^2$ scalar product if $u, v \in L^2(\rr d)$.

We use $T_{y}u(x) = u(x-y)$
and $M_\xi u(x) = e^{i \la x,\xi \ra}u(x)$, where $\la \cdot,\cdot \ra$ denotes the inner product on $\rr d$, for the operation of translation by $y\in \rr d$
and modulation by $\xi\in \rr d$, respectively, applied to functions or distributions. For $x \in \rr d $ we write $\eabs{x}=\sqrt{1+|x|^2}$. Peetre's inequality is
\begin{equation}
\label{eq:Peetre}
\eabs{x+y}^s \leqs C_s \eabs{x}^s\eabs{y}^{|s|}\qquad x,y \in \rr d, \quad s \in \ro, \quad C_s>0.
\end{equation}
We write $\dbar x$ for the dual Lebesgue measure $(2\pi)^{-d}\dd x$.
The notation $f (x) \lesssim g(x)$ means that $f(x) \leqs C g(x)$ for some $C>0$, for all $x$ in the domain of $f$ and $g$. 
If $f (x) \lesssim g (x) \lesssim f (x)$ then we write $f (x) \asymp g (x)$.

The Fourier transform is normalized for $f \in \cS(\rr d)$ as 
\begin{equation*}
\cF f (\xi) = \widehat f (\xi) = (2\pi)^{-d/2} \int_{\rr d} f(x) e^{-i \la x,\xi \ra} \, \dd x 
\end{equation*}
which makes it unitary on $L^2(\rr d)$. 
The partial Fourier transform with respect to a vector variable indexed by $j$ is denoted $\cF_j$. 
For $1 \leqs j \leqs d$ we use $D_j = -i \partial_j$ and extend to multi-indices. 

The orthogonal projection on a linear subspace $Y \subseteq \rr d$ is $\pi_Y$. 
We denote by $\M_{d_1 \times d_2}( \ro )$ the space of $d_1\times d_2$ matrices with real entries, by $\GL(d,\ro)$ the group of invertible elements of $\M_{d \times d}( \ro )$, and by $\On(d)$ the subgroup of orthogonal matrices in $\GL(d,\ro)$.
The real symplectic group \cite{Folland1} is denoted $\Sp(d,\ro)$ and is defined as the matrices in $\GL(2d,\ro)$ that leaves invariant the 
canonical symplectic form on $T^* \rr d$
\begin{equation*}
\sigma((x,\xi), (x',\xi')) = \la x' , \xi \ra - \la x, \xi' \ra, \quad (x,\xi), (x',\xi') \in T^* \rr d.
\end{equation*}

For a function $f$ on $\rr d$ and $A \in \GL(d,\ro)$ we denote the pullback by $A^* f = f \circ A$. 
The determinant of $A \in \M_{d \times d}( \ro )$ is $|A|$, the transpose is $A^t$, and the inverse of the transpose is $A^{-t}$. 

\subsection*{An integral transform of FBI type} 

\begin{defn}\label{def:FBItransform}
Let $u\in \cS^\prime(\rr d)$ and let $g\in \cS(\rr d)\setminus\{0\}$ be a \textit{window function}. Then the transform $\cT_g u: \rr {2d} \rightarrow \co$ is 
\begin{equation}
\label{eq:cTdef}
\cT_g u(x,\xi)=(2\pi)^{-d/2}(u,T_x M_{\xi}g), \quad x, \xi \in \rr d. 
\end{equation}
\end{defn}

If $u \in \cS(\rr d)$ then $\cT_g u \in \cS(\rr {2d})$ \cite[Theorem~11.2.5]{Grochenig1}. 
The adjoint $\cT_g^*$ is $(\cT_g^* U, f) = (U, \cT_g f)$ for $U \in \cS'(\rr {2d})$ and $f \in \cS(\rr d)$. 
When $U$ is a polynomially bounded measurable function we write
\begin{equation*}
\cT_g^* U(y) = (2\pi)^{-d/2} \int_{\rr {2d}} U(x,\xi) \, T_{x} M_{\xi} g(y) \, \dd x \, \dd \xi 
\end{equation*}
where the integral is defined weakly so that $(\cT_g^* U, f) = (U, \cT_g f)_{L^2}$ for $f \in \cS(\rr d)$. 
\begin{rem}
For $u\in\cS(\rr d)$ we have
\begin{equation*}
\cT_g u(x,\xi)=(2\pi)^{-d/2} \int_{\rr d} u(y) \, e^{-i \la y-x,\xi \ra} \, \overline{g (y-x)} \ \dd y
 = e^{i\langle x, \xi \rangle} \cF(u \, T_x \overline g) (\xi).
\end{equation*}
\end{rem}
The standard, $L^2$-normalized Gaussian window function on $\rr d$ is denoted $\psi_0(x) = \pi^{-d/4}e^{-|x|^2/2}$.

\begin{prop}
\label{prop:Swdchar}
\rm{\cite[Theorem~11.2.3]{Grochenig1}}
Let $u\in\cS'(\rr d)$ and $g \in \cS(\rr d) \setminus 0$. Then $\cT_g u\in C^\infty(\rr {2d})$ and there exists $N \in \no$ that does not depend on $g$ such that 
\begin{equation}
\label{eq:Swdineq}
|\cT_g u(x,\xi)|\lesssim \eabs{(x,\xi)}^{N}, \quad (x,\xi) \in \rr {2d}.
\end{equation}
We have $u\in \cS(\rr d)$ if and only if for any $N \in \no$ 
\begin{equation}
\label{eq:Swineq}
|\cT_g u(x,\xi)|\lesssim \eabs{(x,\xi)}^{-N}, \quad (x,\xi) \in \rr {2d}. 
\end{equation}
\end{prop}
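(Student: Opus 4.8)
The plan is to prove Proposition~\ref{prop:Swdchar} in three parts: the smoothness and polynomial bound \eqref{eq:Swdineq}, the ``only if'' direction of the equivalence, and the ``if'' direction. The core tool throughout is the definition \eqref{eq:cTdef}, the fact that $\cS'(\rr d)$ has the structure of a dual space with seminorm estimates, and the elementary mapping properties of the operators $T_x$ and $M_\xi$ on $\cS(\rr d)$.

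\textbf{Smoothness and polynomial growth.} First I would observe that the map $(x,\xi)\mapsto T_x M_\xi g$ is a smooth $\cS(\rr d)$-valued function of $(x,\xi)$: differentiating under the translation and modulation produces again a Schwartz function, namely a finite linear combination of terms $x^\gamma T_x M_\xi(\partial^\delta g)$ type expressions (more precisely, $\partial_x^\alpha\partial_\xi^\beta(T_xM_\xi g)$ is, up to constants, a sum of functions of the form $T_x M_\xi(y^{\beta'}\partial_y^{\alpha'}g)$ with polynomial-in-$\xi$ coefficients). Since $u\in\cS'(\rr d)$ acts continuously on $\cS(\rr d)$, this shows $\cT_g u\in C^\infty(\rr{2d})$ and all derivatives may be computed by differentiating the window. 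For the bound \eqref{eq:Swdineq}, I would use that continuity of $u$ on $\cS(\rr d)$ gives a constant $C$ and an integer $N$ (depending only on $u$) such that $|(u,\varphi)|\leqs C\sum_{|\alpha|,|\beta|\leqs N}\sup_y|y^\alpha\partial^\beta\varphi(y)|$, and then estimate the Schwartz seminorms of $T_xM_\xi g$: each derivative hitting the modulation produces a factor $\langle\xi\rangle^{|\beta|}$, and translating the polynomial weight $y^\alpha$ costs a factor $\langle x\rangle^{|\alpha|}$ by Peetre's inequality \eqref{eq:Peetre}, while the Schwartz seminorms of $g$ itself are fixed finite constants. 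This yields \eqref{eq:Swdineq} with $N$ depending only on the order of $u$ as a tempered distribution, hence independent of $g$.

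\textbf{The equivalence.} For the ``only if'' direction, if $u\in\cS(\rr d)$ then by the cited result $\cT_g u\in\cS(\rr{2d})$, which in particular gives \eqref{eq:Swineq} for every $N$. (Alternatively one can argue directly: $\cT_g u(x,\xi)=e^{i\langle x,\xi\rangle}\cF(u\,T_x\overline g)(\xi)$ by the Remark, and rapid decay in $\xi$ follows from smoothness of $u\,T_x\overline g$ uniformly in $x$, while rapid decay in $x$ follows from the rapid decay of $u$ and $g$.) For the ``if'' direction, the key is an inversion/reconstruction formula for $\cT_g$. Fixing a synthesis window $\gamma\in\cS(\rr d)$ with $(\gamma,g)\neq 0$, one has the identity $u=(\gamma,g)^{-1}\cT_\gamma^*\cT_g u$ in $\cS'(\rr d)$, i.e. $u=(2\pi)^{-d/2}(\gamma,g)^{-1}\int_{\rr{2d}}\cT_g u(x,\xi)\,T_xM_\xi\gamma\,\dd x\,\dd\xi$, interpreted weakly. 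Assuming \eqref{eq:Swineq} for all $N$, I would test this reconstruction against an arbitrary $f\in\cS(\rr d)$ and against $y^\alpha\partial^\beta$ applied to the integrand: the pairing $(T_xM_\xi\gamma,\,\cdot\,)$ and its derivatives against Schwartz functions are bounded by $\langle(x,\xi)\rangle^{M}$ for a fixed $M$, so the rapid decay of $\cT_g u$ makes the integral converge absolutely and defines a Schwartz function; differentiating under the integral sign and estimating the Schwartz seminorms of $u$ this way shows $u\in\cS(\rr d)$.

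\textbf{Main obstacle.} The routine parts are the seminorm bookkeeping; the one place requiring genuine care is making the reconstruction formula and the differentiation-under-the-integral rigorous at the level of tempered distributions — one must justify that $\cT_\gamma^*\cT_g u$ equals $(\gamma,g)u$ as distributions (this is the Moyal-type orthogonality relation for the STFT, \cite[Chapter~11]{Grochenig1}) and that, under \eqref{eq:Swineq}, the weak integral actually represents a Schwartz function whose seminorms are controlled. The interchange of the $u$-pairing with the $(x,\xi)$-integration is the technical heart, and it is handled by dominated convergence using the polynomial bound \eqref{eq:Swdineq} applied to $\gamma$ (to control $|(T_xM_\xi\gamma,\varphi)|$ and its $y,\partial$-derivatives uniformly) together with the assumed rapid decay of $\cT_g u$.
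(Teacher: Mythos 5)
Your argument is correct, and it follows the standard route: since the paper itself proves nothing here and simply cites Gr\"ochenig's Theorem~11.2.3 (together with the relation $\cT_g u = e^{i\langle x,\xi\rangle}\cV_g u$), there is no alternative ``paper proof'' to compare against; your seminorm estimate on $T_xM_\xi g$ for the polynomial bound, and the inversion formula \eqref{eq:reproducing} plus differentiation under the integral for the Schwartz characterization, is exactly the textbook argument. You correctly isolate the one genuinely delicate step --- justifying the weak-integral reconstruction and the interchange of the distributional pairing with the $(x,\xi)$-integration --- which is handled by absolute convergence (coming from the assumed rapid decay of $\cT_g u$ paired with the polynomial bound already proved for $\cT_\gamma f$) and Fubini.
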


\begin{rem}
(Relation to other integral transforms.)
The transform $\cT_g$ is related to the short-time Fourier transform (cf. \cite{Grochenig1})
\begin{equation*}
\cV_g u(x,\xi) = (2\pi)^{-d/2}(u,M_{\xi} T_x g), \quad x, \xi \in \rr d, 
\end{equation*}
(for the Gaussian window $g=\psi_0$ also known as the Gabor transform) via
\begin{equation*}
\cT_g u(x,\xi) = e^{i \la x, \xi \ra} \cV_{g}u(x,\xi).
\end{equation*}

For the standard Gaussian window \eqref{eq:cTdef} may be expressed as
\begin{equation}
\label{eq:cTpconvdef}
\cTp u(x,\xi)=(2\pi)^{-d/2}e^{-\frac{|\xi|^2}{2}} ( u * \psi_0 )(x-i\xi)=\mathcal{B}  u(x-i\xi) \, e^{-(|x|^2 + |\xi|^2)/2},
\end{equation}
where $\mathcal{B}$ stands for the Bargmann transform \cite{Grochenig1}. 
\end{rem}

We have for two different windows $g,h\in \cS(\rr d)$ 
\begin{equation}
\label{eq:reproducing}
\cT_h^*\cT_g u=(h,g) u, \qquad u \in \cS'(\rr d), 
\end{equation}
and consequently, $\|g\|_{L^2}^{-2}\cT_g^*\cT_g u=u$ for $g \in \cS(\rr d) \setminus 0$ \cite{Grochenig1}.
If $(h,g) \neq 0$ the inversion formula \eqref{eq:reproducing} can  be written as
\begin{equation*}
( u,f) = (h,g)^{-1}  (\cT_g u, \cT_h f), \qquad u \in \cS'(\rr d), \quad f \in \cS(\rr d).
\end{equation*}

Two important features of $\cT_g$ which distinguishes it from the short-time Fourier transform are the following differential identities. 
\begin{align}
\label{eq:diffident}
\partial_x^\alpha \cT_g u (x,\xi) & = \cT_g (\partial^\alpha u) (x,\xi), \qquad \alpha \in \nn d, \\
\label{eq:diffidentstar}
D_{\xi}^\beta \cT_g u (x,\xi) & = \cT_{g_\beta} u (x,\xi), \qquad \beta \in \nn d, \qquad g_\beta (x) = (-x)^\beta g(x).
\end{align}

As described in \cite{Grochenig1} for the short time Fourier transform, \eqref{eq:reproducing} may be used to estimate the behavior of $\cT_g$ under a change of window.
The following version of this result takes derivatives into account:
\begin{lem}
\label{lem:windchange}
Let $u \in \cS'(\rr d)$ and let $f,g,h \in\cS(\rr d) \setminus 0$ satisfy $(h,g) \neq 0$. 
Then for all $\alpha,\beta \in \nn d$ and $(x,\xi) \in \rr {2d}$
\begin{equation}\label{eq:Tderivative}
|\partial_x^\alpha \partial_{\xi}^\beta \cT_f u(x,\xi)| \leqs (2\pi)^{-d/2} |(h,g)|^{-1} |\partial_x^\alpha \cT_g u|*|\cT_{f_\beta} h| (x,\xi).
\end{equation}
\end{lem}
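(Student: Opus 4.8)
The plan is to first establish \eqref{eq:Tderivative} in the special case $\alpha=\beta=0$ — essentially the known change-of-window estimate — and then to recover the general case by extracting the $x$- and $\xi$-derivatives via the differential identities \eqref{eq:diffident} and \eqref{eq:diffidentstar}.

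For the base case $\alpha=\beta=0$ I would substitute the inversion formula $u=(h,g)^{-1}\cT_h^*\cT_g u$ from \eqref{eq:reproducing}, so that $\cT_f u=(h,g)^{-1}\cT_f(\cT_h^*V)$ with $V:=\cT_g u$. Since $V$ is polynomially bounded by Proposition \ref{prop:Swdchar} and $T_xM_\xi f\in\cS(\rr d)$, the adjoint relation defining $\cT_h^*$ gives directly
\begin{equation*}
\cT_f(\cT_h^*V)(x,\xi)=(2\pi)^{-d}\int_{\rr{2d}}V(y,\eta)\,(T_yM_\eta h,T_xM_\xi f)\,\dd y\,\dd\eta .
\end{equation*}
The computational core is then the covariance identity $|(T_yM_\eta h,T_xM_\xi f)|=(2\pi)^{d/2}|\cT_f h(x-y,\xi-\eta)|$, obtained by translating and modulating in the integral defining the inner product; inserting it and recalling $\cT_f h\in\cS(\rr{2d})$ yields
\begin{equation*}
|\cT_f u(x,\xi)|\leqs (2\pi)^{-d/2}|(h,g)|^{-1}\,|\cT_g u|*|\cT_f h|(x,\xi),
\end{equation*}
which is \eqref{eq:Tderivative} for $\alpha=\beta=0$. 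Equivalently, since $|\cT_g v|=|\cV_g v|$, this is the change-of-window inequality for the short-time Fourier transform from \cite{Grochenig1}.

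For arbitrary $\alpha,\beta$ I would use that $\partial_x$ and $\partial_\xi$ commute and apply the differential identities. By \eqref{eq:diffidentstar}, $\partial_\xi^\beta\cT_f u$ equals $\cT_{f_\beta}u$ up to a unimodular factor, with $f_\beta(x)=(-x)^\beta f(x)\in\cS(\rr d)$; then \eqref{eq:diffident} gives $\partial_x^\alpha\cT_{f_\beta}u=\cT_{f_\beta}(\partial^\alpha u)$. Hence, up to a unimodular factor, $\partial_x^\alpha\partial_\xi^\beta\cT_f u=\cT_{f_\beta}(\partial^\alpha u)$. Applying the base case already proved, now with the window $f_\beta$ in place of $f$ and the tempered distribution $\partial^\alpha u$ in place of $u$ (the factor $(h,g)$ being unaffected), and then \eqref{eq:diffident} once more in the form $|\cT_g(\partial^\alpha u)|=|\partial_x^\alpha\cT_g u|$, I obtain
\begin{equation*}
|\partial_x^\alpha\partial_\xi^\beta\cT_f u(x,\xi)|=|\cT_{f_\beta}(\partial^\alpha u)(x,\xi)|\leqs (2\pi)^{-d/2}|(h,g)|^{-1}\,|\partial_x^\alpha\cT_g u|*|\cT_{f_\beta}h|(x,\xi),
\end{equation*}
which is exactly \eqref{eq:Tderivative}.

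The step requiring the most care is the base case: one must justify that $\cT_f(\cT_h^*\cT_g u)$ is represented by the absolutely convergent double integral above, which uses the polynomial bound for $\cT_g u$ from Proposition \ref{prop:Swdchar} together with the rapid decay of $\cT_f h$, and one must keep track of the normalization constants through the covariance computation. Beyond that, the argument is purely formal manipulation with time-frequency shifts and the observation that the constants introduced by passing between $\partial_\xi$ and $D_\xi$ are unimodular and hence irrelevant for the modulus.
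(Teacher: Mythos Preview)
Your proof is correct and uses essentially the same ingredients as the paper's: the inversion formula \eqref{eq:reproducing}, the integral representation of $\cT_f\cT_h^*\cT_g u$ via the covariance identity $(T_yM_\eta h,T_xM_\xi f)=(2\pi)^{d/2}e^{i\la x-y,\eta\ra}\cT_f h(x-y,\xi-\eta)$, and the differential identities \eqref{eq:diffident}--\eqref{eq:diffidentstar}. The only difference is organizational: the paper differentiates the integral representation directly in $x$ and $\xi$ and then takes absolute values, whereas you first isolate the base case $\alpha=\beta=0$ and then pull the derivatives outside via $|\partial_x^\alpha\partial_\xi^\beta\cT_f u|=|\cT_{f_\beta}(\partial^\alpha u)|$ before applying the base case with modified data.
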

\begin{proof}
We obtain from \eqref{eq:reproducing}
\begin{equation}
\label{eq:reprointer}
\cT_f u = (h,g)^{-1} \cT_f \cT_h^*\cT_g u.
\end{equation}
We may express $\cT_f \cT_h^*\cT_g u$ as
\begin{align*}
\cT_f \cT_h^*\cT_g u(x,\xi)
& = (2\pi)^{-d/2} ( \cT_g u, \cT_h (T_x M_\xi f) ) \\
& = (2\pi)^{-d} \int_{\rr {2d}} \cT_g u(y,\eta) \, (T_y M_\eta h, T_x M_\xi f) \, \dd y \, \dd \eta \\
& = (2\pi)^{-d/2} \int_{\rr {2d}} e^{i \la x-y,\eta \ra} \cT_g u(y,\eta) \, \cT_f h(x-y,\xi-\eta) \, \dd y \, \dd \eta \\
& = (2\pi)^{-d/2} \int_{\rr {2d}} e^{i \la y,\eta \ra} \cT_g u(x-y,\eta) \, \cT_f h(y,\xi-\eta) \, \dd y \, \dd \eta.
\end{align*}
Combining \eqref{eq:diffident}, \eqref{eq:diffidentstar} and \eqref{eq:reprointer} yields
\begin{align*}
& \partial_x^\alpha D_{\xi}^\beta \cT_f u (x,\xi) \\
& = (2\pi)^{-d/2} (h,g)^{-1} \int_{\rr {2d}} 
e^{i \la x-y,\eta \ra} \partial_y^\alpha\cT_g u(y,\eta) \, \cT_{f_\beta} h(x-y,\xi-\eta) \, \dd y\, \dd \eta.
\end{align*}
Taking absolute value gives \eqref{eq:Tderivative}.
\end{proof}

\subsection{Transformation under shifts and symplectic matrices}

A pseudodifferential operator in the Weyl quantization is for $f \in \cS(\rr d)$ defined as
\begin{equation} \label{shubop}
a^w(x,D) f(x) = \int_{\rr {2d}} e^{i \la x-y, \xi \ra} a\left((x+y)/2,\xi \right) \, f(y) \, \dbar \xi \, \dd y 
\end{equation}
where $a$ is the Weyl symbol. 
We will later use Shubin symbols, but for now it suffices to note that the Weyl quantization extends 
by the Schwartz kernel theorem to $a \in \cS'(\rr {2d})$, and then gives 
rise to a continuous linear operator from $\cS(\rr d)$ to $\cS'(\rr d)$. 

The Schwartz kernel of the operator $a^w(x,D)$ is
\begin{equation}\label{eq:schwartzkernelpseudo}
K_{a}(x,y)=\int_{\rr d} e^{i \la x-y, \xi \ra} a\left((x+y)/2,\xi \right) \dbar \xi
\end{equation}
interpreted as a partial inverse Fourier transform in $\cS'(\rr {2d})$ when $a \in \cS'(\rr {2d})$. 

The \emph{metaplectic representation} \cite{Folland1,Taylor1} works as follows.
To each symplectic matrix $\chi \in \Sp(d,\ro)$ is associated an operator $\mu(\chi)$ that is unitary on $L^2(\rr d)$, and determined up to a complex factor of modulus one, such that
\begin{equation}\label{metaplecticoperator}
\mu(\chi)^{-1} a^w(x,D) \, \mu(\chi) = (a \circ \chi)^w(x,D), \quad a \in \cS'(\rr {2d})
\end{equation}
(cf. \cite{Folland1,Hormander0}).
The operator $\mu(\chi)$ is a homeomorphism on $\mathscr S$ and on $\mathscr S'$.

The metaplectic representation is the mapping $\Sp(d,\ro) \ni \chi \rightarrow \mu(\chi)$.
It is in fact a representation of the so called $2$-fold covering group of $\Sp(d,\ro)$, which is called the metaplectic group. 

In Table \ref{tab:meta} we list the generators $\chi$ of the symplectic group, the corresponding unitary operators $\mu(\chi)$ on $u \in L^2$, 
and the corresponding transformation on $\cT_g u$, cf. \cite{DeGosson}. 
We also list the correspondence for phase shift operators. 
Here $x_0, \xi_0 \in \rr d$, $A \in \GL(d,\ro)$, $B \in M_{d\times d}(\mathbb{R})$ with $B=B^t$.

 \renewcommand{\arraystretch}{1.2}
\begin{table}[htb!]
\caption{The metaplectic representation}
\label{tab:meta}
\begin{tabular}{|c||c|}
\hline
\textbf{Transformation} & \textbf{Action on:} \\
 & $(x,\xi)\in T^* \rr d$ \\ 
 &  $u\in L^2(\rr{d})$ \\ 
 & $\cT_g u(x,\xi) \in L^2(\rr {2d})$ \\
\hline \hline
& $(A^{-1}x,A^t\xi)$ \\ 
Coordinate change &  $|A|^{1/2} A^* u$ \\ 
& $|A|^{-1/2} \cT_{A^{-*}g} u (A x,A^{-t} \xi)$\\ 
\hline 
& $(\xi,-x)$ \\ 
Rotation $\pi/2$ & $\cF u$ \\ 
& $e^{i \la x,\xi \ra} \cT_{\cF^{-1} g} u (-\xi,x)$ \\ 
\hline 
& $(x,\xi+Bx)$ \\ 
Shearing & $e^{\frac{i}{2} \la x,Bx \ra}u(x)$ \\ 
& $e^{\frac{i}{2} \la x,B x \ra}\cT_{g_{B}} u(x,\xi-Bx)$ \\ 
\hline 
& $(x+x_0,\xi+\xi_0)$ \\ 
Shift & $T_{x_0} M_{\xi_0} u$ \\ 
& $ e^{i \la \xi_0,x-x_0 \ra} \cT_g u (x-x_0,\xi-\xi_0)$ \\ 
\hline
\end{tabular} 
\end{table}
 \renewcommand{\arraystretch}{1}

The proofs of the claims in Table \ref{tab:meta} are collected in the following lemmas. 

\begin{lem}
\label{lem:shift}
Let $u \in \cS '(\rr d)$ and $g \in  \cS(\rr d) \setminus 0$. 
If $(x_0,\xi_0) \in T^* \rr d$, $A\in\GL(d,\ro)$, $B \in \M_{d \times d}(\ro)$ is symmetric, $v(x)=e^{\frac{i}{2} \la x,Bx \ra} u(x)$
and $g_B(y)= e^{-\frac{i}{2} \la y, B y \ra}g(y)$, 
then for $(x,\xi) \in T^* \rr d$
\begin{align*}
\cT_g(T_{x_0} M_{\xi_o} u)(x,\xi) & = e^{i \la \xi_0, x-x_0 \ra} \cT_g u(x-x_0,\xi-\xi_0), \\
\cT_g( |A|^{1/2} A^*u)(x,\xi) & = |A|^{-1/2}\cT_{A^{-*}g} u (A x,A^{-t} \xi), \\
\cT_g v(x,\xi) & = e^{\frac{i}{2} \la x,Bx \ra }\cT_{g_B} u(x,\xi - Bx). 
\end{align*}
\end{lem}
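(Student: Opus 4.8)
The plan is to verify each of the three identities in \lemref{lem:shift} directly from the definition \eqref{eq:cTdef}, first for $u \in \cS(\rr d)$ using the integral formula from the Remark, $\cT_g u(x,\xi) = (2\pi)^{-d/2} \int_{\rr d} u(y) e^{-i \la y-x,\xi \ra} \overline{g(y-x)} \, \dd y$, and then extending to $u \in \cS'(\rr d)$ by duality. The extension step is routine: each right-hand side is a composition of $\cT$ (with some Schwartz window) with phase factors and linear substitutions in $(x,\xi)$, all of which are continuous on $\cS'(\rr {2d})$, and since $\cS(\rr d)$ is dense in $\cS'(\rr d)$ in the weak-$*$ topology while both sides are weak-$*$ continuous in $u$, it suffices to check the Schwartz case.

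For the first identity I would write, for $u \in \cS(\rr d)$,
\begin{equation*}
\cT_g(T_{x_0} M_{\xi_0} u)(x,\xi) = (2\pi)^{-d/2} \int_{\rr d} e^{i \la y, \xi_0 \ra} u(y - x_0) e^{-i \la y - x, \xi \ra} \overline{g(y-x)} \, \dd y,
\end{equation*}
substitute $z = y - x_0$, and collect the resulting exponentials; the phase becomes $\la z + x_0, \xi_0 \ra - \la z + x_0 - x, \xi \ra$, and peeling off $e^{i \la x_0, \xi_0 \ra} e^{i \la x_0 - x, \xi \ra}$ leaves precisely the integrand defining $\cT_g u(x - x_0, \xi - \xi_0)$ together with the factor $e^{i \la \xi_0, x - x_0 \ra}$ after recombining. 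This is the bookkeeping one expects from Table \ref{tab:meta} for a phase-space shift.

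For the second (coordinate change) identity, starting from $\cT_g(|A|^{1/2} A^* u)(x,\xi) = (2\pi)^{-d/2} |A|^{1/2} \int_{\rr d} u(Ay) e^{-i \la y - x, \xi \ra} \overline{g(y-x)} \, \dd y$, I would substitute $z = Ay$, so $\dd y = |A|^{-1} \dd z$, rewrite $\la y - x, \xi \ra = \la A^{-1} z - x, \xi \ra = \la A^{-1}(z - Ax), \xi \ra = \la z - Ax, A^{-t} \xi \ra$, and likewise $g(y - x) = g(A^{-1}(z - Ax)) = (A^{-*} g)(z - Ax)$ up to the correct reading of $A^{-*} = (A^{-1})^* = g \circ A^{-1}$; balancing the Jacobian $|A|^{1/2} |A|^{-1} = |A|^{-1/2}$ then yields $|A|^{-1/2} \cT_{A^{-*} g} u(Ax, A^{-t} \xi)$. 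For the third (shearing) identity, $\cT_g v(x,\xi) = (2\pi)^{-d/2} \int_{\rr d} e^{\frac{i}{2} \la y, By \ra} u(y) e^{-i \la y - x, \xi \ra} \overline{g(y-x)} \, \dd y$, I would use the symmetry of $B$ to expand $\la y, By \ra = \la (y - x) + x, B((y-x) + x) \ra = \la y - x, B(y-x) \ra + 2 \la x, B(y-x) \ra + \la x, Bx \ra$, absorb $e^{\frac{i}{2}\la y-x, B(y-x)\ra}$ into the window to produce $\overline{g_B(y-x)}$, pull out $e^{\frac{i}{2}\la x, Bx\ra}$, and observe that the cross term combines with $e^{-i\la y-x,\xi\ra}$ to give $e^{-i \la y - x, \xi - Bx \ra}$, which is exactly the phase for $\cT_{g_B} u(x, \xi - Bx)$.

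I do not anticipate a serious obstacle here; the statement is essentially a packaging of standard short-time Fourier transform covariance (cf.\ \cite{Grochenig1,DeGosson}) recast for $\cT_g$ via $\cT_g u = e^{i \la x,\xi\ra} \cV_g u$. The one point demanding care is keeping the conventions straight: the direction of the translation in the window, the meaning of $A^{-*}g$ versus $A^* g$, and the placement of the $\overline{\,\cdot\,}$ on the window, since a sign or a transpose misplaced there propagates into the wrong phase factor. A useful internal consistency check is that composing the coordinate-change identity twice reproduces the identity for the product of the two matrices, and that all three maps, viewed as acting on $\cT_g u$, are unitary on $L^2(\rr {2d})$, matching the metaplectic picture.
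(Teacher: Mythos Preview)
Your approach is correct and is essentially the same as the paper's: the paper declares the first two identities ``immediate consequences of Definition~\ref{def:FBItransform}'' and writes out only the shearing computation for $u\in\cS(\rr d)$ via the same expansion of $\la y,By\ra$, then extends by density. One small slip to fix when you carry this out: with the paper's conventions $T_{x_0}M_{\xi_0}u(y)=e^{i\la y-x_0,\xi_0\ra}u(y-x_0)$, not $e^{i\la y,\xi_0\ra}u(y-x_0)$, so your first displayed integrand is off by the constant factor $e^{-i\la x_0,\xi_0\ra}$---exactly the kind of bookkeeping you flagged as the main hazard.
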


\begin{proof}
The first and the fourth entry of Table \ref{tab:meta} are immediate consequences of Definition \ref{def:FBItransform}.
For the third identity, assume first $u\in\cS(\rr d)$. Then
\begin{align*}
\cT_g v (x,\xi) & = (2\pi)^{-d/2} \int_{\rr d} \overline{g(y-x)} \, e^{\frac{i}{2} \la y, B y \ra} u(y) e^{i \la x-y,\xi \ra}\ \dd y \\
&=(2\pi)^{-d/2} e^{\frac{i}{2} \la x,B x \ra} \int_{\rr d} \overline{g(y-x) \, e^{-\frac{i}{2} \la y-x, B (y-x) \ra }} u(y) e^{i \la x-y, \xi-Bx \ra }\ \dd y \\
&=e^{\frac{i}{2} \la x, B x \ra } \cT_{g_B} u(x,\xi-Bx).
\end{align*}
The formula extends to $u \in \cS'(\rr d)$.
\end{proof}

Finally we prove the claim for ``Rotation $\pi/2$'' in Table \ref{tab:meta}. 
For later use, we prefer to show a more general result for a possibly partial Fourier transform.

\begin{lem}
\label{lem:Fourier}
If $u \in \cS'(\rr d)$, $0 \leqs n \leqs d$ and $x=(x_1,x_2) \in \rr d$, $x_1 \in \rr n$, $x_2 \in \rr {d-n}$, 
then 
\begin{equation*}
\cT_g  u(x_1,x_2,\xi_1,\xi_2) = e^{i \la x_2,\xi_2 \ra} \cT_{\cF_2 g} \cF_2  u (x_1,\xi_2,\xi_1, -x_2).
\end{equation*}
\end{lem}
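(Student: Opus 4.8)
The plan is to verify the identity first for $u \in \cS(\rr d)$ by an explicit computation, splitting the defining integral of $\cT_g$ into the $\rr n$ and $\rr{d-n}$ blocks of variables and applying Plancherel's theorem in the latter, and then to extend to $u \in \cS'(\rr d)$ by density. Since $\cF_2$ is a topological isomorphism of $\cS(\rr d)$, the window $\cF_2 g$ again lies in $\cS(\rr d) \setminus 0$, so the right-hand side is well defined.

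For $u \in \cS(\rr d)$ I would start from the integral formula recorded in the Remark following Definition \ref{def:FBItransform},
\begin{equation*}
\cT_g u(x,\xi) = (2\pi)^{-d/2} \int_{\rr d} u(y)\, e^{-i\la y-x,\xi\ra}\,\overline{g(y-x)}\ \dd y,
\end{equation*}
write $y=(y_1,y_2)$, $x=(x_1,x_2)$, $\xi=(\xi_1,\xi_2)$ accordingly with $y_1,x_1,\xi_1\in\rr n$, and freeze $y_1$. The inner integral over $y_2 \in \rr{d-n}$ is then the $L^2(\rr{d-n})$ pairing of $u(y_1,\cdot)$ with $h(y_2) := e^{i\la y_2-x_2,\xi_2\ra}\, g(y_1-x_1, y_2-x_2)$. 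By Plancherel's formula in $\rr{d-n}$ this equals the pairing of $\cF_2 u(y_1,\cdot)$ with $\cF h$, and a translation of the integration variable gives $\cF h(\eta_2) = e^{-i\la x_2,\eta_2\ra}\, \cF_2 g(y_1-x_1,\eta_2-\xi_2)$. Substituting this back and comparing the resulting double integral over $(y_1,\eta_2)$ with the integral formula for $\cT_{\cF_2 g}$ evaluated at $(x_1,\xi_2,\xi_1,-x_2)$, the claimed identity drops out, the global phase $e^{i\la x_2,\xi_2\ra}$ being produced in the course of this bookkeeping. Finally the formula extends from $\cS(\rr d)$ to $\cS'(\rr d)$ because, for fixed $(x,\xi)$, both sides depend sequentially continuously on $u$ in the weak-$*$ topology of $\cS'$ — the left-hand side directly from \eqref{eq:cTdef}, the right-hand side since $\cF_2$ is continuous on $\cS'$ — and $\cS(\rr d)$ is weak-$*$ dense in $\cS'(\rr d)$. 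Alternatively one argues directly by duality, pulling the scalar $e^{i\la x_2,\xi_2\ra}$ into the antilinear slot and using $(\cF_2 u, v) = (u, \cF_2^{-1}v)$ to reduce the statement to the elementary intertwining relation $\cF_2\big(T_{(x_1,x_2)}M_{(\xi_1,\xi_2)}g\big) = e^{-i\la x_2,\xi_2\ra}\, T_{(x_1,\xi_2)}M_{(\xi_1,-x_2)}\cF_2 g$ for the window, which follows from $\cF T_{x_2} = M_{-x_2}\cF$ and $\cF M_{\xi_2} = T_{\xi_2}\cF$ in the second block.

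The calculation itself is routine; the only delicate point — and the usual source of sign slips — is the bookkeeping of the oscillatory factors $e^{\pm i\la\cdot,\cdot\ra}$, in particular checking that the phases generated when the partial Fourier transform is commuted past the translation by $x_2$ and the modulation by $\xi_2$ combine to produce precisely the stated global factor $e^{i\la x_2,\xi_2\ra}$, consistently with the reshuffling of $\xi_2$ into a space argument and $-x_2$ into a frequency argument on the right-hand side.
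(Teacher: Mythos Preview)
Your proposal is correct, and in fact your ``alternative'' duality argument at the end \emph{is} the paper's proof: the paper simply writes the right-hand side via Definition~\ref{def:FBItransform}, uses $(\cF_2 u, v) = (u, \cF_2^{-1} v)$, and reduces to the window identity $\cF_2^{-1} T_{(x_1,\xi_2)} M_{(\xi_1,-x_2)} \cF_2 g = e^{i\la x_2,\xi_2\ra} T_{(x_1,x_2)} M_{(\xi_1,\xi_2)} g$, which is equivalent to the intertwining relation you state. Your primary route --- integral formula, Plancherel in the $\rr{d-n}$ block, then weak-$*$ density --- is also fine but is the longer way around: the duality argument works directly for all $u \in \cS'(\rr d)$ in three lines, avoiding the separate Schwartz-class computation and the density step.
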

\begin{proof}
\begin{multline*}
e^{i \la x_2,\xi_2 \ra} \cT_{\cF_2 g} \cF_2  u (x_1,\xi_2,\xi_1, -x_2) \\
= e^{i \la x_2,\xi_2 \ra} (2\pi)^{-d/2} ( \cF_2  u, T_{x_1,\xi_2} M_{\xi_1,-x_2} \cF_2 g) \\
= e^{i \la x_2,\xi_2 \ra} (2\pi)^{-d/2} ( u, \cF_2^{-1} T_{x_1,\xi_2} M_{\xi_1,-x_2} \cF_2 g) \\ = (2\pi)^{-d/2} ( u, T_{x_1,x_2} M_{\xi_1,\xi_2} g). 
\end{multline*}
\end{proof}

\begin{rem}
The extreme cases $n=0$ and $n=d$ represent $\cF_2 = \cF$ (the full Fourier transform) and the trivial case $\cF_2 = I$ (the identity), respectively. 
\end{rem}

We observe that up to certain phase factors, changes of windows and sign conventions, the ``Action on $\cT_g u(x,\xi)$'' reflects the inversion of ``Action on $T^* \rr d$'' in Table \ref{tab:meta}.

\section{Characterization of Shubin symbols}
\label{sec:shubchar}
We first recall the definition of Shubin's class of global symbols for pseudodifferential operators \cite{Shubin1}.
\begin{defn}
We say that $a \in C^\infty(\rr d)$ is a Shubin symbol of order $m \in \ro$ and parameter $0 \leqs \rho \leqs 1$, denoted $a \in \Gamma_{\rho}^m(\rr d)$, if there exist $C_\alpha>0$ such that
\begin{equation}
\label{eq:shubinineq}
|\partial^\alpha a(z)| \leqs C_\alpha \eabs{z}^{m-\rho|\alpha|}, \qquad \alpha \in \nn d, \quad z \in \rr d.
\end{equation}
$\Gamma_\rho^m(\rr d)$ is a  Fr\'echet space equipped with the seminorms $\rho^m_M(a)$ of best possible constants $C_\alpha$ in \eqref{eq:shubinineq} maximized over $|\alpha| \leqs M$, $M \in \no$.
We denote $\Gamma^m(\rr d) = \Gamma_1^m(\rr d)$.
\end{defn}
Obviously $\Gamma_\rho^m(\rr d) \subseteq \cS'(\rr d)$ so Proposition \ref{prop:Swdchar} already gives some information on $\cT_g a$ when $a\in \Gamma_\rho^m(\rr d)$. 
The following result, which is a chief tool in the paper, gives characterizations of $\cT_g a$ for $a \in \Gamma_\rho^m(\rr d)$. 
\begin{prop}
\label{prop:symbchar}
Suppose $a\in \cS'(\rr d)$. 
Then $a\in \Gamma_\rho^m(\rr d)$ if and only if for one (and equivalently all) $g\in\cS(\rr d)\setminus 0$
\begin{equation}
\label{eq:Gineq1}
|\partial_x^\alpha \partial_\xi^\beta \cT_g a(x,\xi)|\lesssim \eabs{x}^{m- \rho |\alpha|}\eabs{\xi}^{-N}, \quad N \geqs 0, \quad \alpha, \beta \in \nn d, \quad x, \xi \in \rr d, 
\end{equation}
or equivalently
\begin{equation}
\label{eq:Gineq}
|\partial_x^\alpha \cT_g a(x,\xi)|\lesssim \eabs{x}^{m- \rho |\alpha|}\eabs{\xi}^{-N}, \quad N \geqs 0, \quad \alpha \in \nn d, \quad x, \xi \in \rr d. 
\end{equation}
\end{prop}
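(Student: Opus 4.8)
The plan is to prove the chain of equivalences $a\in\Gamma_\rho^m \Leftrightarrow \eqref{eq:Gineq1} \Leftrightarrow \eqref{eq:Gineq}$, exploiting the window-independence built into Lemma \ref{lem:windchange} so that it suffices to establish each implication for a single conveniently chosen window, e.g.\ the Gaussian $\psi_0$. The implication $\eqref{eq:Gineq1}\Rightarrow\eqref{eq:Gineq}$ is trivial (take $\beta=0$), so the real content is $a\in\Gamma_\rho^m \Rightarrow \eqref{eq:Gineq1}$ and $\eqref{eq:Gineq}\Rightarrow a\in\Gamma_\rho^m$.

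For the forward direction, assume $a\in\Gamma_\rho^m$. Using \eqref{eq:diffident} we have $\partial_x^\alpha\cT_g a=\cT_g(\partial^\alpha a)$, and by \eqref{eq:diffidentstar} the $\partial_\xi^\beta$ derivatives only change the window to $g_\beta$; since the estimates are to hold for all windows it is enough to bound $\cT_g b(x,\xi)$ for $b=\partial^\alpha a\in\Gamma_\rho^{m-\rho|\alpha|}$ by $\eabs{x}^{m-\rho|\alpha|}\eabs{\xi}^{-N}$. Writing $\cT_g b(x,\xi)=e^{i\la x,\xi\ra}\cF(b\,T_x\overline g)(\xi)$ as in the Remark, the $\eabs{\xi}^{-N}$ decay comes from the standard trick of integrating by parts in $y$: each $\partial_y$ either hits $b$, producing a factor with the same $\eabs{y}^{m-\rho|\alpha|}$-type bound up to an $\eabs{x}$-loss absorbed by Peetre's inequality \eqref{eq:Peetre} against the Schwartz decay of $T_x\overline g$, or hits $\overline g(y-x)$, which is harmless. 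The $\eabs{x}^{m-\rho|\alpha|}$ behavior in $x$ is obtained by applying Peetre's inequality \eqref{eq:Peetre} to pull $\eabs{y}^{m-\rho|\alpha|}\lesssim\eabs{x}^{m-\rho|\alpha|}\eabs{y-x}^{|m|+|\alpha|}$ out of the integral, the remaining $\eabs{y-x}^{|m|+|\alpha|}$ being swallowed by the rapid decay of $g$ and its derivatives.

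For the converse, assume \eqref{eq:Gineq} for some window $g$; by Lemma \ref{lem:windchange} it then holds for all windows, in particular for $g=\psi_0$. The idea is to recover $a$ from its transform by the inversion formula $\|g\|_{L^2}^{-2}\cT_g^*\cT_g a=a$ and differentiate under the integral: $\partial^\alpha a(y)=\|g\|_{L^2}^{-2}\int_{\rr{2d}}\cT_g a(x,\xi)\,\partial^\alpha(T_xM_\xi g)(y)\,\dd x\,\dd\xi$. Here $\partial^\alpha(T_xM_\xi g)(y)$ expands by Leibniz into terms carrying factors $\xi^{\gamma}$ with $|\gamma|\leqs|\alpha|$ times translated-modulated derivatives of $g$; the $\eabs{\xi}^{-N}$ decay of $\cT_g a$ with $N$ chosen large (e.g.\ $N=|\alpha|+d+1$) makes the $\xi$-integral converge and absorbs these polynomial factors. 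The $x$-integral is then controlled by $\int\eabs{x}^{m-\rho|\alpha|}|T_xM_\xi g^{(\cdot)}(y)|\,\dd x$, and a final application of Peetre's inequality $\eabs{x}^{m-\rho|\alpha|}\lesssim\eabs{y}^{m-\rho|\alpha|}\eabs{y-x}^{|m-\rho|\alpha||}$ against the Schwartz decay of the window yields $|\partial^\alpha a(y)|\lesssim\eabs{y}^{m-\rho|\alpha|}$, i.e.\ $a\in\Gamma_\rho^m$.

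The main obstacle is the bookkeeping in the forward direction: one must verify that \emph{every} combination of $x$- and $\xi$-derivatives — including the window change $g\mapsto g_\beta$ from \eqref{eq:diffidentstar} — produces exactly the gain $\eabs{x}^{-\rho|\alpha|}$ (and no spurious growth in $\xi$ beyond the arbitrary $\eabs{\xi}^{-N}$), which is precisely the isotropy feature the paper highlights; the $\rho|\alpha|$-gain must be tracked through the iterated integration by parts and the repeated uses of Peetre's inequality, making sure the $\eabs{y-x}$-powers generated stay bounded independently of $\alpha$ so they are uniformly absorbed by the window's Schwartz decay. Everything else is routine estimation with the reproducing/inversion formulas.
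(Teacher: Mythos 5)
Your forward direction ($a\in\Gamma_\rho^m\Rightarrow\eqref{eq:Gineq1}$) is essentially the paper's argument: use \eqref{eq:diffident} and \eqref{eq:diffidentstar} to reduce to estimating $\cT_{g_\beta}(\partial^\alpha a)$, obtain $\xi$-decay by integrating $\xi^\gamma e^{-i\la y,\xi\ra}=(i\partial_y)^\gamma e^{-i\la y,\xi\ra}$ by parts, Leibniz-expand, and apply Peetre's inequality to extract $\eabs{x}^{m-\rho|\alpha|}$ from the integral. Fine.

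The converse direction, however, has a genuine gap. You write $\partial^\alpha a(y)=\|g\|_{L^2}^{-2}\int\cT_g a(x,\xi)\,\partial_y^\alpha(T_xM_\xi g)(y)\,\dd x\,\dd\xi$, expand $\partial_y^\alpha\bigl(e^{i\la\xi,y-x\ra}g(y-x)\bigr)$ by Leibniz into $\xi^\gamma$-factors times derivatives of $g$, absorb the $\xi^\gamma$ via the $\eabs{\xi}^{-N}$ decay, and then assert the $x$-integral is controlled by $\int\eabs{x}^{m-\rho|\alpha|}|T_xM_\xi g^{(\cdot)}(y)|\,\dd x$. But where does $\eabs{x}^{m-\rho|\alpha|}$ come from? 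The only estimate available on $\cT_g a$ itself (i.e.\ on the $\alpha=0$ level from hypothesis \eqref{eq:Gineq}) is $|\cT_g a(x,\xi)|\lesssim\eabs{x}^m\eabs{\xi}^{-N}$; the $\rho|\alpha|$-gain is attached to $\partial_x^\alpha\cT_g a$, which never appears in your manipulation. Your argument as written therefore yields only $|\partial^\alpha a(y)|\lesssim\eabs{y}^m$ for all $\alpha$, i.e.\ $a\in\Gamma_0^m$ rather than $a\in\Gamma_\rho^m$.

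The missing idea — and the crux of the paper's converse — is the observation that $\partial_y^\alpha\bigl(e^{i\la\xi,y-x\ra}g(y-x)\bigr)=(-\partial_x)^\alpha\bigl(e^{i\la\xi,y-x\ra}g(y-x)\bigr)$, which permits an integration by parts in $x$ that moves the $\alpha$-th derivative onto $\cT_g a$, producing the factor $\partial_x^\alpha\cT_g a(x,\xi)$ inside the integral. Only then does hypothesis \eqref{eq:Gineq} supply $\eabs{x}^{m-\rho|\alpha|}$, and the Peetre step you describe finishes the proof. Without this transfer of derivatives from $y$ to $x$, the isotropic gain in $\rho$ is simply unreachable. (Your appeal to Lemma~\ref{lem:windchange} for window-independence is also unnecessary here: once the chain $a\in\Gamma_\rho^m\Rightarrow\eqref{eq:Gineq1}$ for all $g$, and $\eqref{eq:Gineq}$ for some $g\Rightarrow a\in\Gamma_\rho^m$, is closed, window-independence is automatic.)
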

\begin{proof}
Let $a\in \Gamma_\rho^m(\rr d)$, let $g\in\cS(\rr d)\setminus 0$ and let $\alpha,\beta, \gamma \in\nn d$ be arbitrary. We seek to show
$$
|\xi^\gamma \partial_x^\alpha \partial_\xi^\beta \cT_g a(x,\xi)| \lesssim \eabs{x}^{m-\rho|\alpha|}. 
$$
To that end we use \eqref{eq:diffident} and \eqref{eq:diffidentstar}, integrate by parts and estimate using \eqref{eq:Peetre} and the fact that $g \in \cS$
\begin{align*}
|\xi^\gamma \partial_x^\alpha \partial_\xi^\beta \cT_g a(x,\xi)|
& = \left| \xi^\gamma \cT_{g_\beta}(\partial^\alpha a)(x,\xi) \right|  \\
& = (2 \pi)^{-d/2} \left| \int_{\rr d} \left((i\partial_{y})^\gamma e^{- i \la \xi,y \ra }\right) \overline{g_\beta(y)} \, \partial^\alpha a(x+y)\ \dd y \right| \\
&\lesssim \int_{\rr{d}} \left| \partial_{y}^\gamma \left[\overline{g_\beta(y)} \, \partial^\alpha a(x+y)\right] \right|\ \dd y  \\
& = \int_{\rr{d}} \left| \sum_{\kappa \leqs \gamma} \binom{\gamma}{\kappa} \partial^{\gamma-\kappa}\overline{g_\beta(y)} \, \partial^{\alpha+\kappa} a(x+y)\right|\ \dd y  \\
& \lesssim \sum_{\kappa \leqs \gamma} \binom{\gamma}{\kappa}  \int_{\rr d} \left| \partial^{\gamma-\kappa} g_\beta(y) \right| \, \eabs{x+y}^{m-\rho|\alpha+\kappa|} \dd y  \\
& \lesssim \eabs{x}^{m-\rho|\alpha|} \sum_{\kappa \leqs \gamma} \binom{\gamma}{\kappa}  \int_{\rr d} \left| \partial^{\gamma-\kappa} g_\beta(y) \right| \, \eabs{y}^{|m|+\rho |\alpha+\kappa|} \dd y  \\
&\lesssim \eabs{x}^{m- \rho |\alpha|}. 
\end{align*}
This implies \eqref{eq:Gineq1} and as a special case \eqref{eq:Gineq}. 

Conversely, suppose that \eqref{eq:Gineq} holds for $a \in \cS'(\rr d)$ for some $g \in \cS(\rr d) \setminus 0$, which is a weaker assumption than \eqref{eq:Gineq1}. 
We obtain from \eqref{eq:reproducing} that $a$ is given by
\begin{align*}
a(y)
& = \|g\|_{L^2}^{-2} \, \cT_g^* \cT_g a(y) \\
& = \|g\|_{L^2}^{-2} \, (2 \pi)^{-d/2} \int_{\rr {2d}} \cT_g a(x,\xi) \, e^{i \la \xi,y-x \ra} \, g(y-x) \, \dd x \, \dd \xi
\end{align*}
which is an absolutely convergent integral due to \eqref{eq:Gineq} and the fact that $g\in\cS(\rr d)$. 
We may differentiate under the integral, so integration by parts, \eqref{eq:Gineq} and \eqref{eq:Peetre} give for any $\alpha \in \nn d$ and any $y \in \rr d$
\begin{align*}
\left|\partial^\alpha a(y)\right|
& = \|g\|_{L^2}^{-2} \, (2 \pi)^{-d/2}  \left| \int_{\rr {2d}} \cT_ga(x,\xi) \, \partial_y^\alpha \left( e^{i \la \xi,y-x \ra} \, g(y-x) \right) \, \dd x \, \dd \xi \right| \\
& = \|g\|_{L^2}^{-2} \, (2 \pi)^{-d/2}  \left| \int_{\rr {2d}} \cT_ga(x,\xi) \, (-\partial_x)^\alpha \left( e^{i \la \xi,y-x \ra} \, g(y-x) \right) \, \dd x \, \dd \xi \right| \\
& = \|g\|_{L^2}^{-2} \, (2 \pi)^{-d/2}  \left| \int_{\rr {2d}} \partial_x^\alpha \cT_ga(x,\xi) \, e^{i \la \xi,y-x \ra} \, g(y-x) \, \dd x \, \dd \xi \right| \\
& = \|g\|_{L^2}^{-2} \, (2 \pi)^{-d/2}  \left| \int_{\rr {2d}} \partial_x^\alpha \cT_ga(y-x,\xi) \, e^{i \la \xi,x \ra} \, g(x) \, \dd x \, \dd \xi \right| \\
& \lesssim \int_{\rr {2d}}  \eabs{y-x}^{m- \rho |\alpha|} \, \eabs{\xi}^{-d-1} \, |g(x)| \, \dd x \, \dd \xi \\
& \lesssim \eabs{y}^{m- \rho |\alpha|} \int_{\rr {2d}}  \eabs{\xi}^{-d-1} \, \eabs{x}^{|m|+ \rho |\alpha|} \, |g(x)| \, \dd x \, \dd \xi \\
& \lesssim \eabs{y}^{m- \rho |\alpha|}.  
\end{align*}
Thus $a \in \Gamma_\rho^m(\rr d)$. 
\end{proof}
\begin{rem}
It follows from the proof that the best possible constants in \eqref{eq:Gineq} maximized over $|\alpha| \leqs M$ yield seminorms $\rho^m_{g,M,N}$,  $M,N \in \no$, on $\Gamma_\rho^m(\rr{d})$ equivalent to $\rho^m_{M}$, $M \in \no$.
\end{rem}

We will next reformulate the characterization of $\Gamma^m(\rr d)$ in a more geometric form.

\begin{prop}
\label{prop:symbchargeom}
Let $a\in \cS'(\rr d)$. 
Then $a \in \Gamma^m(\rr d)$ if and only if for one (and equivalently all) $g \in \cS(\rr d) \setminus 0$ and all $N,k \in \no$
\begin{equation}
\label{eq:Gineqgeom}
|L_1 \cdots L_k \cT_g a(x,\xi)|\lesssim \eabs{x}^m\eabs{\xi}^{-N}, \quad (x,\xi) \in T^* \rr d, 
\end{equation}
for any vector fields of the form $L_i = x_j\partial_{x_n}$ where $1 \leqs j,n \leqs d$, $i=1,\dots,k$. 
\end{prop}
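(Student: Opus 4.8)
The plan is to reduce this to Proposition~\ref{prop:symbchar} with $\rho=1$. The bridge is an elementary fact about the associative algebra generated by the vector fields $x_j\partial_{x_n}$, $1\leqs j,n\leqs d$: using $[\partial_{x_n},x_j]=\delta_{jn}$ and induction on the number of factors, a product $L_1\cdots L_k$ of fields of this form is a finite linear combination of monomial differential operators $x^\mu\partial_x^\alpha$ with $|\mu|=|\alpha|\leqs k$ (each $L_i$ preserves the grading assigning degree $1$ to each $x_j$ and degree $-1$ to each $\partial_{x_n}$, and contracting a $\partial_{x_n}$ against an $x_j$ removes one factor of each type); conversely, every monomial $x^\mu\partial_x^\alpha$ with $|\mu|=|\alpha|=j$ lies in the linear span of the products $L_1\cdots L_i$, $i\leqs j$. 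These are routine commutator computations.

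Granting this, the implication ``$a\in\Gamma^m\Rightarrow$ \eqref{eq:Gineqgeom}'' is immediate: Proposition~\ref{prop:symbchar} (with $\rho=1$) gives $|\partial_x^\alpha\cT_g a(x,\xi)|\lesssim\eabs{x}^{m-|\alpha|}\eabs{\xi}^{-N}$ for every window $g$, every $\alpha\in\nn d$ and every $N$; expanding $L_1\cdots L_k\cT_g a$ into a sum of terms $x^\mu\partial_x^\alpha\cT_g a$ with $|\mu|=|\alpha|$ and estimating each by $|x|^{|\mu|}\eabs{x}^{m-|\alpha|}\eabs{\xi}^{-N}\leqs\eabs{x}^m\eabs{\xi}^{-N}$ yields \eqref{eq:Gineqgeom}, simultaneously for all windows.

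For the converse I fix the single window $g$ for which \eqref{eq:Gineqgeom} is assumed and establish \eqref{eq:Gineq} with $\rho=1$, after which $a\in\Gamma^m$ by Proposition~\ref{prop:symbchar}. By the algebraic fact, \eqref{eq:Gineqgeom} yields $|x^\mu\partial_x^\alpha\cT_g a(x,\xi)|\lesssim\eabs{x}^m\eabs{\xi}^{-N}$ whenever $|\mu|=|\alpha|$, and in particular (the case $k=0$) $|\cT_g a(x,\xi)|\lesssim\eabs{x}^m\eabs{\xi}^{-N}$. I then combine two bounds on $\partial_x^\alpha\cT_g a$. For $|x|\geqs1$, writing $|x|^{2|\alpha|}\partial_x^\alpha\cT_g a=\sum_{|\mu|=|\alpha|}\binom{|\alpha|}{\mu}x^\mu(x^\mu\partial_x^\alpha\cT_g a)$ and dividing by $|x|^{2|\alpha|}$ gives $|\partial_x^\alpha\cT_g a(x,\xi)|\lesssim|x|^{-|\alpha|}\eabs{x}^m\eabs{\xi}^{-N}\lesssim\eabs{x}^{m-|\alpha|}\eabs{\xi}^{-N}$. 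For all $x$ — in particular for $|x|\leqs1$, where $\eabs{x}^m\asymp\eabs{x}^{m-|\alpha|}$ — I instead invoke the reproducing identity \eqref{eq:reproducing}, $\cT_g a=\|g\|_{L^2}^{-2}\cT_g\cT_g^*\cT_g a$, which by the computation in the proof of Lemma~\ref{lem:windchange} takes the form $\|g\|_{L^2}^{2}\cT_g a(x,\xi)=(2\pi)^{-d/2}\int_{\rr{2d}}e^{i\la x-y,\eta\ra}\,\cT_g g(x-y,\xi-\eta)\,\cT_g a(y,\eta)\,\dd y\,\dd\eta$; absolute convergence, and differentiation under the integral sign, are legitimized by the $k=0$ bound. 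Differentiating throws all $x$-derivatives onto the kernel, whose $x$-derivatives of any order are bounded by a polynomial in $\eta$ times a Schwartz function of $(x-y,\xi-\eta)$, so only the $k=0$ bound on $\cT_g a$ enters; Peetre's inequality \eqref{eq:Peetre} together with the elementary estimate $\int_{\rr d}\eabs{\eta}^{-a}\eabs{\xi-\eta}^{-b}\,\dd\eta\lesssim\eabs{\xi}^{-N}$, valid once $a,b$ are large enough relative to $N$ and $d$, then gives $|\partial_x^\alpha\cT_g a(x,\xi)|\lesssim\eabs{x}^m\eabs{\xi}^{-N}$ for every $N$. The two bounds together are \eqref{eq:Gineq}.

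The main obstacle is exactly the range $|x|\leqs1$ in the converse. There the fields $x_j\partial_{x_n}$ vanish at the origin, the division step degenerates, and the $k\geqs1$ estimates in \eqref{eq:Gineqgeom} carry no information on $\partial_x^\alpha\cT_g a$ near $x=0$ beyond the $k=0$ one: for a rescaled bump $\phi(\,\cdot\,/\delta)$ each $x^\mu\partial_x^\alpha\phi(\,\cdot\,/\delta)$ with $|\mu|=|\alpha|\geqs1$ tends to $0$ uniformly as $\delta\to0$ while its derivatives at the origin blow up, so no soft argument from the $L$-estimates alone can succeed. The reproducing formula is what bridges this gap, since it trades $x$-derivatives of $\cT_g a$ for $x$-derivatives of a fixed Schwartz kernel at the price of only the coarsest ($\alpha$-independent) size information on $\cT_g a$, which \eqref{eq:Gineqgeom} supplies uniformly in $x$.
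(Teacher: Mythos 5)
Your proof is correct and follows the paper's overall strategy: both expand $L_1 \cdots L_k$ into monomials $x^\mu \partial_x^\alpha$ with $|\mu| = |\alpha| \leqs k$, reduce both implications to Proposition~\ref{prop:symbchar} with $\rho = 1$, and in the converse direction split into $|x| > 1$ (division by $|x|$-powers) and $|x| \leqs 1$ (where those powers degenerate and only the $k=0$ bound $|\cT_g a| \lesssim \eabs{x}^m\eabs{\xi}^{-N}$ carries information). You have also correctly identified why the $|x|\leqs 1$ region is the crux and why a soft argument from the $L$-estimates alone must fail there.

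The one place you diverge from the paper is in how the $|x| \leqs 1$ bound is extracted. You differentiate the reproducing integral $\cT_g a = \|g\|^{-2}_{L^2}\cT_g\cT_g^*\cT_g a$ directly, throwing all $x$-derivatives onto the explicit kernel $e^{i\la x-y,\eta\ra}\cT_g g(x-y,\xi-\eta)$, which produces powers of $\eta$ that the Schwartz decay of $\cT_g g$ and the $k=0$ bound on $\cT_g a$ then absorb via Peetre's inequality. The paper instead uses the commutation identity
\begin{equation*}
\partial_x^\beta \cT_g a(x,\xi) = \sum_{\alpha\leqs\beta} c_{\alpha\beta}\,(i\xi)^\alpha\, \cT_{\partial^{\beta-\alpha}g} a(x,\xi),
\end{equation*}
which trades the $x$-derivatives of $\cT_g a$ for $\xi$-monomials and window derivatives, and then controls each $\cT_{\partial^\alpha g}a$ by Lemma~\ref{lem:windchange} (with no derivatives remaining): $|\cT_{\partial^\alpha g}a|\lesssim |\cT_g a|*|\cT_{\partial^\alpha g}g|$. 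The two routes are equivalent in substance — Lemma~\ref{lem:windchange} is itself produced by differentiating the reproducing integral — but your version is somewhat more self-contained and does everything in one computation, whereas the paper's is more modular, reusing the change-of-window lemma and the differential identities \eqref{eq:diffident}--\eqref{eq:diffidentstar} and thereby avoiding a fresh convergence discussion.
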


\begin{proof}
We may write 
\begin{equation*}
L_1 \cdots L_k = \sum_{|\alpha|=|\beta| \leqs k} c_{\alpha\beta} x^\alpha \partial^\beta, \quad c_{\alpha\beta} \in \ro, 
\end{equation*}
and all differential operators of this form are linear combinations of products of the vector fields $L_i$. 

If $a \in \Gamma^m(\rr d)$ then the estimates \eqref{eq:Gineq} hold for any $g \in \cS(\rr d) \setminus 0$. 
For $N,k \in \no$ we have
\begin{align*}
|L_1 \cdots L_k \cT_g a(x,\xi)| & \lesssim \sum_{|\alpha|=|\beta| \leqs k} \eabs{x}^{|\alpha|} |\partial_x^\beta \cT_g a(x,\xi)| \\
&\lesssim \eabs{x}^m \eabs{\xi}^{-N}
\end{align*}
which confirms \eqref{eq:Gineqgeom}. 

Suppose on the other hand that the estimates \eqref{eq:Gineqgeom} hold for some $g \in \cS(\rr d) \setminus 0$ and $N,k \in \no$. 
Then for any $\alpha,\beta \in \nn d$ such that $|\alpha|=|\beta|$ and $N \in \no$
\begin{equation*}
|x^\alpha \partial_x^\beta \cT_g a(x,\xi)| \lesssim \eabs{x}^m \eabs{\xi}^{-N}. 
\end{equation*}
This gives using $|x|^{|\beta|}\leqs d^{|\beta|/2} \max_{|\alpha|=|\beta|} |x^\alpha|$ 
\begin{equation*}
|\partial_x^\beta \cT_g a(x,\xi)| \lesssim \eabs{x}^{m-|\beta|} \eabs{\xi}^{-N}, \quad |x| > 1, \quad \xi \in \rr d. 
\end{equation*}
In order to prove \eqref{eq:Gineq}, which is equivalent to $a \in \Gamma^m(\rr d)$,
it thus remains to show that $\eabs{\xi}^N | \partial_{x}^\beta \cT_g a (x,\xi)|$ remains uniformly bounded for $|x| \leqs 1$ and $ \xi \in \rr d$, for any $N \in \nn{}$. For that we estimate
\begin{align*}
\eabs{\xi}^N | \partial_{x}^\beta \cT_g a(x,\xi)|
& = \eabs{\xi}^N \left| \sum_{\alpha \leqs \beta} c_{\alpha\beta} (i\xi)^{\alpha} \cT_{\partial^{\beta-\alpha} g} a (x,\xi) \right|\\
& \lesssim \eabs{\xi}^{|\beta|+N} \sum_{\alpha \leqs \beta} \left| \cT_{\partial^{\alpha} g} a (x,\xi) \right|.
\end{align*}
By Lemma \ref{lem:windchange} we have
\begin{equation*}
\left| \cT_{\partial^{\alpha} g} a (x,\xi) \right|
\lesssim \big( \underbrace{|\cT_g a|}_{\lesssim \eabs{x}^{m}\eabs{\xi}^{-M}} * \ |\underbrace{\cT_{\partial^{\alpha} g} g}_{\in \cS}| \big) (x,\xi)\lesssim \eabs{x}^{m}\eabs{\xi}^{-M}
\end{equation*}
where the last inequality follows by Peetre's inequality \eqref{eq:Peetre} applied to the convolution. Choosing $M \geqs |\beta|+N$, we obtain
\begin{equation*}
\eabs{\xi}^N \left|\partial_{x}^\beta \cT_g a(x,\xi)\right|\lesssim 1\qquad \text{ for } |x| \leqs 1, \quad \xi \in \rr d,
\end{equation*}
which proves the claim.
\end{proof}

\begin{rem}
The vector fields $x_j\partial_{x_n}$ play a role in spanning all vector fields tangential to $\{0 \} \times \rr d \subseteq T^* \rr d$, see \cite[Lemma 18.2.5]{Hormander0}. 
\end{rem}

\subsection{Classical symbols}
An important subclass of the Shubin symbols are those that admit a polyhomogeneous expansion, so called classical symbols. 
A symbol $a\in \Gamma^m(\rr {d})$ is called classical, denoted $a \in \Gamma^m_\cl(\rr {d})$, if there are functions $a_{m-j}$, homogeneous of degree $m-j$ and smooth outside $z=0$, $j=0,1,\dots$, such that for any zero-excision function\footnote{This means a function of the form $1-\phi$ where $\phi\in C_c^\infty(\rr {d})$ and $\phi\equiv 1$ near zero.} $\chi$ we have for any $N \in \no$ 
\begin{equation*}
a-\chi\sum_{j=0}^{N-1} a_{m-j}\in \Gamma^{m-N}(\rr {d}).
\end{equation*}
By Euler's relation for homogeneous functions, $u$ is homogeneous of degree $m$ if and only if $R u = m u,$ where $R$ is the radial vector field $Ra(x)=\la x, \nabla a(x) \ra$. Adapting the method of Joshi \cite{Joshi} gives the following characterization of classical Shubin symbols.
\begin{prop}
\label{lem:class}
A symbol $a\in \Gamma^m(\rr {d})$ is classical if and only if for all $N \in \no_0$
\begin{equation*}
(R-m+N-1) (R-m+N-2) \cdots (R-m) \, a  \in \Gamma^{m-N}(\rr {d}).
\end{equation*}
\end{prop}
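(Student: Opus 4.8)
The plan is to follow the strategy of Joshi for classical H\"ormander symbols, adapted to the Shubin (isotropic) setting via the radial vector field $R$. The starting point is Euler's relation: the operator $R-m$ annihilates functions homogeneous of degree $m$ away from the origin, so $(R-m+N-1)\cdots(R-m)$ should kill (modulo a lower order term supported near the origin) the first $N$ terms $\chi a_{m}, \chi a_{m-1}, \dots, \chi a_{m-N+1}$ of a polyhomogeneous expansion and leave the remainder, which lies in $\Gamma^{m-N}$. One direction is therefore essentially a computation: if $a\in\Gamma^m_\cl(\rr d)$ with expansion $a \sim \chi\sum_{j\geqs 0} a_{m-j}$, write $a = \chi\sum_{j=0}^{N-1} a_{m-j} + r_N$ with $r_N\in\Gamma^{m-N}$, apply $P_N := (R-m+N-1)\cdots(R-m)$ term by term, and observe that $P_N(\chi a_{m-j})$ consists of a term $\chi \cdot(\text{homogeneous of degree } m-j)\cdot\prod_{\ell}(m-j-m+\ell) $ which vanishes because one factor $(R-m+j)$ is zero on $a_{m-j}$ for $0\leqs j\leqs N-1$, plus terms where at least one derivative hits $\chi$, hence is compactly supported and so in $\cS\subseteq\Gamma^{-\infty}$; meanwhile $P_N r_N\in\Gamma^{m-N}$ since $R$ maps $\Gamma^{k}$ to $\Gamma^{k}$ (note $x_j\partial_{x_n}$, and a fortiori $R=\sum x_j\partial_{x_j}$, preserves the Shubin class by \eqref{eq:shubinineq}). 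Thus $P_N a\in\Gamma^{m-N}$.

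For the converse I would argue by induction on $N$, reconstructing the homogeneous components $a_{m-j}$ one at a time. Suppose $P_N a\in\Gamma^{m-N}$ for all $N$; I want to produce $a_m$ homogeneous of degree $m$, smooth off the origin, with $a-\chi a_m\in\Gamma^{m-1}$, and then apply the inductive hypothesis to $a-\chi a_m$ after checking it still satisfies the hypothesis with $m$ replaced by $m-1$. The natural candidate for $a_m$ is obtained by solving the Euler equation $Rb = mb$ with "initial data" $a$ restricted to the unit sphere: explicitly set $a_m(x) = |x|^m a(x/|x|)$ for $x\neq 0$. The content is to show $a - \chi a_m \in \Gamma^{m-1}$. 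Using $(R-m)a\in\Gamma^{m-1}$ (the case $N=1$ of the hypothesis) one integrates the ODE $\frac{d}{dt} (t^{-m} a(tx)) = t^{-m-1}((R-m)a)(tx)$ along rays: for $|x|=1$, $t^{-m}a(tx) - a(x) = \int_1^t s^{-m-1}((R-m)a)(sx)\,ds$, which converges as $t\to\infty$ because the integrand is $O(s^{-2})$ by the bound $|(R-m)a(sx)|\lesssim\eabs{s}^{m-1}$; the limit, as a function of $x/|x|$ times $|x|^m$, defines (after extending homogeneously) the principal symbol, and the difference $a-\chi a_m$ inherits the $\Gamma^{m-1}$ estimates by differentiating this integral representation and using the Shubin bounds on $(R-m)a$ and its derivatives. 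One checks $a_m$ is smooth outside the origin because $a$ is smooth and the ray integral depends smoothly on the (angular) parameters.

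The inductive step requires that the new symbol $\tilde a := a - \chi a_m \in \Gamma^{m-1}$ satisfies $(R-m+N)\cdots(R-m+1)\tilde a\in\Gamma^{m-N}$ for all $N$, so that the inductive hypothesis at order $m-1$ applies and yields $a_{m-1}, a_{m-2},\dots$; combining $\chi a_m$ with these gives the full polyhomogeneous expansion of $a$. To see this, write $P_{N}^{(m)} = (R-m+N-1)\cdots(R-m)$ and note $P_N^{(m)} = (R-m)\, P_{N-1}^{(m-1)}$ up to reordering (the factors commute since $R$ commutes with itself), more precisely $P_N^{(m)} a = P_{N-1}^{(m-1)}\big((R-m)a\big)$; applying this to $\tilde a$ and using $(R-m)(\chi a_m) = (R a_m - m a_m)\chi + a_m R\chi = a_m R\chi\in\cS$ (again derivatives of $\chi$ are compactly supported) reduces the claim to $P_{N-1}^{(m-1)}\big((R-m)a\big)\in\Gamma^{m-N}$, which is the hypothesis. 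The main obstacle, and the step deserving the most care, is the first paragraph of the converse: showing that the ray-integral construction of $a_m$ actually lands in the right homogeneous-plus-$\Gamma^{m-1}$ decomposition, i.e. controlling all $x$- and angular-derivatives of $a - \chi a_m$ uniformly with the correct decay; this is where the isotropic structure and Peetre's inequality \eqref{eq:Peetre} enter, exactly as in the H\"ormander case but with $|\xi|$ replaced by the full isotropic weight $\eabs{x}$ on $\rr d$. Everything else is bookkeeping with commuting first-order operators and the observation that $\chi$-derivatives produce Schwartz (hence $\Gamma^{-\infty}$) errors.
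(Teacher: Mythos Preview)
The paper itself gives no proof of this proposition; it merely states that the result follows by adapting Joshi's method. Your approach is exactly Joshi's method, so the overall strategy matches what the paper intends, and your forward direction is correct.

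In the converse there are two problems. First, the explicit formula $a_m(x)=|x|^m a(x/|x|)$ is wrong: it evaluates $a$ on the unit sphere rather than at infinity, and will not in general give $a-\chi a_m\in\Gamma^{m-1}$. The correct principal part is $a_m(x)=\lim_{t\to\infty}t^{-m}a(tx)$, which you do recover from the ray integral a few lines later, so this is only a slip of exposition. Second, and more seriously, your inductive step does not close. To apply the inductive hypothesis to $\tilde a=a-\chi a_m\in\Gamma^{m-1}$ you must verify the Joshi condition at order $m-1$, namely $P_N^{(m-1)}\tilde a=(R-m+N)\cdots(R-m+1)\tilde a\in\Gamma^{(m-1)-N}=\Gamma^{m-1-N}$; you instead aim for $\Gamma^{m-N}$, which is off by one and already too weak. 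Worse, the identity $P_N^{(m)}=P_{N-1}^{(m-1)}(R-m)$ you invoke lets you compute $P_N^{(m)}\tilde a$, not $P_N^{(m-1)}\tilde a$: what your argument actually establishes is $P_N^{(m)}\tilde a\in\Gamma^{m-N}$, i.e.\ the Joshi condition for $\tilde a$ at the \emph{same} order $m$, so the induction is circular. This is not mere bookkeeping. One genuine repair is to strengthen the ray-integration lemma to the statement that $(R-\mu)b\in\Gamma^{\mu-1-k}$ forces $b-\chi b_\mu\in\Gamma^{\mu-1-k}$ (with full derivative control, for every $k\geqs 0$), and then construct each $a_{m-j}$ directly as the leading homogeneous part of $P_j^{(m)}a\in\Gamma^{m-j}$, up to the combinatorial factor coming from $(R-m+j-1)\cdots(R-m)$ acting on homogeneous functions, rather than as the leading part of the successive remainders.
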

The transformation $a \rightarrow \cT_g a$ does not preserve homogeneity. Nevertheless
\eqref{eq:diffident} and \eqref{eq:diffidentstar} give the relation
\begin{equation*}
\cT_g \left(R a \right) (x,\xi) = \la x + i \nabla_\xi, \nabla_x \ra \cT_g a(x,\xi) =: \wt{R} \cT_g a(x,\xi).
\end{equation*}
\begin{cor}
\label{cor:classsymbchar}
Let $a \in \cS'(\rr {d})$ and $g \in \cS(\rr {d}) \setminus 0$. Then $a \in \Gamma_\cl^m(\rr {d})$ if and only if
\begin{multline}
\label{eq:classtransf}
\left| \partial_x^\alpha \left( (\wt{R}-m+N-1) (\wt{R}-m+N-2) \cdots (\wt{R}-m) \cT_g a(x,\xi) \right) \right| \\
\lesssim \eabs{x}^{m- N-|\alpha|}\eabs{\xi}^{-M}
\end{multline}
for any $M \geqs 0$, $N\in \no_0$, $\alpha \in \nn d$ and $(x,\xi) \in T^* \rr d$.
\end{cor}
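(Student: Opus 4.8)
The plan is to deduce the corollary from the Joshi-type characterization of Proposition~\ref{lem:class} together with the FBI characterization of Proposition~\ref{prop:symbchar} (in the form \eqref{eq:Gineq} with $\rho = 1$), using the intertwining relation $\cT_g(Ra) = \wt{R}\, \cT_g a$ stated just above the corollary as the bridge between the two.

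I would first promote that single identity to a polynomial one: for every one-variable polynomial $P$ and every $a \in \cS'(\rr d)$,
\[
\cT_g\big( P(R)\, a\big) = P(\wt{R})\, \cT_g a .
\]
Since $R = \la x, \nabla \ra$ is a differential operator with polynomial coefficients it maps $\cS'(\rr d)$ into itself, so iterating $\cT_g(Rv) = \wt{R}\, \cT_g v$ on $v = R^{j-1} a \in \cS'(\rr d)$ gives $\cT_g(R^j a) = \wt{R}^{\,j}\, \cT_g a$ for all $j \in \no_0$ (here $\wt{R} = \la x + i\nabla_\xi, \nabla_x \ra$ is applied to the smooth, polynomially bounded function $\cT_g a$, cf.\ \eqref{eq:Swdineq}); the displayed formula then follows by linearity of $\cT_g$.

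Next I would translate the estimates. Put $P_N(t) = (t - m + N - 1)(t - m + N - 2) \cdots (t - m)$, with the convention $P_0 \equiv 1$, and $b_N = P_N(R)\, a \in \cS'(\rr d)$, so that $b_0 = a$. The polynomial identity gives $\cT_g b_N = P_N(\wt{R})\, \cT_g a = (\wt{R} - m + N - 1) \cdots (\wt{R} - m)\, \cT_g a$, so the quantity estimated in \eqref{eq:classtransf} is exactly $|\partial_x^\alpha \cT_g b_N(x,\xi)|$. Hence, for fixed $N$, the validity of \eqref{eq:classtransf} for all $M \geqs 0$ and $\alpha \in \nn d$ is, by Proposition~\ref{prop:symbchar} applied with order $m - N$ and parameter $\rho = 1$, equivalent to $b_N \in \Gamma^{m - N}(\rr d)$.

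Finally I would assemble the equivalence. The instance $N = 0$ of \eqref{eq:classtransf} reads $|\partial_x^\alpha \cT_g a(x,\xi)| \lesssim \eabs{x}^{m - |\alpha|} \eabs{\xi}^{-M}$, i.e.\ $a \in \Gamma^m(\rr d)$ by Proposition~\ref{prop:symbchar}. So if \eqref{eq:classtransf} holds for all $N \in \no_0$, then $a \in \Gamma^m(\rr d)$ and $b_N \in \Gamma^{m - N}(\rr d)$ for every $N$, whence $a \in \Gamma_\cl^m(\rr d)$ by Proposition~\ref{lem:class}; conversely, if $a \in \Gamma_\cl^m(\rr d)$ then Proposition~\ref{lem:class} yields $b_N \in \Gamma^{m - N}(\rr d)$ for all $N \in \no_0$, and the previous step returns \eqref{eq:classtransf}. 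The argument is essentially bookkeeping: the only points needing care are passing from $\cT_g(Ra) = \wt{R}\, \cT_g a$ to arbitrary powers, the empty-product convention at $N = 0$, and the fact that Proposition~\ref{lem:class} presupposes $a \in \Gamma^m(\rr d)$ — which is exactly why the case $N = 0$ must be singled out. I do not expect any genuine analytic obstacle.
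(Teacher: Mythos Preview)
Your proposal is correct and follows essentially the same route as the paper: apply Proposition~\ref{lem:class} to reduce to $P_N(R)\,a \in \Gamma^{m-N}$, then invoke Proposition~\ref{prop:symbchar} and the intertwining $\cT_g \circ R = \wt R \circ \cT_g$ to rewrite this as \eqref{eq:classtransf}. You are merely more explicit than the paper about iterating the intertwining identity to polynomials and about the $N=0$ case securing the hypothesis $a \in \Gamma^m$ of Proposition~\ref{lem:class}.
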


\begin{proof}
By Proposition \ref{lem:class}, $a \in \Gamma_\cl^m(\rr {d})$ if and only if 
\begin{equation*}
(R-m+N-1) (R-m+N-2) \cdots (R-m) \, a  \in \Gamma^{m-N}(\rr {d}).
\end{equation*}
By Proposition \ref{prop:symbchar} this holds if and only if for all $\alpha \in \nn d$, $(x, \xi) \in \rr {2d}$, and $M \geqs 0$
\begin{align*}
& |\partial_x^\alpha \cT_g \left( (R-m+N-1) (R-m+N-2) \cdots (R-m) \, a\right)(x,\xi)| \\
& \lesssim \eabs{x}^{m- N-|\alpha|}\eabs{\xi}^{-M}.
\end{align*}
This is equivalent to \eqref{eq:classtransf}.
\end{proof}

\section{Characterization of pseudodifferential operators}
\label{sec:pseudochar}

When $a \in \Gamma_\rho^m(\rr {2d})$ the pseudodifferential operator $a^w(x,D)$ is continuous on $\cS(\rr d)$, and extends to a continuous operator on $\cS'(\rr d)$ \cite{Shubin1}. 
The formulas \eqref{shubop} and \eqref{eq:schwartzkernelpseudo} can be interpreted as oscillatory integrals if $0 < \rho \leqs 1$. 

\begin{lem}
\label{lem:tTpsdo}
Let $a\in \Gamma_\rho^m(\rr {2d})$ and $g \in \cS(\rr {2d}) \setminus 0$. Then, for $(z,\zeta)=( z_1,z_2; \zeta_1,\zeta_2) \in T^* \rr {2d}$,
\begin{multline}
\label{eq:tTpsdo}
\cT_g K_a(z,\zeta)
\\ = (2 \pi)^{-d/2} \cT_h a \left(\frac{z_1+z_2}{2},\frac{\zeta_1-\zeta_2}{2},\zeta_1+\zeta_2,z_2-z_1 \right) e^{\frac{i}{2} \la \zeta_1-\zeta_2,z_1-z_2 \ra}
\end{multline}
where $h = \cF_2 (g \circ \kappa)$, $\kappa(x,y) = (x+y/2,x-y/2)$ and $x,y \in \rr d$. 
\end{lem}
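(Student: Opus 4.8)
The plan is to reduce the statement to a direct computation relating the FBI transform $\cT_g$ of the Schwartz kernel $K_a$ on $\rr{2d}$ to the FBI transform $\cT_h$ of the Weyl symbol $a$ on $\rr{2d}$, exploiting the specific structure \eqref{eq:schwartzkernelpseudo} of the kernel. First I would assume $a \in \cS(\rr{2d})$ so that all integrals converge absolutely and every manipulation is rigorous; the general case $a \in \Gamma_\rho^m(\rr{2d}) \subseteq \cS'(\rr{2d})$ then follows by a density/continuity argument, since both sides of \eqref{eq:tTpsdo} depend continuously on $a \in \cS'$ in the appropriate topologies (the right-hand side via Proposition \ref{prop:Swdchar} and the fact that $h \in \cS(\rr{2d})$, the left-hand side likewise applied to $K_a \in \cS'(\rr{2d})$, which is the image of $a$ under the continuous partial inverse Fourier transform in \eqref{eq:schwartzkernelpseudo}).

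The main computation goes as follows. Starting from $\cT_g K_a(z,\zeta) = (2\pi)^{-d/2}(K_a, T_z M_\zeta g)$ with $z=(z_1,z_2)$, $\zeta=(\zeta_1,\zeta_2) \in \rr d \times \rr d$, I would substitute \eqref{eq:schwartzkernelpseudo} for $K_a$ and change variables in the $(x,y)$-integration via the map $\kappa(x,y)=(x+y/2,\, x-y/2)$, whose inverse sends $(z_1,z_2)$ to $((z_1+z_2)/2,\, z_1-z_2)$ and whose Jacobian is $1$. In the new variables the midpoint $(x+y)/2$ in the symbol becomes the first $\kappa$-coordinate, and the oscillatory factor $e^{i\la x-y,\xi\ra}$ becomes $e^{i \la \text{(2nd coord)}, \xi\ra}$; writing $g\circ\kappa$ for the pulled-back window makes the integral recognizable as (up to the explicit prefactor and phase) a partial Fourier transform in the second $\kappa$-variable of $a$ against the window $h = \cF_2(g\circ\kappa)$. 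Carrying out the $\xi$-integration against $e^{i\la y, \xi\ra}$ produces exactly the partial Fourier transform $\cF_2$, and collecting the remaining phase factors and identifying the arguments of $\cT_h a$ yields the four slots $\big(\tfrac{z_1+z_2}{2}, \tfrac{\zeta_1-\zeta_2}{2}, \zeta_1+\zeta_2, z_2-z_1\big)$ together with the phase $e^{\frac{i}{2}\la \zeta_1-\zeta_2, z_1-z_2\ra}$. The appearance of $\cF_2$ here is natural in view of Lemma \ref{lem:Fourier}, which describes precisely how $\cT_g$ interacts with a partial Fourier transform in the window and the function.

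The step I expect to require the most care is bookkeeping: tracking the several linear changes of variable, the associated phase factors, and the correct placement of the eight scalar arguments (the $z$'s, $\zeta$'s and their combinations) so that the arguments of $\cT_h a$ and the exponential phase come out exactly as stated, including signs and the factor $\tfrac12$. It is essentially a symplectic-linear-algebra identity — the map $(z_1,z_2;\zeta_1,\zeta_2) \mapsto (\tfrac{z_1+z_2}{2}, \tfrac{\zeta_1-\zeta_2}{2}; \zeta_1+\zeta_2, z_2-z_1)$ is (a variant of) the Weyl-kernel symplectomorphism — so one could alternatively organize the proof by recognizing this map as an element of $\Sp(2d,\ro)$ and invoking the window-change and metaplectic formulas from Table \ref{tab:meta} and Lemmas \ref{lem:shift}–\ref{lem:Fourier} to assemble \eqref{eq:tTpsdo} without a bare integral computation; but the direct change-of-variables route is shortest to write down and I would take that.
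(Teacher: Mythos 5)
Your plan is correct and follows essentially the same route as the paper: both exploit the structural identity $K_a = (2\pi)^{-d/2}(\cF_2^{-1}a)\circ\kappa^{-1}$ and then track how $\cT_g$ interacts with the pullback by $\kappa$ and the partial Fourier transform, reducing the lemma to a change of variables plus phase bookkeeping. The one organizational difference is that the paper avoids your preliminary density reduction to $a\in\cS(\rr{2d})$ by transposing $\cF_2$ and $\kappa$ onto the Schwartz window (computing $\cF_2(T_{\cdot}M_{\cdot}\,g\circ\kappa)$ explicitly), so the pairing $(a,\,\cdot\,)$ stays against a Schwartz test function throughout and the identity holds verbatim for all $a\in\cS'(\rr{2d})$; your density/weak-$*$ continuity argument is also valid, just an extra step you could elide by adopting the adjoint bookkeeping.
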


\begin{proof}
The statement \eqref{eq:tTpsdo} can be rephrased as 
\begin{multline*}
\cT_g K_a \left( z_1 - \frac{z_2}{2} , z_1 + \frac{z_2}{2}; \zeta_1 + \frac{\zeta_2}{2},- \zeta_1 + \frac{\zeta_2}{2} \right) \\
= (2 \pi)^{-d/2} \cT_h a \left( z_1, \zeta_1; \zeta_2, z_2 \right) e^{- i \la \zeta_1, z_2 \ra}, 
\end{multline*}
for all $( z_1,z_2; \zeta_1,\zeta_2) \in T^* \rr {2d}$. 
We have $K_a = (2 \pi)^{-d/2} (\cF_2^{-1} a) \circ \kappa^{-1}$ 
which gives 
\begin{equation}\label{eq:STFTSchwartz1}
\begin{aligned}
& \cT_g K_a \left( z_1 - \frac{z_2}{2} , z_1 + \frac{z_2}{2}; \zeta_1 + \frac{\zeta_2}{2},- \zeta_1 + \frac{\zeta_2}{2} \right) \\
& = (2 \pi)^{-3d/2} ( (\cF_2^{-1} a) \circ \kappa^{-1}, T_{z_1 - \frac{z_2}{2} , z_1 + \frac{z_2}{2}} M_{\zeta_1 + \frac{\zeta_2}{2},- \zeta_1 + \frac{\zeta_2}{2}} g) \\
& = (2 \pi)^{-3d/2} ( a, \cF_2( T_{z_1 - \frac{z_2}{2} , z_1 + \frac{z_2}{2}} M_{\zeta_1 + \frac{\zeta_2}{2},- \zeta_1 + \frac{\zeta_2}{2}} g \circ \kappa)). 
\end{aligned}
\end{equation}

We calculate
\begin{align*}
& (2 \pi)^{d/2} \cF_2( T_{z_1 - \frac{z_2}{2} , z_1 + \frac{z_2}{2}} M_{\zeta_1 + \frac{\zeta_2}{2},- \zeta_1 + \frac{\zeta_2}{2}} g \circ \kappa) (y,\eta) \\
& = \int_{\rr d} T_{z_1 - \frac{z_2}{2} , z_1 + \frac{z_2}{2}} M_{\zeta_1 + \frac{\zeta_2}{2},- \zeta_1 + \frac{\zeta_2}{2}} g \circ \kappa(y,u) e^{-i \la u, \eta \ra} \, \dd u \\
& = \int_{\rr d} e^{i \left( \la \zeta_1 + \frac{\zeta_2}{2}, y + \frac{u}{2} - z_1 + \frac{z_2}{2} \ra + \la - \zeta_1 + \frac{\zeta_2}{2}, y - \frac{u}{2} - z_1 - \frac{z_2}{2}  \ra - \la u, \eta \ra \right) } \\
& \qquad \qquad \qquad \qquad \qquad \times g\left( y + \frac{u}{2} - z_1 + \frac{z_2}{2}, y - \frac{u}{2} - z_1 - \frac{z_2}{2} \right) \dd u \\
& = e^{i \left( \la \zeta_1, z_2 \ra + \la \zeta_2, y - z_1 \ra \right) } \int_{\rr d} e^{-i \la u, \eta-\zeta_1 \ra } g\left( y + \frac{u}{2} - z_1 + \frac{z_2}{2}, y - \frac{u}{2} - z_1 - \frac{z_2}{2} \right) \dd u \\
& = e^{i \la \zeta_1, z_2 \ra} e^{i \la \zeta_2, y - z_1 \ra } \int_{\rr d} e^{-i \la u-z_2, \eta-\zeta_1 \ra } g\left( y - z_1 + \frac{u}{2}, y - z_1 - \frac{u}{2} \right) \dd u \\
& = (2 \pi)^{d/2} e^{i \la \zeta_1, z_2 \ra} e^{i \left( \la \zeta_2, y - z_1 \ra + \la z_2, \eta-\zeta_1 \ra \right)} \cF_2 (g \circ \kappa)(y-z_1,\eta-\zeta_1) \\
& = (2 \pi)^{d/2} e^{i \la \zeta_1, z_2 \ra} T_{z_1,\zeta_1} M_{\zeta_2,z_2} \cF_2 (g \circ \kappa)(y,\eta). 
\end{align*}
Insertion into \eqref{eq:STFTSchwartz1} gives the claimed conclusion. 
\end{proof}

\begin{defn}
For $u\in\cS'(\rr {2d})$ and $g \in \cS(\rr {2d}) \setminus 0$ we denote  
\begin{equation*}
\cT_g^\Delta u(z_1,z_2,\zeta_1,\zeta_2) = e^{-\frac{i}{2} \la \zeta_1-\zeta_2, z_1-z_2 \ra }\cT_g u (z_1,z_2,\zeta_1,\zeta_2)
\end{equation*}
for $\quad ( z_1,z_2; \zeta_1,\zeta_2) \in T^* \rr {2d}.$
\end{defn}

As a consequence of Proposition \ref{prop:symbchar} we obtain the following characterization of the Schwartz kernels of Weyl quantized Shubin operators.

\begin{prop}
\label{prop:LGchar}
Let $K\in\cS'(\rr {2d})$. Then $K$ is the Schwartz kernel of an operator of the form \eqref{shubop} with $a \in \Gamma_\rho^m(\rr {2d})$ if and only if for all $\alpha,\beta \in \nn d$ and $N \in \no$ and any $g \in \cS(\rr {2d}) \setminus 0$ we have
\begin{equation}\label{eq:kernelchar1}
\begin{aligned}
& |(\partial_{z_1} + \partial_{z_2})^\alpha (\partial_{\zeta_1} - \partial_{\zeta_2})^\beta \cT_g^\Delta K (z_1,z_2, \zeta_1, \zeta_2)| \\
& \qquad \lesssim \eabs{(z_1+z_2,\zeta_1-\zeta_2)}^{m- \rho |\alpha+\beta|}\eabs{(z_1-z_2,\zeta_1+\zeta_2)}^{-N}, \\
& \qquad \qquad  ( z_1,z_2; \zeta_1,\zeta_2) \in T^* \rr {2d}.
\end{aligned}
\end{equation}
\end{prop}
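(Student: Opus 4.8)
The plan is to reduce the statement to Proposition~\ref{prop:symbchar} via the explicit formula for $\cT_g K_a$ established in Lemma~\ref{lem:tTpsdo}. First I would observe that the Schwartz kernel theorem and the mapping properties of the Weyl quantization mean that every $K\in\cS'(\rr{2d})$ is the kernel of a uniquely determined $a\in\cS'(\rr{2d})$, and $a\in\Gamma_\rho^m(\rr{2d})$ is exactly what we need to characterize. So it suffices to translate the right-hand side of \eqref{eq:tTpsdo} into the estimates \eqref{eq:kernelchar1}. Since Lemma~\ref{lem:tTpsdo} expresses $\cT_g K_a$ as $\cT_h a$ composed with a fixed linear change of variables (times the phase factor $e^{\frac{i}{2}\la\zeta_1-\zeta_2,z_1-z_2\ra}$ that is absorbed precisely by passing to $\cT_g^\Delta$), the whole statement is essentially a bookkeeping exercise with the chain rule under that linear substitution.

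Concretely, write $\Phi(z_1,z_2,\zeta_1,\zeta_2)=\left(\frac{z_1+z_2}{2},\frac{\zeta_1-\zeta_2}{2},\zeta_1+\zeta_2,z_2-z_1\right)$, so that $\cT_g^\Delta K_a=(2\pi)^{-d/2}(\cT_h a)\circ\Phi$ by Lemma~\ref{lem:tTpsdo} and the definition of $\cT_g^\Delta$. The key point is the pairing of differential operators: under $\Phi$, the vector field $\partial_{z_1}+\partial_{z_2}$ pulls back the first slot of $\cT_h a$ (it differentiates $\frac{z_1+z_2}{2}$), and $\partial_{\zeta_1}-\partial_{\zeta_2}$ pulls back the second slot (it differentiates $\frac{\zeta_1-\zeta_2}{2}$), while both annihilate the third and fourth slots $\zeta_1+\zeta_2$ and $z_2-z_1$. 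Hence $(\partial_{z_1}+\partial_{z_2})^\alpha(\partial_{\zeta_1}-\partial_{\zeta_2})^\beta$ applied to $(\cT_h a)\circ\Phi$ equals a constant multiple of $\big((\partial_x^\alpha\partial_\xi^\beta\cT_h a)\circ\Phi\big)$ in the first two groups of variables, where $(x,\xi)$ denotes the first two $\rr d$-slots of the argument of $\cT_h a$. Next, $\Phi$ maps $(z_1+z_2,\zeta_1-\zeta_2)$ to (twice) the first two slots and $(z_1-z_2,\zeta_1+\zeta_2)$ to the last two slots up to signs, so $\eabs{\cdot}$ of the image of the first pair is comparable to $\eabs{(z_1+z_2,\zeta_1-\zeta_2)}$ and likewise for the second pair; this is just equivalence of norms under an invertible linear map.

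With these identifications in place, Proposition~\ref{prop:symbchar} (applied in dimension $2d$, with the variable there split into the first $\rr{2d}$-block playing the role of ``$x$'' and the second block playing the role of ``$\xi$'') says precisely that $a\in\Gamma_\rho^m(\rr{2d})$ if and only if $|\partial_x^\alpha\partial_\xi^\beta\cT_h a(x,\xi)|\lesssim\eabs{x}^{m-\rho|\alpha+\beta|}\eabs{\xi}^{-N}$ for all $N$; I will note that the $\beta$-derivative also contributes the decay power $\rho|\beta|$ in $\eabs x$, which follows from \eqref{eq:Gineq1} by treating the $\xi$-derivative via \eqref{eq:diffidentstar} and re-running the estimate, or equivalently by interpolating the two statements in \eqref{eq:Gineq1}. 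Substituting the argument $\Phi$ and using the norm equivalences from the previous paragraph turns this bound into exactly \eqref{eq:kernelchar1}, and conversely \eqref{eq:kernelchar1} unwinds back to the Proposition~\ref{prop:symbchar} estimate for $\cT_h a$. Finally, the ``for one (equivalently all) $g$'' clause is inherited from the corresponding clause in Proposition~\ref{prop:symbchar} together with the fact that as $g$ ranges over $\cS(\rr{2d})\setminus 0$ so does $h=\cF_2(g\circ\kappa)$ (the map $g\mapsto h$ is a bijection of $\cS(\rr{2d})\setminus 0$).

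The main obstacle is not conceptual but combinatorial: one must verify carefully that the chain rule under $\Phi$ produces no cross-terms, i.e.\ that $(\partial_{z_1}+\partial_{z_2})$ and $(\partial_{\zeta_1}-\partial_{\zeta_2})$ really act only on the first two argument slots of $\cT_h a\circ\Phi$ and with the right constant, and that the two weight factors on the right of \eqref{eq:kernelchar1} correspond correctly to the ``$x$-weight'' and ``$\xi$-weight'' after the substitution. A secondary point requiring a line of justification is the sharpening from \eqref{eq:Gineq1} to the version with the extra $\rho|\beta|$ gain in $\eabs x$ from $\xi$-derivatives; this is where the full strength of \eqref{eq:diffidentstar} (which trades $\partial_\xi$ for a polynomial-times-window) is used, exactly as in the proof of Proposition~\ref{prop:symbchar}.
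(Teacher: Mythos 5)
Your overall strategy is exactly the paper's: the paper states Proposition~\ref{prop:LGchar} immediately after Lemma~\ref{lem:tTpsdo} with the remark ``as a consequence of Proposition~\ref{prop:symbchar}'', so the intended proof is precisely your reduction, namely $\cT_g^\Delta K_a = (2\pi)^{-d/2}(\cT_h a)\circ\Phi$ followed by the symbol characterization. Your chain--rule computation is correct (and crucial): both $\partial_{z_1}+\partial_{z_2}$ and $\partial_{\zeta_1}-\partial_{\zeta_2}$ pull back under $\Phi$ to derivatives in the \emph{first} $\rr{2d}$-block argument of $\cT_h a$ (the slots $\tfrac{z_1+z_2}{2}$ and $\tfrac{\zeta_1-\zeta_2}{2}$) and annihilate the second block; the norm equivalences and the bijectivity of $g\mapsto h=\cF_2(g\circ\kappa)$ are also fine.

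However, there is a genuine error in the paragraph where you apply Proposition~\ref{prop:symbchar}. You write that the proposition gives
$|\partial_x^\alpha\partial_\xi^\beta\cT_h a(x,\xi)|\lesssim\eabs{x}^{m-\rho|\alpha+\beta|}\eabs{\xi}^{-N}$
with $x$ the first $\rr{2d}$-block and $\xi$ the second, and then flag a ``sharpening'' needed to get the $\rho|\beta|$ gain from the $\xi$-derivative. That sharpened estimate is \emph{false}: by \eqref{eq:Gineq1} the decay exponent in $\eabs{x}$ is $m-\rho|\alpha|$, not $m-\rho|\alpha+\beta|$, because $\partial_\xi^\beta$ only exchanges the window via \eqref{eq:diffidentstar} ($D_\xi^\beta\cT_g u=\cT_{g_\beta}u$ with $g_\beta\in\cS$) and produces no improvement in the $\eabs{x}$-weight. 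So your proposed justification (``treating the $\xi$-derivative via \eqref{eq:diffidentstar} and re-running the estimate'') does not work, and neither does interpolation. The good news is that this step is unnecessary and the confusion is purely notational: since your chain rule shows both $\alpha$- and $\beta$-derivatives land in the first $\rr{2d}$-block, the multi-index to feed into \eqref{eq:Gineq} (the \emph{no $\xi$-derivative} version) is $(\alpha,\beta)\in\nn{2d}$, which already delivers $\eabs{(z_1+z_2,\zeta_1-\zeta_2)}^{m-\rho(|\alpha|+|\beta|)}\eabs{(z_1-z_2,\zeta_1+\zeta_2)}^{-N}$ verbatim. Removing the spurious ``sharpening'' paragraph and keeping the correct first-block bookkeeping makes the argument complete, and it then coincides with the paper's intended proof.
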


\begin{rem}\label{rem:kernelchargeom}
Corresponding to Proposition \ref{prop:symbchargeom}, we may rephrase the estimates \eqref{eq:kernelchar1} for $\Gamma^m(\rr {2d})$ as 
\begin{equation*}
|L_1 \cdots L_k \cT_g^\Delta K (z_1,z_2, \zeta_1, \zeta_2)|  
\lesssim \eabs{(z_1+z_2,\zeta_1-\zeta_2)}^{m}\eabs{(z_1-z_2,\zeta_1+\zeta_2)}^{-N}, 
\end{equation*}
where $L_i$ are differential operators of the form
\begin{align*}
L_i & = (z_{1,j} + z_{2,j}) (\partial_{z_{1,n}} + \partial_{z_{2,n}}), 
\quad 
& L_i = (z_{1,j} + z_{2,j}) (\partial_{\zeta_{1,n}} - \partial_{\zeta_{2,n}}), \\
L_i & = (\zeta_{1,j} - \zeta_{2,j}) (\partial_{z_{1,n}} + \partial_{z_{2,n}}), 
\quad \mbox{or} \quad 
& L_i = (\zeta_{1,j} - \zeta_{2,j}) (\partial_{\zeta_{1,n}} - \partial_{\zeta_{2,n}})
\end{align*}
for $1 \leqs j,n \leqs d$ and $1 \leqs i \leqs k$. 
\end{rem}

Proposition \ref{prop:LGchar} may be phrased in terms of the Schwartz kernel $K_{\cT_g a^w(x,D) \cT_h ^*}$ of the operator $\cT_g a^w(x,D) \cT_h^*$ 
for $a \in \Gamma_\rho^m(\rr {2d})$. 
Let $u,v \in \cS(\rr d)$ and $g,h \in \cS(\rr d) \setminus 0$. 
On the one hand
\begin{align*}
( a^w(x,D) u,v) & = \| g \|^{-2}_{L^2} \| h \|^{-2}_{L^2} (\cT_g a^w(x,D) \cT_h^* (\cT_h u), \cT_g v ) \\
& = \| g \|^{-2}_{L^2} \| h \|^{-2}_{L^2} ( K_{\cT_g a^w(x,D) \cT_h^*}, \cT_g v \otimes \overline{\cT_h u} ) 
\end{align*}
and on the other hand
\begin{equation*}
( a^w(x,D) u,v) = ( K_a, v \otimes \overline u ) = \| g \|^{-2}_{L^2} \| h \|^{-2}_{L^2} ( \cT_{g\otimes \overline{h}} K_a,\cT_{g\otimes \overline{h}}(v \otimes \overline u)).
\end{equation*}
Since 
\begin{equation*}
(\cT_g v \otimes \overline{\cT_h u}) (z_1,\zeta_1, z_2, \zeta_2) = \cT_{g\otimes\overline{h}} (v \otimes \overline u) (z_1,z_2,\zeta_1,-\zeta_2)
\end{equation*}
this proves the formula
\begin{equation}
\label{eq:cTpkernelident}
K_{\cT_g a^w(x,D) \cT_h^*} (z_1,\zeta_1,z_2,-\zeta_2 )=\cT_{g\otimes\overline{h}} K_a (z_1,z_2,\zeta_1,\zeta_2).
\end{equation}

In view of the last identity and Proposition \ref{prop:LGchar} we have the following result. 
Tataru \cite[Theorem 1]{Tataru} obtained a version of this characterization in the special case $\Gamma_0^0$, and $\alpha =\beta =0$. 

\begin{cor}
We have $a \in \Gamma_\rho^m(\rr {2d})$ if and only if for all $\alpha,\beta \in \nn d$ and $N \in \no$ and any $g,h \in \cS(\rr {2d}) \setminus 0$
\begin{multline}
\left| (\partial_{z_1} + \partial_{z_2})^\alpha (\partial_{\zeta_1} - \partial_{\zeta_2})^\beta \left( e^{- \frac{i}{2} \la z_1-z_2, \zeta_1-\zeta_2 \ra} K_{\cT_g a^w(x,D) \cT_h^*} (z_1,\zeta_1,z_2,-\zeta_2 )\right) \right| \\  \lesssim  \eabs{(z_1+z_2,\zeta_1-\zeta_2)}^{m-\rho |\alpha+\beta|}\eabs{(z_1-z_2,\zeta_1+\zeta_2)}^{-N},  \quad  ( z_1,z_2; \zeta_1,\zeta_2) \in T^* \rr {2d}.
\end{multline}
\end{cor}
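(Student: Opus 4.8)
The plan is to combine the two main ingredients already at hand: the kernel characterization in Proposition \ref{prop:LGchar} and the identity \eqref{eq:cTpkernelident} relating $\cT_{g \otimes \overline h} K_a$ to the Schwartz kernel $K_{\cT_g a^w(x,D) \cT_h^*}$. Since the statement of the corollary is, as remarked, just Proposition \ref{prop:LGchar} phrased through \eqref{eq:cTpkernelident}, the proof is essentially a bookkeeping exercise on how the variables and phase factors transform under the substitution $\zeta_2 \mapsto -\zeta_2$.

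First I would recall that, given $a \in \cS'(\rr{2d})$, Proposition \ref{prop:LGchar} says $a \in \Gamma_\rho^m(\rr{2d})$ if and only if the estimates \eqref{eq:kernelchar1} hold for $\cT_g^\Delta K_a = e^{-\frac{i}{2}\la \zeta_1-\zeta_2,z_1-z_2\ra}\cT_g K_a$ with any window $g \in \cS(\rr{2d})\setminus 0$; here $K_a$ is the Schwartz kernel of $a^w(x,D)$. Next I would insert the window $g \otimes \overline h$ for $g, h \in \cS(\rr d) \setminus 0$ (note $g \otimes \overline h \in \cS(\rr{2d}) \setminus 0$, and by Proposition \ref{prop:LGchar} it suffices to verify the estimates for this particular family of windows, which still ranges over enough windows since $g,h$ are arbitrary — alternatively one checks all windows and restricts). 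Then I apply \eqref{eq:cTpkernelident}, which reads $\cT_{g\otimes\overline h} K_a(z_1,z_2,\zeta_1,\zeta_2) = K_{\cT_g a^w(x,D)\cT_h^*}(z_1,\zeta_1,z_2,-\zeta_2)$, so that
\begin{equation*}
\cT_{g\otimes\overline h}^\Delta K_a(z_1,z_2,\zeta_1,\zeta_2) = e^{-\frac{i}{2}\la z_1-z_2,\zeta_1-\zeta_2\ra} K_{\cT_g a^w(x,D)\cT_h^*}(z_1,\zeta_1,z_2,-\zeta_2),
\end{equation*}
which is exactly the quantity differentiated in the corollary (using $\la z_1-z_2,\zeta_1-\zeta_2\ra = \la \zeta_1-\zeta_2, z_1-z_2 \ra$). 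The differential operators $(\partial_{z_1}+\partial_{z_2})^\alpha(\partial_{\zeta_1}-\partial_{\zeta_2})^\beta$ and the weights $\eabs{(z_1+z_2,\zeta_1-\zeta_2)}^{m-\rho|\alpha+\beta|}$ and $\eabs{(z_1-z_2,\zeta_1+\zeta_2)}^{-N}$ in \eqref{eq:kernelchar1} are carried over verbatim, so the equivalence is immediate.

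The only genuine point requiring care — and the one I would flag as the main (minor) obstacle — is making sure the quantifier over windows matches: Proposition \ref{prop:LGchar} is stated ``for any $g \in \cS(\rr{2d})\setminus 0$'', and one must observe that the tensor products $g \otimes \overline h$ do not exhaust $\cS(\rr{2d})\setminus 0$. This is harmless because Proposition \ref{prop:symbchar} (on which Proposition \ref{prop:LGchar} rests) already gives the ``one (and equivalently all)'' window principle: it is enough that \eqref{eq:kernelchar1} hold for a single nonzero window, and $g \otimes \overline h \ne 0$ whenever $g, h \ne 0$. Thus, in one direction, $a \in \Gamma_\rho^m$ implies \eqref{eq:kernelchar1} for all windows, in particular all $g \otimes \overline h$; and conversely, the estimate for one such $g \otimes \overline h$ already forces $a \in \Gamma_\rho^m$. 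I would phrase the proof in two short sentences accordingly, citing Proposition \ref{prop:LGchar} and the identity \eqref{eq:cTpkernelident}, and leave the variable relabeling to the reader as it is entirely mechanical.
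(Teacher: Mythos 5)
Your proof is correct and follows exactly the route the paper intends: the paper gives no explicit proof, simply stating that the corollary follows ``in view of the last identity and Proposition~\ref{prop:LGchar},'' which is precisely your combination of \eqref{eq:cTpkernelident} with the kernel characterization. Your additional remark about the window quantifier (that the tensor products $g\otimes\overline h$ need not exhaust $\cS(\rr{2d})\setminus 0$, but that the ``one window suffices'' principle inherited from Proposition~\ref{prop:symbchar} closes the gap) is a legitimate and worthwhile clarification that the paper leaves implicit.
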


\subsection{Continuity in Shubin--Sobolev spaces}

As an application of the previous characterization we give a simple proof of continuity of Shubin pseudodifferential operators in isotropic Sobolev spaces. The Shubin--Sobolev spaces $Q^s(\rr d)$, $s \in \ro$, introduced by Shubin \cite{Shubin1} (cf. \cite{Grochenig1,Nicola1}) can be defined as the modulation space $M^{2}_s(\rr d)$, that is 
\begin{equation*}
Q^s (\rr d)  = \{u \in \cS'(\rr d): \, \eabs{ \cdot }^s \cT_g u \in L^2(\rr {2d}) \} 
\end{equation*}
where $g \in \cS(\rr d) \setminus 0$ is fixed and arbitrary, with norm
\begin{equation*}
\| u \|_{Q^s} = \left\| \eabs{ \cdot }^s \cT_g u \right\|_{L^2(\rr {2d})}. 
\end{equation*}

The characterization of Shubin pseudodifferential operators given in Proposition \ref{prop:LGchar} yields a simple proof of their $Q^s$-continuity, cf. \cite{Tataru}.
\begin{prop}
If $a \in \Gamma_0^m(\rr {2d})$ then $a^w(x,D) :Q^{s+m}(\rr d)\rightarrow Q^s(\rr d)$ is continuous for all $s \in \ro$.
\end{prop}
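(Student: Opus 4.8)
The plan is to reduce the $Q^{s+m}\to Q^s$ continuity to an $L^2(\rr {2d})$ bound for a single integral operator, obtained by combining the reproducing formula \eqref{eq:reproducing} with the kernel estimate of Proposition \ref{prop:LGchar} and Schur's test.

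First I would fix the window $g\in\cS(\rr d)\setminus 0$ appearing in the definition of $Q^s$, and for $u\in\cS(\rr d)$ write, using \eqref{eq:reproducing},
\begin{equation*}
\cT_g\bigl(a^w(x,D)u\bigr) = \|g\|_{L^2}^{-2}\,\cT_g\, a^w(x,D)\,\cT_g^*\,\cT_g u .
\end{equation*}
By the identity \eqref{eq:cTpkernelident} (with $h=g$) the operator $\cT_g a^w(x,D)\cT_g^*$ on $\rr {2d}$ has Schwartz kernel $k(z_1,\zeta_1,z_2,\zeta_2)=\cT_{g\otimes\overline g}K_a(z_1,z_2,\zeta_1,-\zeta_2)$, so with the shorthand $X=(z_1,\zeta_1)$, $Y=(z_2,\zeta_2)\in\rr {2d}$ this becomes
\begin{equation*}
\cT_g\bigl(a^w(x,D)u\bigr)(X) = \|g\|_{L^2}^{-2}\int_{\rr {2d}} k(X,Y)\,\cT_g u(Y)\,\dd Y .
\end{equation*}
Since $\|u\|_{Q^t}=\|\eabs{\cdot}^t\cT_g u\|_{L^2}$, writing $\cT_g u(Y)=\eabs{Y}^{-s-m}\bigl(\eabs{Y}^{s+m}\cT_g u(Y)\bigr)$ shows that the claim reduces to: the integral operator on $L^2(\rr {2d})$ with kernel $\eabs{X}^s\,k(X,Y)\,\eabs{Y}^{-s-m}$ is bounded.

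To bound $k$ I would apply Proposition \ref{prop:LGchar} with $\rho=0$, $\alpha=\beta=0$, and the window $g\otimes\overline g\in\cS(\rr {2d})\setminus 0$; the phase factor defining $\cT^\Delta$ has modulus one and $N$ is arbitrary, so after the substitution $\zeta_2\mapsto-\zeta_2$ one gets, for every $N\geqs 0$,
\begin{equation*}
|k(X,Y)| \lesssim \eabs{X+Y}^{m}\,\eabs{X-Y}^{-N}, \qquad X,Y\in\rr {2d},
\end{equation*}
with constants depending only on $N$, on $g$, and on finitely many of the seminorms $\rho^m_M(a)$. Peetre's inequality \eqref{eq:Peetre} in the forms $\eabs{X}^s\lesssim\eabs{X-Y}^{|s|}\eabs{Y}^s$ and $\eabs{X+Y}^m\lesssim\eabs{X-Y}^{|m|}\eabs{Y}^m$ then produces the symmetric estimate
\begin{equation*}
\eabs{X}^s\,|k(X,Y)|\,\eabs{Y}^{-s-m} \lesssim \eabs{X-Y}^{|s|+|m|-N},
\end{equation*}
in which the powers of $\eabs{Y}$ cancel exactly. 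Choosing $N=|s|+|m|+2d+1$ makes the right-hand side an integrable function of $X-Y$, invariant under $X\leftrightarrow Y$, so both Schur bounds $\sup_X\int\eabs{X-Y}^{|s|+|m|-N}\dd Y<\infty$ and $\sup_Y\int\eabs{X-Y}^{|s|+|m|-N}\dd X<\infty$ hold simultaneously. Schur's test then gives the $L^2(\rr {2d})$ bound, hence $\|a^w(x,D)u\|_{Q^s}\lesssim\|u\|_{Q^{s+m}}$ for $u\in\cS(\rr d)$; finally I would extend this a priori estimate to all of $Q^{s+m}(\rr d)$ using the density of $\cS(\rr d)$ in $Q^{s+m}(\rr d)$ and the continuity of $a^w(x,D)$ on $\cS'(\rr d)$.

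The argument is almost entirely bookkeeping, and I do not expect a genuine obstacle. The one point that carries the proof is that the kernel $k$ is governed by $\eabs{X+Y}^m\eabs{X-Y}^{-N}$: this is exactly the structure that lets the asymmetric weight factor $\eabs{X}^s\eabs{Y}^{-s-m}$ be absorbed into powers of $\eabs{X-Y}$, so that both Schur conditions become available at once regardless of the signs of $s$ and $m$. The hypothesis $\rho=0$ plays no active role beyond fixing the setting of Proposition \ref{prop:LGchar} (no gain from derivatives is used), so the same proof works verbatim for all $0\leqs\rho\leqs 1$; one also reads off that the operator norm of $a^w(x,D)\colon Q^{s+m}\to Q^s$ is dominated by a fixed seminorm of $a$ in $\Gamma_0^m(\rr {2d})$.
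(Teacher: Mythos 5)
Your proof is correct and takes essentially the same approach as the paper: both reduce to the identical kernel estimate $|k(X,Y)|\lesssim\eabs{X+Y}^{m}\eabs{X-Y}^{-N}$ via \eqref{eq:cTpkernelident} and Proposition \ref{prop:LGchar}, absorb the weights using Peetre's inequality, and conclude by Schur's test. The only difference is organizational — the paper sets things up through the duality $\|Au\|_{Q^s}=\sup_{v}|(Au,v)|$ while you go directly through the reproducing formula applied to $\cT_g(Au)$ — but the substantive steps are the same.
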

\begin{proof}
Set $A=a^w(x,D)$. 
We have for $u\in Q^{s+m}(\rr{d})$  
\begin{align*}
\|A u\|_{Q^s}
& = \sup_{v\in Q^{-s}} |( A u,v)| 
= \sup_{v\in Q^{-s}} |( K_{\cTp A \cTp^*}, \cTp v \otimes \overline{\cTp u} )| \\
& =  \sup_{v\in Q^{-s}} |( \eabs{\cdot}^{s} \otimes \eabs{\cdot}^{-s-m} K_{\cTp A \cTp^*}, \underbrace{\eabs{\cdot}^{-s} \, \cTp v}_{\in L^2(\rr {2d})} \otimes \underbrace{\eabs{\cdot}^{s+m} \, \overline{ \cTp u}}_{\in L^2(\rr {2d})} )|. 
\end{align*}
It remains to show that 
\begin{equation}\label{eq:schwartzkernel1}
\eabs{(z_1,\zeta_1)}^{s}\eabs{(z_2,\zeta_2)}^{-s-m} K_{\cTp A\cTp^*} (z_1,\zeta_1, z_2, \zeta_2) 
\end{equation}
is the Schwartz kernel of a continuous operator on $L^2(\rr {2d})$. 

First we deduce from \eqref{eq:cTpkernelident}, Proposition \ref{prop:LGchar} and \eqref{eq:Peetre} the estimate for any $N \in \no$
\begin{align*}
& \eabs{(z_1,\zeta_1)}^{s}\eabs{(z_2,\zeta_2)}^{-s-m} |K_{\cTp A\cTp^*} (z_1,\zeta_1, z_2, \zeta_2)| \\
& = \eabs{(z_1,\zeta_1)}^{s}\eabs{(z_2,\zeta_2)}^{-s-m} |\cTp K_a (z_1,z_2, \zeta_1, -\zeta_2)| \\
& \lesssim \eabs{(z_1,\zeta_1)}^{s}\eabs{(z_2,\zeta_2)}^{-s-m} \eabs{(z_1+z_2, \zeta_1+\zeta_2)}^m \eabs{(z_1-z_2,\zeta_1-\zeta_2)}^{-N} \\
& \lesssim \eabs{(z_2,\zeta_2)}^{-m} \eabs{(z_1, \zeta_1)+(z_2,\zeta_2)}^m \eabs{(z_1,\zeta_1)-(z_2,\zeta_2)}^{|s| -N} \\
& \lesssim \eabs{(z_1,\zeta_1)-(z_2,\zeta_2)}^{|s|+|m| - N}.
\end{align*}

Then we apply Schur's test which gives, for $N>0$ sufficiently large,
\begin{align*}
\int_{\rr {2d}} \left|\eabs{(z_1,\zeta_1)}^{s}\eabs{(z_2,\zeta_2)}^{-s-m} K_{\cTp A\cTp^*} (z_1,\zeta_1, z_2, \zeta_2) \right| \, \dd z_1 \, \dd \zeta_1 & \lesssim 1,\\
\int_{\rr {2d}} \left|\eabs{(z_1,\zeta_1)}^{s}\eabs{(z_2,\zeta_2)}^{-s-m} K_{\cTp A\cTp^*} (z_1,\zeta_1, z_2, \zeta_2) \right| \, \dd z_2 \, \dd \zeta_2 & \lesssim 1.
\end{align*}
This implies that \eqref{eq:schwartzkernel1} is the Schwartz kernel of an operator that is continuous on $L^2(\rr {2d})$. 
\end{proof}
%

\section{$\Gamma$-conormal distributions}
\label{sec:gconorm}

The kernels of pseudodifferential operators with H\"ormander symbols are prototypes of conormal distributions, see \cite[Chapter~18.2]{Hormander0}. We introduce an analogous notion in the Shubin calculus. Before giving a precise definition we make some observations to clarify our idea. 

Proposition \ref{prop:LGchar} may be rephrased using the diagonal and the antidiagonal
\begin{equation*}
\Delta = \{(x,x) \in \rr {2d}: \ x \in \rr d \}, \qquad \Delta^\perp = \{(x,-x) \in \rr {2d}: \ x \in \rr d \} 
\end{equation*}
considered as linear subspaces of $\rr{2d}$.
Denoting Euclidean distance to a subset $V$ by $\dist(\cdot,V)$ we have
\begin{equation*}
\dist((x,y),\Delta) = \inf_{z \in \rr d} \left| (x,y) - (z,z) \right| = \frac{|x-y|}{\sqrt{2}}, \quad (x,y) \in \rr {2d}, 
\end{equation*}
and $\dist((x,y),\Delta^\perp)) = |x+y|/\sqrt{2}$ for $(x,y) \in \rr {2d}$. 

The inequalities \eqref{eq:kernelchar1} can thus be expressed, for $(x,\xi) \in T^* \rr {2d}$, as
\begin{equation}\label{eq:kernelchar2}
\begin{aligned}
\left| L_1 \cdots L_k \cT_g^\Delta K_a (x,\xi) \right |
& \lesssim \left( 1 + \dist((x,\xi),N(\Delta^\perp)) \right)^{m- \rho k} \\ 
& \qquad \times \left( 1 + \dist((x,\xi),N(\Delta)) \right)^{-N},
\end{aligned}
\end{equation}
where $N(\Delta) =\Delta \times  \Delta^\perp \subseteq T^* \rr {2d}$ and $N(\Delta^\perp) = \Delta^\perp \times \Delta \subseteq T^* \rr {2d}$ denote the conormal spaces of $\Delta$ and $\Delta^\perp$ respectively, and
\begin{equation}\label{eq:Ljdef}
L_j = \langle b_j, \nabla_{x,\xi} \rangle 
\end{equation}
is a first order differential operator with constant coefficients such that $b_j \in N(\Delta), j=1,2,\dots,k$ and $k,N \in \no$. 

Observe that in \eqref{eq:kernelchar2} we may substitute $N(\Delta^\perp)$ by any linear subspace transversal to $N(\Delta)$, that is any vector subspace $V \subseteq T^* \rr{2d}$ such that $T^* \rr{2d} = N(\Delta) \oplus V$. 
Note also that 
\begin{equation*}
\frac{1}{2} \la x_1-x_2, \xi_1-\xi_2 \ra = \langle \pi_{\Delta^\perp} x ,\xi \rangle.
\end{equation*}

In the following we generalize \eqref{eq:kernelchar2} by replacing the diagonal $\Delta$ by a general linear subspace, and the dimension $2d$ is replaced by $d$. 
For simplicity of notation we work with $\rho=1$ but this can be generalized to $0 \leqs \rho \leqs 1$.

\begin{defn}\label{def:Gconormal}
Suppose $Y \subseteq \rr d$ is an $n$-dimensional linear subspace, $0 \leqs n \leqs d$, let $N(Y) = Y \times Y^\perp$, 
and let $V \subseteq T^* \rr d$ be a $d$-dimensional linear subspace such that $N(Y) \oplus V = T^* \rr d$. 
Then $u \in \cS'(\rr d)$ is $\Gamma$-conormal to $Y$ of degree $m\in \ro$, denoted $u \in I^m_\Gamma(\rr d,Y)$, if for some $g \in \cS(\rr d) \setminus 0$ and for any $ k,N \in \mathbb{N}$ we have
\begin{equation}
\label{eq:conormchar}
\begin{aligned}
\left| L_1 \cdots L_k \cT^Y_g u (x,\xi) \right |
& \lesssim \left( 1 + \dist((x,\xi),V) \right)^{m-k} \left( 1 + \dist((x,\xi),N(Y)) \right)^{-N}, \\
& \qquad (x,\xi) \in T^* \rr d, 
\end{aligned}
\end{equation}
where
\begin{equation*}
\cT_g^Y u(x,\xi) = e^{-i \la \pi_{Y^\perp} x, \xi\ra} \cT_g u (x, \xi), \quad (x,\xi) \in T^* \rr d, 
\end{equation*}
and
$L_j$, $j=1,\dots,k$, are first order differential operators defined by \eqref{eq:Ljdef} with $b_j \in N(Y)$. 
\end{defn}

For a fixed $g \in \cS \setminus 0$ we equip $I^m_\Gamma(\rr d,Y)$ with a topology using seminorms defined as the best possible constants in \eqref{eq:conormchar} for $N,M \in \no$ fixed, maximized over $k\leqs M$ and all combinations of $b_j \in N(Y)$ belonging to a fixed and arbitary basis.

As observed, the definition is independent of the linear subspace $V$ as long as $N(Y) \oplus V = T^* \rr d$, and often it is convenient to use $V = N(Y)^\perp = N(Y^\perp)$. 
We will also see that the definition and the topology does not depend on $g \in \cS(\rr d) \setminus 0$ (see Corollary \ref{cor:windowindep2}). 

If we pick coordinates such that $Y = \rr n \times \{0\} \subseteq \rr d$ then 
\begin{align*}
N(Y) & = \{(x_1, 0, 0, \xi_2): \ x_1 \in \rr n, \, \xi_2 \in \rr {d-n}\} \subseteq T^* \rr d, \\
N(Y^\perp) & = \{(0, x_2, \xi_1, 0): \ x_2 \in \rr {d-n}, \, \xi_1 \in \rr {n}\} \subseteq T^* \rr d. 
\end{align*} 
We split variables as $x=(x_1,x_2) \in \rr d$, $x_1 \in \rr n$, $x_2 \in \rr {d-n}$. 
The inequalities \eqref{eq:conormchar} reduce to 
\begin{equation}
\label{eq:Gconormdefineq}
|\partial^\alpha_{x_1} \partial^\beta_{\xi_2} \left( e^{-i \la x_2, \xi_2 \ra }\cT_g u (x,\xi) \right)| \lesssim \eabs{(x_1,\xi_2)}^{m-|\alpha+\beta|} \eabs{(x_2,\xi_1)}^{-N}
\end{equation}
for $\alpha \in \nn n$, $\beta \in \nn {d-n}$ and $N \in \no$.

\begin{example}
By Proposition \ref{prop:LGchar} and \eqref{eq:kernelchar2} we have 
\begin{equation*}
I^m_\Gamma(\rr {2d},\Delta) = \{ K_a \in \cS'(\rr {2d}): a \in \Gamma^m(\rr {2d}) \}. 
\end{equation*}
\end{example}

\begin{example}
Write $x=(x_1,x_2)$, $x_1 \in \rr n$, $x_2 \in \rr {d-n}$, and consider $u = 1 \otimes \delta_0 \in \cS'(\rr d)$ with $1\in\cS'(\rr{n})$ and $\delta_0 \in\cS'(\rr {d-n})$. 
The distribution $u$ is a prototypical example of a distribution $\Gamma$-conormal (and also conormal in the standard sense of \cite[Chapter~18.2]{Hormander0}) to the subspace $\rr n \times \{0\}$. 
It is a Gaussian distribution in the sense of H\"ormander \cite{Hormander2} (cf. \cite{PRW1}). A computation yields
\begin{equation*}
\cTp u(x,\xi) =  (2\pi)^{- \frac{d-n}{2}} \pi^{-\frac{d}{4}}e^{i \la x_2, \xi_2 \ra} e^{-\frac{1}{2} (|x_2|^2+|\xi_1|^2)}
\end{equation*}
so the inequalities \eqref{eq:Gconormdefineq} are satisfied for $m=0$. In particular $\delta_0 (\rr d) \in I_{\Gamma}^0(\rr d, \{ 0 \})$.
\end{example}

Next we characterize the conormal distributions of which the latter example is a particular case. 
Again we denote $x=(x_1,x_2)\in \rr d$, $x_1 \in \rr n$, $x_2 \in \rr {d-n}$. 
\begin{lem}
\label{lem:IGchar}
If $u \in \cS'(\rr d)$ and $0 \leqs n \leqs d$ then $u\in I^m_\Gamma(\rr d, \rr n \times \{0\})$ if and only if
\begin{equation*}
u(x) = (2\pi)^{-(d-n)/2} \int_{\rr {d-n}} e^{i \la x_2,\theta \ra} a(x_1,\theta)\ \dd \theta
\end{equation*}
for some $a\in \Gamma^m(\rr d)$, that is $u=\cF_2^{-1}a$. 
\end{lem}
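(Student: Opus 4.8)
The strategy is to reduce the statement about $u\in I^m_\Gamma(\rr d,\rr n\times\{0\})$ to the symbol characterization in Proposition~\ref{prop:symbchar} via the partial Fourier transform $\cF_2$, exactly as Lemma~\ref{lem:tTpsdo} reduced the kernel characterization to the symbol one. The key computational input is Lemma~\ref{lem:Fourier}: with the splitting $x=(x_1,x_2)$, $\xi=(\xi_1,\xi_2)$ it gives
\begin{equation*}
\cT_g u(x_1,x_2,\xi_1,\xi_2) = e^{i\la x_2,\xi_2\ra}\,\cT_{\cF_2 g}\,\cF_2 u\,(x_1,\xi_2,\xi_1,-x_2).
\end{equation*}
Setting $a = \cF_2 u$ (so $u=\cF_2^{-1}a$, which is what we must produce) and $h=\cF_2 g\in\cS(\rr d)\setminus 0$, this reads
\begin{equation*}
e^{-i\la x_2,\xi_2\ra}\cT_g u(x_1,x_2,\xi_1,\xi_2) = \cT_h a(x_1,\xi_2,\xi_1,-x_2) = \cT_g^{Y} u(x,\xi),
\end{equation*}
since $\pi_{Y^\perp}x = (0,x_2)$ so that $\la\pi_{Y^\perp}x,\xi\ra=\la x_2,\xi_2\ra$. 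Thus $\cT_g^Y u$ is literally $\cT_h a$ evaluated at a linear change of the $(x,\xi)$ coordinates which swaps $x_2\leftrightarrow-\xi_2$ (and is the identity on the remaining variables).

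Next I would translate the defining inequalities \eqref{eq:Gconormdefineq} through this change of variables. Under the substitution, differentiation $\partial_{x_1}^\alpha$ becomes differentiation in the first Shubin slot, $\partial_{\xi_2}^\beta$ becomes differentiation in the third slot (up to sign, which is harmless), the weight $\eabs{(x_1,\xi_2)}$ becomes $\eabs{(x_1,\xi_2)}$ in the new coordinates — i.e. the ``$x$-type'' variables of $a$ — and $\eabs{(x_2,\xi_1)}$ becomes the ``$\xi$-type'' weight $\eabs{(\xi_1,x_2)}$ of $a$. So \eqref{eq:Gconormdefineq} for $u$ is equivalent to
\begin{equation*}
|\partial_{z_1}^\alpha\partial_{z_2}^\beta \cT_h a(z,\zeta)| \lesssim \eabs{z}^{m-|\alpha+\beta|}\eabs{\zeta}^{-N}
\end{equation*}
for all $\alpha\in\nn n$, $\beta\in\nn{d-n}$, $N\in\no$, where $z=(z_1,z_2)\in\rr d$, $\zeta\in\rr d$. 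Since arbitrary $\partial_z^\gamma$ with $\gamma\in\nn d$ is covered by choosing $\alpha,\beta$ with $\alpha=(\gamma_1,\dots,\gamma_n)$, $\beta=(\gamma_{n+1},\dots,\gamma_d)$, this is exactly \eqref{eq:Gineq} (with the $\partial_\xi^\beta$ version in Proposition~\ref{prop:symbchar} giving the full strength, though \eqref{eq:Gineq} alone already suffices), hence equivalent to $a\in\Gamma^m(\rr d)$. Finally I need to observe that $a\in\cS'(\rr d)$ automatically, because $u\in\cS'$ and $\cF_2$ is a homeomorphism of $\cS'(\rr d)$; and conversely if $a\in\Gamma^m(\rr d)$ then $u=\cF_2^{-1}a\in\cS'(\rr d)$ and the oscillatory-integral representation in the statement is just the definition of $\cF_2^{-1}$ acting in the $x_2$-variable, interpreted in $\cS'$ (or as an oscillatory integral when $\rho>0$), which matches the displayed formula up to the normalization $(2\pi)^{-(d-n)/2}$.

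The only genuine bookkeeping is to confirm that the window change $g\rightsquigarrow \cF_2 g=h$ does not affect anything: by Proposition~\ref{prop:symbchar} the condition $a\in\Gamma^m(\rr d)$ is independent of the window $h\in\cS(\rr d)\setminus 0$, and as $g$ ranges over $\cS(\rr d)\setminus 0$ so does $h=\cF_2 g$; on the $\Gamma$-conormal side one invokes the (anticipated) window-independence, or simply notes that the equivalence above holds for each individual $g$ with its partner $h=\cF_2 g$. The main obstacle, such as it is, is purely notational: keeping the roles of the four groups of variables $x_1\in\rr n$, $x_2\in\rr{d-n}$, $\xi_1\in\rr n$, $\xi_2\in\rr{d-n}$ straight through the coordinate swap $x_2\leftrightarrow-\xi_2$, and verifying that the phase factor $e^{-i\la x_2,\xi_2\ra}$ from $\cT_g^Y$ cancels precisely the phase $e^{i\la x_2,\xi_2\ra}$ produced by Lemma~\ref{lem:Fourier}. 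No analytic difficulty arises; the whole proof is a change of variables feeding into Proposition~\ref{prop:symbchar}.
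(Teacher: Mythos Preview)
Your proposal is correct and follows essentially the same route as the paper: apply Lemma~\ref{lem:Fourier} to write $e^{-i\la x_2,\xi_2\ra}\cT_g u(x,\xi)=\cT_{\cF_2 g}(\cF_2 u)(x_1,\xi_2,\xi_1,-x_2)$, set $a=\cF_2 u$, and observe that the defining estimates \eqref{eq:Gconormdefineq} for $u$ become precisely the estimates \eqref{eq:Gineq} for $a$, so that Proposition~\ref{prop:symbchar} gives the equivalence with $a\in\Gamma^m(\rr d)$. (A tiny slip: $\xi_2$ sits in the \emph{second} slot of $\cT_h a$, not the third, but your displayed inequality with $\partial_{z_1}^\alpha\partial_{z_2}^\beta$ is the correct one.)
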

\begin{proof}
Let $g\in\cS(\rr d)\setminus 0$. By Lemma \ref{lem:Fourier} we have
\begin{equation*}
\cT_g  u(x_1,x_2,\xi_1,\xi_2) = e^{i \la x_2,\xi_2 \ra} \cT_{\cF_2 g} \cF_2  u (x_1,\xi_2,\xi_1, -x_2).
\end{equation*}
Set $a=\cF_2 u\in\cS^\prime(\rr d)$. Proposition \ref{prop:symbchar} implies that $a\in \Gamma^m(\rr d)$ if and only if the estimate \eqref{eq:Gconormdefineq} hold for all for $\alpha \in \nn n$, $\beta \in \nn {d-n}$ and $N \in \no$. 
By Definition \ref{def:Gconormal} this happens exactly when $u\in I^m_\Gamma(\rr d, \rr n \times \{0\})$. 
\end{proof}

The extreme cases $n=0$ and $n=d$ yield

\begin{cor}
\label{cor:extremeconormcases}
$I^m_\Gamma(\rr d, \{0\}) = \cF \Gamma^m(\rr d)$ and $I^m_\Gamma(\rr d, \rr d) = \Gamma^m(\rr d)$. 
\end{cor}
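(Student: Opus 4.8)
The plan is to derive Corollary~\ref{cor:extremeconormcases} as a direct specialization of Lemma~\ref{lem:IGchar} to the two boundary values of the subspace dimension $n$. Recall that Lemma~\ref{lem:IGchar} identifies $I^m_\Gamma(\rr d,\rr n\times\{0\})$ with $\cF_2^{-1}\Gamma^m(\rr d)$, where $\cF_2$ is the partial Fourier transform in the variable $x_2\in\rr{d-n}$. The only thing to check is what $\cF_2$ and the identification degenerate to when $n=0$ and when $n=d$.

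First I would treat $n=d$. Here $Y=\rr d\times\{0\}=\rr d$, the complementary block $\rr{d-n}=\rr 0$ is trivial, and $\cF_2$ is the identity operator (this is exactly the extreme case noted in the remark after Lemma~\ref{lem:Fourier}). Lemma~\ref{lem:IGchar} then says $u\in I^m_\Gamma(\rr d,\rr d)$ if and only if $u=\cF_2^{-1}a=a$ for some $a\in\Gamma^m(\rr d)$, i.e.\ $I^m_\Gamma(\rr d,\rr d)=\Gamma^m(\rr d)$. This also recovers the content of the Example identifying $I^m_\Gamma(\rr{2d},\Delta)$ with Shubin kernels, in the special case $Y=$ everything. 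Next I would treat $n=0$. Now $Y=\{0\}$, the variable $x_1\in\rr 0$ is trivial, $\rr{d-n}=\rr d$, and $\cF_2=\cF$ is the full Fourier transform. Lemma~\ref{lem:IGchar} gives $u\in I^m_\Gamma(\rr d,\{0\})$ if and only if $u=\cF^{-1}a$ with $a\in\Gamma^m(\rr d)$; since $\Gamma^m(\rr d)$ is invariant under $a\mapsto\check a$ (reflection) and $\cF^{-1}=\cF\circ(\text{reflection})$ up to nothing more than a coordinate change $a(\cdot)\mapsto a(-\cdot)$, which manifestly preserves the Shubin estimates \eqref{eq:shubinineq}, we may equally write $u=\cF b$ with $b\in\Gamma^m(\rr d)$. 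Hence $I^m_\Gamma(\rr d,\{0\})=\cF\Gamma^m(\rr d)=\cF^{-1}\Gamma^m(\rr d)$.

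I expect no serious obstacle here: the corollary is a bookkeeping consequence of Lemma~\ref{lem:IGchar}, and the only mild subtlety is the cosmetic one of replacing $\cF^{-1}$ by $\cF$ in the $n=0$ case, which is justified because the Shubin class $\Gamma^m$ is invariant under the reflection $z\mapsto-z$. One should also note for completeness that the definition of $I^m_\Gamma$ is independent of the window $g$ and of the transversal subspace $V$ (the latter by the remark following Definition~\ref{def:Gconormal}, the former to be established in Corollary~\ref{cor:windowindep2}), so the two identities are genuinely statements about the spaces themselves and not about a particular presentation.
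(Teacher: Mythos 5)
Your proof is correct and follows exactly the route the paper intends: the corollary is stated as an immediate specialization of Lemma~\ref{lem:IGchar} to $n=0$ and $n=d$, and you fill in the expected bookkeeping. The one point you make explicit that the paper leaves tacit, namely replacing $\cF^{-1}\Gamma^m(\rr d)$ by $\cF\Gamma^m(\rr d)$ via the reflection invariance of the Shubin estimates \eqref{eq:shubinineq}, is a correct and worthwhile clarification.
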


The proof of Lemma \ref{lem:IGchar} gives the following byproduct. 

\begin{cor}\label{cor:windowindep1}
The topology on $I^m_\Gamma(\rr d, \rr n \times \{0\})$ does not depend on $g$. 
\end{cor}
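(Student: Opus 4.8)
The plan is to trace carefully through the proof of Lemma~\ref{lem:IGchar} and observe that the equivalences it establishes are in fact equivalences of seminorm systems, not merely of membership. First I would recall the two ingredients used there: the identity $\cT_g u(x_1,x_2,\xi_1,\xi_2) = e^{i \la x_2,\xi_2 \ra} \cT_{\cF_2 g} \cF_2 u (x_1,\xi_2,\xi_1, -x_2)$ from Lemma~\ref{lem:Fourier}, and the equivalence (from Proposition~\ref{prop:symbchar} and the remark following it) between $a = \cF_2 u \in \Gamma^m(\rr d)$ and the estimates \eqref{eq:Gconormdefineq}. The point is that Lemma~\ref{lem:Fourier} is an exact pointwise identity with no loss, so the quantity $e^{-i \la x_2,\xi_2 \ra}\cT_g u(x_1,x_2,\xi_1,\xi_2)$ equals $\cT_{\cF_2 g} a(x_1,\xi_2,\xi_1,-x_2)$ on the nose, and differentiating in $x_1$ and $\xi_2$ on the left corresponds to differentiating in the first and second slots of $\cT_{\cF_2 g} a$ on the right. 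Hence the best constant in \eqref{eq:Gconormdefineq} for given $\alpha,\beta,N$ equals the best constant in the estimate \eqref{eq:Gineq1} for $\cT_{\cF_2 g} a$ with the same multi-indices and decay order, after the affine change of variables $(x_1,x_2,\xi_1,\xi_2) \mapsto (x_1,\xi_2,\xi_1,-x_2)$, which is an isometry of $\rr{2d}$ and so preserves all the weights $\eabs{\cdot}$.

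Next I would invoke the remark after Proposition~\ref{prop:symbchar}, which states that the best constants in \eqref{eq:Gineq} (equivalently \eqref{eq:Gineq1}, for the window $\cF_2 g$) maximized over $|\alpha|\leqs M$ give a family of seminorms $\rho^m_{\cF_2 g, M, N}$ on $\Gamma^m(\rr d)$ equivalent to the standard Shubin seminorms $\rho^m_M$, with equivalence constants independent of the distribution. Combining this with the isometric change of variables above, the seminorms defining the topology on $I^m_\Gamma(\rr d, \rr n \times \{0\})$ for the window $g$ are equivalent to the seminorms $\rho^m_M$ pulled back along $u \mapsto \cF_2 u$. Since the latter family does not involve $g$ at all, the topology on $I^m_\Gamma(\rr d, \rr n \times \{0\})$ coincides, up to equivalence of seminorms, with a $g$-independent topology; in particular, for any two windows $g_1, g_2 \in \cS(\rr d)\setminus 0$ the corresponding seminorm systems are mutually equivalent, which is the assertion.

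There is essentially no hard analytic step here: the real content was already done in Lemma~\ref{lem:IGchar} and in the uniformity built into the remark after Proposition~\ref{prop:symbchar}. The only point requiring a little care is to make sure that the window appearing on the right-hand side is $\cF_2 g$ and that, as $g$ ranges over $\cS(\rr d)\setminus 0$, so does $\cF_2 g$, so that Proposition~\ref{prop:symbchar} (which holds for \emph{all} windows) does apply and delivers the $g$-independent comparison. I would also remark in passing that, since $I^m_\Gamma$ with window $g$ has seminorms equivalent to those of $\Gamma^m(\rr d)$ transported by the (topological) isomorphism $\cF_2$, the space $I^m_\Gamma(\rr d, \rr n \times \{0\})$ inherits the Fr\'echet property, though this is not strictly needed for the corollary as stated.
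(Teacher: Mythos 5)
Your proof is correct and follows the same route the paper has in mind: the paper simply records that ``the proof of Lemma~\ref{lem:IGchar} gives the following byproduct,'' and you have spelled out why, namely that the identity from Lemma~\ref{lem:Fourier} turns the defining estimates \eqref{eq:Gconormdefineq} with window $g$ exactly into the estimates \eqref{eq:Gineq} for $a=\cF_2 u$ with window $\cF_2 g$, whose best constants are (by the remark following Proposition~\ref{prop:symbchar}) equivalent, uniformly, to the window-free Shubin seminorms $\rho^m_M$. The only cosmetic quibble is that the relevant reference for the seminorm equivalence is \eqref{eq:Gineq} rather than \eqref{eq:Gineq1}, since no $\xi$-derivatives of $\cT_{\cF_2 g}a$ appear after the change of variables; this does not affect the argument.
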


The next result treats how $\Gamma$-conormal distributions behave under orthogonal coordinate transformations.

\begin{lem}\label{lem:conormalcoord}
If $Y \subseteq \rr d$ is an $n$-dimensional linear subspace, $0 \leqs n \leqs d$, 
and $B \in \On(d)$ then $B^*: I_\Gamma^m(\rr{d},Y) \rightarrow I_\Gamma^m(\rr d, B^t Y)$ is a homeomorphism.
\end{lem}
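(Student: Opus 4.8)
The plan is to reduce everything to the definition via the FBI transform and the behavior of $\cT_g$ under coordinate changes recorded in Table~\ref{tab:meta} and Lemma~\ref{lem:shift}. First I would unwind what the statement asks: if $u \in I_\Gamma^m(\rr d,Y)$ we must show $B^* u = u \circ B \in I_\Gamma^m(\rr d,B^t Y)$, and that the map is continuous with continuous inverse $(B^t)^* = (B^{-1})^*$ (since $B \in \On(d)$, $B^{-1}=B^t$, so the inverse is of the same type, and it suffices to prove continuity in one direction). The key input is the second identity of Lemma~\ref{lem:shift}: for $A \in \GL(d,\ro)$,
\begin{equation*}
\cT_g(|A|^{1/2} A^* u)(x,\xi) = |A|^{-1/2} \cT_{A^{-*}g} u(Ax, A^{-t}\xi).
\end{equation*}
Taking $A = B \in \On(d)$ we have $|B|=\pm 1$, $B^{-t}=B$, $A^{-*}g = B^* g \in \cS(\rr d)\setminus 0$, so
\begin{equation*}
\cT_g(B^* u)(x,\xi) = \pm\, \cT_{B^* g} u(Bx, B\xi),
\end{equation*}
i.e. up to an irrelevant sign and a change of window, $\cT_g(B^*u)$ is just $\cT_{B^*g}u$ precomposed with the orthogonal map $(x,\xi)\mapsto(Bx,B\xi)$ on $T^*\rr d$.

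Next I would track the twisting factor $e^{-i\la \pi_{Y^\perp}x,\xi\ra}$ in the definition of $\cT_g^Y$. Because $B$ is orthogonal, $\pi_{B^t Y} = B^t \pi_Y B$ and $B^t Y^\perp = (B^t Y)^\perp$, so $\la \pi_{(B^t Y)^\perp} x, \xi\ra = \la \pi_{Y^\perp} Bx, B\xi\ra$; hence the twisting phase is exactly intertwined by $(x,\xi)\mapsto(Bx,B\xi)$ as well, and we get
\begin{equation*}
\cT_g^{B^t Y}(B^* u)(x,\xi) = \pm\, (\cT_{B^* g}^{Y} u)(Bx, B\xi).
\end{equation*}
Then I would handle the differential operators and the distance functions. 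A vector field $L=\la b,\nabla_{x,\xi}\ra$ with $b \in N(B^t Y) = B^t Y \times B^t Y^\perp$ pulls back under $(x,\xi)\mapsto (Bx,B\xi)$ to $\la (B\oplus B)b, \nabla_{x,\xi}\ra$ with $(B\oplus B)b \in Y\times Y^\perp = N(Y)$, and products of $k$ such operators go to products of $k$ such operators, with the correspondence of bases being a linear isomorphism. Since $B\oplus B$ is an isometry of $T^*\rr d$, it maps $V$ to a complement $V'$ of $N(Y)$, maps $N(B^t Y)$ onto $N(Y)$, and preserves all Euclidean distances: $\dist((Bx,B\xi),V') = \dist((x,\xi),V)$ and likewise for $N(Y)$. (One may also simply use $V = N(Y)^\perp$ throughout, which is $B\oplus B$-invariant modulo the coordinate change, by the $V$-independence already noted after Definition~\ref{def:Gconormal}.) Therefore the estimate \eqref{eq:conormchar} for $\cT_{B^*g}^Y u$ at $(Bx,B\xi)$ is literally the estimate \eqref{eq:conormchar} for $\cT_g^{B^t Y}(B^* u)$ at $(x,\xi)$, with matching exponents $m-k$ and $-N$; and the window-independence (Corollary~\ref{cor:windowindep2}, or for this argument one may also first rotate $Y$ to $\rr n\times\{0\}$ and invoke Corollary~\ref{cor:windowindep1}) lets us pass from the window $B^*g$ back to $g$. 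This shows $B^* u \in I_\Gamma^m(\rr d, B^t Y)$, and because the correspondence of seminorms is via a fixed linear change of basis of $N(Y)$ and a fixed isometry, it is bi-Lipschitz on the seminorm level, so $B^*$ is continuous; applying the same to $B^t$ gives a continuous inverse, hence a homeomorphism.

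\textbf{Main obstacle.} The computational content is routine — the only genuinely delicate points are bookkeeping: verifying that the phase $e^{-i\la\pi_{Y^\perp}x,\xi\ra}$ transforms correctly (which uses orthogonality of $B$ in an essential way, via $\pi_{B^t Y}=B^t\pi_Y B$), and checking that the change of window introduced by Lemma~\ref{lem:shift} is harmless. The latter is not entirely trivial because the definition of $I_\Gamma^m$ fixes a window $g$; I expect the cleanest route is to invoke the window-independence of the class and its topology (which in the excerpt is Corollary~\ref{cor:windowindep2}, or can be bootstrapped from Corollary~\ref{cor:windowindep1} after reducing to the coordinate model $Y=\rr n\times\{0\}$ via a preliminary rotation). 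If one wants to avoid forward-referencing Corollary~\ref{cor:windowindep2}, an alternative is to first prove this lemma only in the coordinate model using Lemma~\ref{lem:IGchar} and Proposition~\ref{prop:symbchar} — both of which are window-independent — and then note that a general $Y$ is an orthogonal image of the model, so the general case follows by composing two instances of the model-case statement.
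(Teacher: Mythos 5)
Your proposal is correct and follows essentially the same route as the paper: apply the coordinate-change identity from Lemma~\ref{lem:shift} to get $\cT_g(B^*u)(x,\xi)=\pm\,\cT_h u(Bx,B\xi)$ with a transformed window $h$, check that the twist $e^{-i\la\pi_{Y^\perp}x,\xi\ra}$ and the vector fields $L_j$ are intertwined via orthogonality of $B$, and note that distances to $N(Y)$ are preserved. One small slip: with $A=B$ the transformed window is $A^{-*}g=(B^{-1})^*g=(B^t)^*g=g\circ B^t$, not $B^*g$ as you wrote, though this is harmless; also, the paper avoids the circularity you worry about more simply by not invoking Corollary~\ref{cor:windowindep2} at all, instead stating continuity with the topology on the domain defined via the window $h=(B^t)^*g$ and on the codomain via $g$, and deriving window-independence afterwards as a consequence.
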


\begin{proof}
Let $g \in \cS(\rr d) \setminus 0$. We have 
\begin{equation*}
\cT_g (B^*u) (x,\xi) = \cT_h u (B x, B \xi) 
\end{equation*}
where $h = (B^t)^*g \in \cS(\rr d)$. 
From this and $\pi_{(B^t Y)^\perp} = B^t \pi_{Y^\perp} B$ we obtain 
\begin{equation*}
\cT_g^{B^t Y} (B^*u) (x,\xi) = \cT_h^Y u (B x, B \xi) 
\end{equation*}
so $B^*u \in I_\Gamma^m(\rr d, B^t Y)$ follows from Definition \ref{def:Gconormal}, $N(B^t Y) = B^t Y \times B^t Y^\perp$ and 
\begin{equation*}
\dist( (Bx,B\xi),N(Y)) = \dist((x,\xi),N(B^t Y)), \qquad (x,\xi) \in T^* \rr d. 
\end{equation*}
It also follows that the map $u \rightarrow B^*u$ is continuous from $I_\Gamma^m(\rr d,Y)$ to $I_\Gamma^m(\rr d,B^tY)$ when the topologies for $I_\Gamma^m(\rr d,Y)$ and $I_\Gamma^m(\rr d,B^tY)$
are defined by means of $h \in \cS$ and $g \in \cS$, respectively. 
\end{proof}

If we combine Lemma \ref{lem:conormalcoord} with Corollary \ref{cor:windowindep1} then we obtain the following generalization of the latter result. 

\begin{cor}\label{cor:windowindep2}
If $Y \subseteq \rr d$ is an $n$-dimensional linear subspace, $0 \leqs n \leqs d$, 
then the topology on $I^m_\Gamma(\rr d, Y)$ does not depend on $g$. 
\end{cor}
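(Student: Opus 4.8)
The plan is to reduce the general statement to the special case already handled in Corollary~\ref{cor:windowindep1}, namely $Y = \rr n \times \{0\}$, via a suitable orthogonal change of coordinates. Given an $n$-dimensional subspace $Y \subseteq \rr d$, choose $B \in \On(d)$ such that $B^t(\rr n \times \{0\}) = Y$, equivalently $B(Y) = \rr n \times \{0\}$; such a $B$ exists because $\On(d)$ acts transitively on $n$-dimensional subspaces. Then $(B^{-1})^* = (B^t)^*$ maps $Y$ to $\rr n \times \{0\}$ under the pullback action on subspaces considered in Lemma~\ref{lem:conormalcoord} (with the roles of $B$ and $B^t$ interchanged as needed).

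The key steps are as follows. First, by Lemma~\ref{lem:conormalcoord} applied with the orthogonal matrix $B^t$ (so that $(B^t)^t Y = B Y = \rr n \times \{0\}$), the pullback $(B^t)^*$ is a homeomorphism from $I_\Gamma^m(\rr d, Y)$ onto $I_\Gamma^m(\rr d, \rr n \times \{0\})$, and crucially the proof of that lemma shows it intertwines the topology on the source defined using a window $g$ with the topology on the target defined using the window $h = B^* g = (B^{-t})^* g$. Second, by Corollary~\ref{cor:windowindep1}, the topology on $I_\Gamma^m(\rr d, \rr n \times \{0\})$ is independent of the window, so in particular the topology defined via $h$ coincides with the one defined via any other window $h'$. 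Third, pulling back along $((B^t)^*)^{-1} = B^*$, which is again a homeomorphism by Lemma~\ref{lem:conormalcoord}, we transport this window-independence back to $I_\Gamma^m(\rr d, Y)$: the topology on $I_\Gamma^m(\rr d, Y)$ defined via $g$ equals the one defined via any $g'$, since both correspond under $(B^t)^*$ to window-defined topologies on $I_\Gamma^m(\rr d, \rr n \times \{0\})$ that agree by Corollary~\ref{cor:windowindep1}.

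The one point requiring a little care is the bookkeeping of how the window transforms under the pullback and the precise statement that Lemma~\ref{lem:conormalcoord} is a homeomorphism \emph{for matched pairs of windows}, not merely a bijection; this is exactly what the last sentence of the proof of Lemma~\ref{lem:conormalcoord} records, so no new work is needed. I expect this compatibility of windows under the coordinate change to be the only genuine subtlety — everything else is a formal composition of homeomorphisms. In particular, since the map $g \mapsto (B^{-t})^* g$ is a bijection of $\cS(\rr d) \setminus 0$ onto itself, ranging over all $g$ on the $Y$-side corresponds to ranging over all windows on the $\rr n \times \{0\}$-side, which is precisely the hypothesis under which Corollary~\ref{cor:windowindep1} applies. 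This completes the argument.
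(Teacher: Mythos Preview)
Your proposal is correct and follows exactly the approach the paper indicates: the corollary is obtained by combining Lemma~\ref{lem:conormalcoord} (the orthogonal coordinate change, with its window bookkeeping) with Corollary~\ref{cor:windowindep1} (window-independence in the model case $Y=\rr n\times\{0\}$). Your careful tracking of how the window transforms under $(B^t)^*$ and the observation that $g\mapsto (B^{-t})^*g$ is a bijection on $\cS(\rr d)\setminus 0$ simply make explicit what the paper leaves implicit.
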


We can also extract the following generalization of Lemma \ref{lem:IGchar} from Lemma \ref{lem:conormalcoord}.
\begin{prop}
\label{prop:IGchar}
Let $0 \leqs n \leqs d$ and let $Y \subseteq \rr d$ be an $n$-dimensional linear subspace. 
Then $u \in \cS'(\rr d)$ satisfies $u \in I^m_\Gamma(\rr d,Y)$ if and only if 
\begin{equation}\label{uoscint}
u(x) = \int_{\rr {d-n}} e^{i \la M_2^t x, \theta \ra} a(M_1^t x, \theta) \, \dd \theta
\end{equation}
for some $a \in \Gamma^m(\rr d)$, where $M_2 \in \M_{d \times (d- n)}( \ro)$ and $M_1 \in \M_{d \times n}( \ro)$ are matrices such that 
$Y = \Ker M_2^t$ and 
$U = [M_1 \ M_2] \in \GL(d,\ro)$. 
\end{prop}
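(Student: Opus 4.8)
The plan is to reduce the statement to \lemref{lem:IGchar} by an orthogonal change of coordinates followed by an identification of the oscillatory integral. First I would choose, using the polar (or $QR$) decomposition, a factorization that separates out the orthogonal part: since $Y$ is an $n$-dimensional subspace, pick $B \in \On(d)$ with $B^t Y = \rr n \times \{0\}$, equivalently $B Y = \rr n \times \{0\}$ after possibly adjusting. By \lemref{lem:conormalcoord}, $u \in I^m_\Gamma(\rr d, Y)$ if and only if $(B^{-1})^* u = (B^t)^* u \in I^m_\Gamma(\rr d, \rr n \times \{0\})$. Applying \lemref{lem:IGchar} to $(B^t)^* u$, this holds if and only if
\begin{equation*}
(B^t)^* u (y) = u(B^t y) = (2\pi)^{-(d-n)/2}\int_{\rr {d-n}} e^{i \la y_2, \theta \ra} \tilde a(y_1,\theta) \, \dd \theta
\end{equation*}
for some $\tilde a \in \Gamma^m(\rr d)$, where $y = (y_1,y_2)$, $y_1 \in \rr n$, $y_2 \in \rr {d-n}$.

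Next I would substitute $x = B^t y$, i.e. $y = Bx$, and read off $y_1 = P_1 B x$, $y_2 = P_2 B x$ where $P_1, P_2$ are the coordinate projections onto the first $n$ and last $d-n$ components. Writing $M_1^t = P_1 B$ and $M_2^t = P_2 B$, one has $[M_1 \ M_2] = B^t \in \GL(d,\ro)$ (up to harmless absorption of the constant $(2\pi)^{-(d-n)/2}$ into $a$), and $Y = B^t(\rr n \times \{0\}) = \Ker(P_2 B) = \Ker M_2^t$. This produces the representation \eqref{uoscint} with $a = \tilde a \in \Gamma^m(\rr d)$. For the converse, given any such representation with matrices $M_1, M_2$ and $U = [M_1 \ M_2] \in \GL(d,\ro)$ with $Y = \Ker M_2^t$, I would need to reverse the argument: here the subtlety is that $U$ need not be orthogonal, so I cannot directly invoke \lemref{lem:conormalcoord}. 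The remedy is to write $U = B R$ with $B \in \On(d)$ and $R$ invertible (polar decomposition or Gram--Schmidt), absorb the action of $R$ into a new Shubin symbol via the identity $\Gamma^m(\rr d) \ni a \mapsto a \circ (\text{linear map}) \in \Gamma^m(\rr d)$ (linear changes of variables preserve $\Gamma^m$, being bi-Lipschitz away from the origin in the relevant sense, because $\eabs{Rz} \asymp \eabs{z}$), together with the change of variables $\theta \mapsto$ linear image of $\theta$ in the integral; this recasts \eqref{uoscint} in a form to which \lemref{lem:IGchar} and \lemref{lem:conormalcoord} apply with an orthogonal matrix.

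Concretely, the key steps in order are: (1) produce $B \in \On(d)$ with $B Y = \rr n \times \{0\}$; (2) apply \lemref{lem:conormalcoord} to transfer the $\Gamma$-conormality of $u$ to $(B^t)^*u$; (3) apply \lemref{lem:IGchar} to get the partial-Fourier representation of $(B^t)^*u$; (4) undo the coordinate change to obtain \eqref{uoscint}, identifying $M_1^t, M_2^t$ as the block rows of $B$ and checking $Y = \Ker M_2^t$; (5) for the converse, given arbitrary $U = [M_1 \ M_2]$, factor $U = BR$, absorb $R$ into the symbol and into a linear substitution in $\theta$, and reduce to the orthogonal case. I expect step (5), the converse with a non-orthogonal $U$, to be the main obstacle: one must verify carefully that precomposing a Shubin symbol with the invertible linear map coming from $R$ stays in $\Gamma^m(\rr d)$ with controlled seminorms, and that the linear substitution in the $\theta$-integral (which mixes the $\theta$ variables with $M_1^t x$) still yields an integral of the required shape; keeping track of Jacobian factors and the fact that $\eabs{\cdot}$ is equivalent under invertible linear maps is the technical heart. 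The homeomorphism statement, if desired, follows by tracking that all the maps involved (the pullback $B^*$, the partial Fourier transform, the linear substitutions) are continuous on the respective Fréchet/seminorm topologies, using \cororef{cor:windowindep2} to fix the window.
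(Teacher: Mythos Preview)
Your proposal is correct and follows essentially the same route as the paper: reduce to the model subspace $\rr n\times\{0\}$ via an orthogonal coordinate change (Lemma~\ref{lem:conormalcoord}), invoke Lemma~\ref{lem:IGchar}, and for the converse with a non-orthogonal $U$ absorb the residual invertible factor into the symbol using the invariance of $\Gamma^m$ under linear coordinate transformations together with a linear substitution in $\theta$. Your caution about step~(5) is somewhat overstated --- the paper dispatches it in a single sentence --- but the plan and the key lemmas invoked are identical.
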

\begin{proof}
If $u \in I^m_\Gamma(\rr d,Y)$ then we can pick $U = [M_1 \ M_2] \in \On(d)$ 
where $M_1 \in \M_{d \times n}(\ro)$ and $M_2 \in \M_{d \times (d-n)}(\ro)$ such that $Y = \Ker M_2^t$, 
which implies that $U^t Y = \rr n \times \{0\}$. 
By Lemma \ref{lem:conormalcoord} we have $U^* u \in I^m_\Gamma(\rr d,\rr n \times \{0\})$, 
and  \eqref{uoscint} with $a \in \Gamma^m(\rr d)$ is then a consequence of Lemma \ref{lem:IGchar}. 

Suppose on the other hand that \eqref{uoscint} holds for $a \in \Gamma^m(\rr d)$ and $U = [M_1 \ M_2] \in \GL(d,\ro)$. Set $Y = \Ker M_2^t$. 
We may assume that $U = [M_1 \ M_2] \in \On(d)$, after modifying $a \in \Gamma^m(\rr d)$ by means of a linear invertible coordinate transformation, which is permitted since $\Gamma^m$ is invariant under such transformations. 
By Lemma \ref{lem:IGchar} we have $U^* u \in I^m_\Gamma(\rr d,\rr n \times \{0\})$, 
and Lemma \ref{lem:conormalcoord} then gives $u \in I^m_\Gamma(\rr d,Y)$.
\end{proof}

Since 
\begin{equation*}
\bigcap_{m \in \ro} \Gamma^m(\rr d) = \cS(\rr d)
\end{equation*}
we have the following consequence. 

\begin{cor}
If $0 \leqs n \leqs d$ and $Y \subseteq \rr d$ is an $n$-dimensional linear subspace then 
\begin{equation*}
\cS(\rr d) \subseteq I^m_\Gamma(\rr d,Y). 
\end{equation*}
\end{cor}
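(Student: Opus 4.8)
The plan is to deduce the inclusion from the oscillatory-integral characterization in Proposition~\ref{prop:IGchar} together with the identity $\cS(\rr d)=\bigcap_{m'\in\ro}\Gamma^{m'}(\rr d)$ recalled just above the statement. Fix an $n$-dimensional subspace $Y\subseteq\rr d$, a function $u\in\cS(\rr d)$, and the order $m\in\ro$. First I would choose a matrix $U=[M_1\ M_2]\in\On(d)$ whose first $n$ columns form an orthonormal basis of $Y$ and whose last $d-n$ columns form an orthonormal basis of $Y^\perp$; then $Y=\Ker M_2^t$ and $U\in\GL(d,\ro)$, so $U$ is an admissible choice of matrices in Proposition~\ref{prop:IGchar}.

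Next I would exhibit the symbol. Since precomposition with the invertible linear map $U$ preserves the Schwartz class, $U^*u\in\cS(\rr d)$, and hence so does its partial Fourier transform in the last $d-n$ variables. Taking $a$ to be the appropriate constant multiple of $\cF_2(U^*u)$ (the constant depending only on $d-n$ and on the normalization of $\cF$), Fourier inversion in the last $d-n$ variables produces exactly the representation \eqref{uoscint}, that is $u(x)=\int_{\rr{d-n}}e^{i\la M_2^tx,\theta\ra}a(M_1^tx,\theta)\,\dd\theta$, with $a\in\cS(\rr d)$.

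It then remains to observe that $a$ qualifies as a symbol of order $m$. This is immediate from $\cS(\rr d)=\bigcap_{m'\in\ro}\Gamma^{m'}(\rr d)\subseteq\Gamma^m(\rr d)$: a Schwartz function satisfies $|\partial^\alpha a(z)|\lesssim\eabs z^{m-|\alpha|}$ for every $\alpha\in\nn d$ trivially, since all its derivatives decay faster than any polynomial. Thus \eqref{uoscint} holds with $a\in\Gamma^m(\rr d)$, and Proposition~\ref{prop:IGchar} yields $u\in I^m_\Gamma(\rr d,Y)$. As $u\in\cS(\rr d)$ was arbitrary, $\cS(\rr d)\subseteq I^m_\Gamma(\rr d,Y)$.

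There is no genuine obstacle here: the only points needing attention are routine, namely that $U^*$ and $\cF_2$ preserve $\cS(\rr d)$, and that the multiplicative constant relating $a$ to $\cF_2(U^*u)$ is chosen to match the normalization of $\cF$ in \eqref{uoscint}. As a self-contained alternative one can instead verify Definition~\ref{def:Gconormal} directly: combining \eqref{eq:diffident} and \eqref{eq:diffidentstar} with Proposition~\ref{prop:Swdchar} shows that every $\partial_x^\alpha\partial_\xi^\beta\cT_gu$ decays faster than any negative power of $\eabs{(x,\xi)}$, the quadratic phase defining $\cT_g^Y u$ contributes only polynomial factors, and such super-polynomial decay dominates the right-hand side of \eqref{eq:conormchar} for every admissible $k,N$, on splitting into the cases $m-k\leqs0$ and $m-k>0$.
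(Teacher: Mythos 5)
Your proof is correct and takes essentially the route the paper intends: the corollary is meant to be read off from Proposition~\ref{prop:IGchar} by noting that $a$ (a constant multiple of $\cF_2(U^*u)$) is again Schwartz, hence lies in $\Gamma^m(\rr d)$ for every $m$, so the representation \eqref{uoscint} holds. The self-contained alternative you sketch via Proposition~\ref{prop:Swdchar} is also sound, though unnecessary once Proposition~\ref{prop:IGchar} is in hand.
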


We also obtain a generalization of Lemma \ref{lem:conormalcoord}. 

\begin{cor}
\label{cor:coordchange}
If $Y \subseteq \rr d$ is an $n$-dimensional linear subspace, $0 \leqs n \leqs d$, 
and $B \in \GL(d,\ro)$ then $B^*: I_\Gamma^m(\rr{d},Y) \rightarrow I_\Gamma^m(\rr d, B^{-1} Y)$ is a homeomorphism.
\end{cor}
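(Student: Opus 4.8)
The plan is to reduce the general linear change of variables to the orthogonal case already handled in Lemma \ref{lem:conormalcoord}, using the polar-type decomposition implicit in Proposition \ref{prop:IGchar}. First I would invoke Proposition \ref{prop:IGchar}: a distribution $u$ lies in $I^m_\Gamma(\rr d,Y)$ if and only if it has the oscillatory integral representation \eqref{uoscint} for some $a \in \Gamma^m(\rr d)$ with $U = [M_1 \ M_2] \in \GL(d,\ro)$ and $Y = \Ker M_2^t$. The key observation is that this representation transforms covariantly under pullback: if $B \in \GL(d,\ro)$, then substituting $x \mapsto Bx$ into \eqref{uoscint} gives
\begin{equation*}
B^*u(x) = \int_{\rr {d-n}} e^{i \la M_2^t B x, \theta \ra} a(M_1^t B x, \theta) \, \dd \theta = \int_{\rr {d-n}} e^{i \la \widetilde M_2^t x, \theta \ra} \widetilde a(\widetilde M_1^t x, \theta) \, \dd \theta,
\end{equation*}
where $\widetilde M_j = B^t M_j$ and $\widetilde a = a$, so that $[\widetilde M_1 \ \widetilde M_2] = B^t U \in \GL(d,\ro)$ and $\Ker \widetilde M_2^t = \Ker (M_2^t B) = B^{-1} \Ker M_2^t = B^{-1} Y$. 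Hence, again by Proposition \ref{prop:IGchar}, $B^*u \in I^m_\Gamma(\rr d, B^{-1}Y)$, which proves that $B^*$ maps $I_\Gamma^m(\rr{d},Y)$ into $I_\Gamma^m(\rr d, B^{-1}Y)$.

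For bijectivity, I would note that the same argument applied to $B^{-1} \in \GL(d,\ro)$ gives a map $(B^{-1})^* = (B^*)^{-1}$ from $I_\Gamma^m(\rr d, B^{-1}Y)$ back to $I_\Gamma^m(\rr d, Y)$, and these are mutually inverse since $(B^*)^{-1}(B^*u) = u$ on $\cS'(\rr d)$. For continuity, one option is to trace the argument of Lemma \ref{lem:conormalcoord} in the non-orthogonal setting: the identity $\cT_g(B^*u)(x,\xi) = |B|^{-1/2} \cT_h u(B^{-t}x, B^t\xi)$ (or the appropriate variant from Table \ref{tab:meta}, with $h$ a suitable window) shows that $L_1 \cdots L_k \cT_g^{B^{-1}Y}(B^*u)$ at $(x,\xi)$ is, up to harmless factors, a constant-coefficient differential expression in $\cT_h^Y u$ evaluated at the linearly transformed point; combined with the fact that the two distance functions $\dist(\cdot, N(Y))$ and $\dist(\cdot, N(B^{-1}Y))$ are equivalent (up to linear change of argument) and that the growth classes $\eabs{\cdot}^s$ are stable under linear substitutions via Peetre's inequality \eqref{eq:Peetre}, this yields the seminorm estimates in both directions, after invoking Corollary \ref{cor:windowindep2} to pass between window choices. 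Alternatively, and more cleanly, I would observe that continuity is already encoded in Proposition \ref{prop:IGchar} together with the fact that the modification $a \mapsto a \circ (\text{linear map})$ is a homeomorphism on $\Gamma^m(\rr d)$; the topology of $I_\Gamma^m$ can be compared to that of $\Gamma^m$ through the oscillatory representation, and $B^*$ acts on the $\Gamma^m$ side by a continuous linear substitution in the symbol.

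The main obstacle I anticipate is the continuity (homeomorphism) part rather than the set-theoretic bijectivity: unlike in Lemma \ref{lem:conormalcoord}, a general $B \in \GL(d,\ro)$ distorts the Euclidean metric, so the distances $\dist(\cdot, V)$ and $\dist(\cdot, N(Y))$ appearing in the defining estimates \eqref{eq:conormchar} are only comparable up to multiplicative constants, and one must check carefully that mixing the $\eabs{x_1,\xi_2}^{m-k}$ factor (which can grow) with the decomposition $B = O_1 D O_2$ into orthogonal and diagonal parts does not spoil the orders; the cleanest route is to factor $B$ through an orthogonal matrix and absorb the residual linear distortion into the symbol $a$ on the $\Gamma^m$ side, where invariance under $\GL(d,\ro)$ is immediate, rather than fighting the distances directly.
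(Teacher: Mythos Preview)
Your proposal is correct and arrives at essentially the same argument as the paper. For bijectivity you both invoke Proposition~\ref{prop:IGchar}, and for continuity the ``cleanest route'' you settle on in the last paragraph --- factor $B$ as an orthogonal matrix times a residual, handle the orthogonal part by Lemma~\ref{lem:conormalcoord}, and treat the residual by the $\GL(d,\ro)$-invariance of $\Gamma^m$ --- is exactly the paper's strategy: the paper writes the singular value decomposition $B = U\Sigma V^t$, uses Lemma~\ref{lem:conormalcoord} for $U$ and $V^t$, and checks continuity of $\Sigma^*$ on $I^m_\Gamma(\rr d,\rr n\times\{0\})$ directly via the estimates~\eqref{eq:Gconormdefineq}. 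Your variant of this last step (transfer to $\Gamma^m$ through Lemma~\ref{lem:IGchar} and use that linear substitutions are homeomorphisms there) is equivalent and arguably tidier. One small slip: the transformation formula you quote for $\cT_g(B^*u)$ has the arguments and determinant power off --- from Lemma~\ref{lem:shift} it should read $\cT_g(B^*u)(x,\xi) = |B|^{-1}\cT_{(B^{-1})^*g}u(Bx,B^{-t}\xi)$ --- but since you flagged it as ``the appropriate variant from Table~\ref{tab:meta}'' and do not actually rely on it, this does not affect the argument.
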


\begin{proof}
By Proposition \ref{prop:IGchar} we have $u \in I^m_\Gamma(\rr d,Y)$ if and only if $B^*u \in I^m_\Gamma(\rr d,B^{-1}Y)$.
It remains to show that $B^*$ is continuous. 
By Lemma \ref{lem:conormalcoord} we may replace $Y$ with any $n$-dimensional linear subspace.
Using the singular value decomposition $B = U \Sigma V^t$, where $U,V \in \On(d)$ and $\Sigma$ is diagonal with positive entries, 
the proof of the continuity of $B^*$ reduces, again using Lemma \ref{lem:conormalcoord}, to a proof of the continuity of 
\begin{equation*}
\Sigma^*: I^m_\Gamma(\rr d,\rr n \times \{0\}) \rightarrow I^m_\Gamma(\rr d,\rr n \times \{0\}). 
\end{equation*}
The latter continuity follows straightforwardly using the estimates
\eqref{eq:Gconormdefineq}. 
\end{proof}

By Lemma \ref{lem:Fourier}
\begin{equation*}
\cT_{\wh g} \wh  u (x,\xi)
= e^{i \la x,\xi \ra}  \cT_g  u(-\xi,x) 
\end{equation*}
which gives 
\begin{align*}
\cT_{\wh g}^{Y^\perp} \wh  u (x,\xi)
= e^{i (\la x,\xi \ra - \la \pi_Y x,\xi \ra)}  \cT_g  u(-\xi,x) 
= \cT_g^{Y}  u(-\xi,x). 
\end{align*}
Thus it follows from Definition \ref{def:Gconormal} that 
$\cF: I^m_\Gamma(\rr{d},Y) \rightarrow I^m_\Gamma(\rr{d},Y^\perp)$ continuously. 

\begin{prop}
\label{prop:FourierImg}
If $Y \subseteq \rr d$ is an $n$-dimensional linear subspace, $0 \leqs n \leqs d$, 
then the Fourier transform is a homeomorphism from $I^m_\Gamma(\rr{d},Y)$ to $I^m_\Gamma(\rr{d},Y^\perp)$.
\end{prop}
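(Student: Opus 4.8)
The plan is to build on the computation carried out immediately before the statement. Write $\Phi(x,\xi) = (-\xi,x)$ for the symplectic rotation appearing there; by Lemma~\ref{lem:Fourier} it satisfies $\cT_{\wh g}^{Y^\perp}\wh u(x,\xi) = \cT_g^Y u(\Phi(x,\xi))$. First I would record the relevant linear algebra of $\Phi$: it is orthogonal (indeed symplectic), it squares to $-\mathrm{id}$, and it interchanges $N(Y)$ and $N(Y^\perp)$, i.e. $\Phi(N(Y)) = N(Y^\perp)$ and $\Phi(N(Y^\perp)) = N(Y)$. Consequently $\dist(\Phi(x,\xi),N(Y)) = \dist((x,\xi),N(Y^\perp))$; for any subspace $V$ transversal to $N(Y)$ the subspace $\Phi^{-1}V$ is transversal to $N(Y^\perp)$ and $\dist(\Phi(x,\xi),V) = \dist((x,\xi),\Phi^{-1}V)$; and for a constant-coefficient vector field $L = \langle b,\nabla_{x,\xi}\rangle$ one has $L\,(\cT_g^Y u\circ\Phi) = \big(\langle \Phi b,\nabla_{x,\xi}\rangle \cT_g^Y u\big)\circ\Phi$, so a tuple $b_1,\dots,b_k \in N(Y^\perp)$ is carried by $\Phi$ to a tuple $\Phi b_1,\dots,\Phi b_k \in N(Y)$.

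With these identities, the family of estimates \eqref{eq:conormchar} that expresses $\wh u \in I^m_\Gamma(\rr d,Y^\perp)$ — computed with the window $\wh g$ and the transversal $\Phi^{-1}V$ — is, term by term, the family of estimates expressing $u \in I^m_\Gamma(\rr d,Y)$ with window $g$ and transversal $V$, merely evaluated at $\Phi(x,\xi)$. Keeping track of the constants, this simultaneously shows that $\cF$ maps $I^m_\Gamma(\rr d,Y)$ into $I^m_\Gamma(\rr d,Y^\perp)$ and bounds the $\wh g$-seminorms of $\wh u$ by the corresponding $g$-seminorms of $u$; since by Corollary~\ref{cor:windowindep2} the $\wh g$-seminorms generate the topology of $I^m_\Gamma(\rr d,Y^\perp)$, this is exactly the continuity of $\cF\colon I^m_\Gamma(\rr d,Y)\to I^m_\Gamma(\rr d,Y^\perp)$ already asserted in the text preceding the statement.

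For the homeomorphism it remains to supply a continuous inverse. Applying the previous paragraph with $Y$ replaced by $Y^\perp$ and using $(Y^\perp)^\perp = Y$ gives continuity of $\cF\colon I^m_\Gamma(\rr d,Y^\perp)\to I^m_\Gamma(\rr d,Y)$. Now $\cF\circ\cF$ equals the parity operator $P$, $Pu(x) = u(-x) = \big((-\mathrm{id})^* u\big)(x)$, and $-\mathrm{id}\in\On(d)$ with $(-\mathrm{id})^t Z = Z$ for every subspace $Z$, so by Lemma~\ref{lem:conormalcoord} the map $P$ is a homeomorphism of $I^m_\Gamma(\rr d,Y)$ onto itself and likewise of $I^m_\Gamma(\rr d,Y^\perp)$. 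Hence $\cF|_{I^m_\Gamma(\rr d,Y)}$ is a continuous bijection onto $I^m_\Gamma(\rr d,Y^\perp)$ whose inverse is $P\circ\cF\colon I^m_\Gamma(\rr d,Y^\perp)\to I^m_\Gamma(\rr d,Y)$, a composition of continuous maps, which proves that $\cF$ is a homeomorphism.

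There is no serious obstacle here: the one input with real content, Lemma~\ref{lem:Fourier}, has already been invoked, and what remains is the bookkeeping of the previous two paragraphs. The only point demanding a little care is checking that the differential operators and the two distance functions in \eqref{eq:conormchar} transform under $\Phi$ into exactly the form required by Definition~\ref{def:Gconormal} for the subspace $Y^\perp$, and that the attendant change of window $g\mapsto\wh g$ is harmless — which is precisely what Corollary~\ref{cor:windowindep2} guarantees.
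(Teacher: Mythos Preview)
Your proposal is correct and follows essentially the same approach as the paper: both rest on the identity $\cT_{\wh g}^{Y^\perp}\wh u = \cT_g^Y u\circ\Phi$ derived from Lemma~\ref{lem:Fourier}, and then read off the estimates of Definition~\ref{def:Gconormal} via the orthogonal map $\Phi$. The paper's argument is extremely terse --- it records the identity, asserts continuity, and states the proposition with no further proof --- whereas you spell out the linear algebra of $\Phi$ and supply an explicit continuous inverse $P\circ\cF$ via the parity operator and Lemma~\ref{lem:conormalcoord}; this is a welcome elaboration but not a different route.
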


\begin{example}
If $u \in I_\Gamma^m(\rr d, \rr n \times \{0\})$ then by Lemma \ref{lem:IGchar} there exists $a \in \Gamma^m(\rr d)$ such that 
\begin{equation*}
u(x) = (2\pi)^{-(d-n)/2} \int_{\rr {d-n}} e^{i \la x_2,\theta \ra} a(x_1,\theta)\ \dd \theta. 
\end{equation*}
If $B \in \GL(d,\ro)$ and 
\begin{equation*}
B=\begin{pmatrix}
B_1 & 0 \\
0 & B_2
\end{pmatrix}
\end{equation*} 
then the action of $B$ can understood as an action on the symbol of $u$, 
\begin{equation*}
B^*u(x) = (2\pi)^{-(d-n)/2} \int_{\rr {d-n}} e^{i \la x_2,\theta \ra} a(B_1 x_1,B_2^{-t}\theta) |B_2 |^{-1} \, \dd \theta.
\end{equation*}
\end{example}

\begin{rem}\label{rem:conormalgeom}
The estimates \eqref{eq:conormchar} in Definition \ref{def:Gconormal} can be translated to a geometric form, as in Remark \ref{rem:kernelchargeom} for Schwartz kernels of Shubin operators. 
The result is 
\begin{align*}
& \left| (\Pi_{N(Y)} (x,\xi))^\alpha  (\Pi_{N(Y)} \partial_{x,\xi})^\beta \cT^Y_g u (x,\xi) \right | \\
& \qquad \lesssim \left( 1 + \dist((x,\xi),V) \right)^{m} \left( 1 + \dist((x,\xi),N(Y)) \right)^{-N}, 
\end{align*}
for $\alpha, \beta \in \nn {2d}$ such that $|\alpha|=|\beta|$, and $N \in \no$ arbitrary. 
\end{rem}

\begin{rem}
Let $X$ be a smooth manifold of dimension $d$ and let $Y \subseteq X$ be a closed submanifold. 
H\"ormander's conormal distributions $I^m(X,Y)$ with respect to $Y$ of order $m\in \ro$ is by \cite[Definition~18.2.6]{Hormander0} all $u\in \mathcal{D}'(X)$ such that 
\begin{equation*}
L_1\dots L_k u \in B^{-m-d/4}_{2,\infty, \, \rm loc}(X), \quad k \in \no, 
\end{equation*}
where $L_j$ are first order differential operators with coefficients tangential to $Y$, 
and where $B^{-m-d/4}_{2,\infty, \, \rm loc}(X)$ is a Besov space.

Comparing this definition with the estimates defining $I_\Gamma^m(\rr d,Y)$ in Remark \ref{rem:conormalgeom} we see that the fact that we are working with isotropic symbol classes made it necessary to replace the local, Fourier-based Besov spaces with a global, isotropic version based on the transform $\cTp$, resembling a modulation space. 

We note that he submanifold $Y$ is allowed to be nonlinear in $I^m(X,Y)$, as opposed to the linear 
submanifold $Y \subseteq \rr d$ we use in $\Gamma$-conormal distributions $I_\Gamma^m(\rr d,Y)$. 
\end{rem}

\subsection{Microlocal properties of $\Gamma$-conormal distributions}

The wave front set of a conormal distribution in $I^m(X,Y)$ is contained in the conormal bundle of the submanifold $Y$ \cite[Lemma~25.1.2]{Hormander0}. 

The wave front set adapted to the Shubin calculus is the Gabor wave front set studied e.g. in \cite{Hormander1, Nakamura1, Rodino1,SW,SW2}, see also \cite{CS}.
It can be introduced using either pseudodifferential operators or the short-time Fourier transform. In the latter definition one may replace $\mathcal{V}_g u$ by $\cT_g u$ since they are identical up to a factor of modulus one.
\begin{defn}
\label{def:WFG}
If $u \in \cS'(\rr d)$ and $g\in\cS(\rr d)\setminus0$ then $(x_0,\xi_0)\in T^*\rr{d} \setminus{0}$ satisfies  $(x_0,\xi_0) \notin \WF_G(u)$ if 
there exists an open cone $V \subseteq T^* \rr d \setminus 0$ containing $(x_0,\xi_0)$, such that for any $N\in\no$ there exists $C_{V,g,N}>0$ such that $|\cT_g u(x,\xi)|\leqs C_{V,g,N} \eabs{(x,\xi)}^{-N}$ when $(x,\xi) \in V$.
\end{defn}
The definition does not depend on $g\in\cS(\rr d)\setminus0$. 
The Gabor wave front set transforms well under the metaplectic operators discussed in Section \ref{sec:prelim}, cf. \cite{Hormander1}, that is
\begin{equation*}
\WF_G(\mu(\chi) u) = \chi \left(\WF_G(u)\right), \quad u \in \cS'(\rr d), \quad \chi \in \Sp(d,\ro). 
\end{equation*}
\begin{prop}\label{prop:WFconormal}
Let $Y \subseteq \rr d$ be an $n$-dimensional linear subspace, $0 \leqs n \leqs d$. 
If $u \in I^m_\Gamma(\rr d,Y)$ then 
\begin{equation*}
WF_G( u) \subseteq N(Y).  
\end{equation*}
\end{prop}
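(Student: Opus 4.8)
The plan is to reduce to the model case $Y = \rr n \times \{0\}$ via Lemma~\ref{lem:conormalcoord} and the metaplectic covariance of $\WF_G$, then read off the conclusion directly from the defining estimates \eqref{eq:Gconormdefineq}. First I would choose $B \in \On(d)$ with $B^t Y = \rr n \times \{0\}$. Since $B^*$ is realized by a metaplectic operator $\mu(\chi_B)$ (coordinate change in Table~\ref{tab:meta}) and $\WF_G(\mu(\chi_B)u) = \chi_B(\WF_G(u))$, while $\chi_B$ maps $N(Y)$ onto $N(B^t Y)$, it suffices to prove the inclusion $\WF_G(u) \subseteq N(\rr n \times \{0\})$ for $u \in I^m_\Gamma(\rr d, \rr n \times \{0\})$. (I should check that $\chi_B$ sends $N(Y) = Y \times Y^\perp$ to $N(B^tY) = B^tY \times B^t Y^\perp$, which is immediate from the action $(x,\xi) \mapsto (B^{-1}x, B^t\xi) = (B^t x, B^t \xi)$ for $B$ orthogonal.)

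\textbf{The model case.} In coordinates $x = (x_1,x_2)$, $\xi = (\xi_1,\xi_2)$ with $x_1,\xi_1 \in \rr n$, we have $N(\rr n \times \{0\}) = \{(x_1,0,0,\xi_2)\}$, so a point $(x_0,\xi_0)$ not in $N(\rr n \times \{0\})$ has $(x_{0,2},\xi_{0,1}) \neq 0$. I would take $\alpha = \beta = 0$ in \eqref{eq:Gconormdefineq}, giving $|\cT_g u(x,\xi)| \lesssim \eabs{(x_1,\xi_2)}^{m} \eabs{(x_2,\xi_1)}^{-N}$ for all $N$. On any open cone $V$ around $(x_0,\xi_0)$ that stays bounded away from the linear subspace $\{(x_2,\xi_1) = 0\}$, one has $|(x_2,\xi_1)| \gtrsim |(x,\xi)|$ and hence $\eabs{(x_2,\xi_1)} \gtrsim \eabs{(x,\xi)}$, while trivially $\eabs{(x_1,\xi_2)} \lesssim \eabs{(x,\xi)}$. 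Combining, $|\cT_g u(x,\xi)| \lesssim \eabs{(x,\xi)}^{m} \eabs{(x,\xi)}^{-N} = \eabs{(x,\xi)}^{m-N}$ on $V$, and since $N$ is arbitrary this shows $(x_0,\xi_0) \notin \WF_G(u)$ by Definition~\ref{def:WFG}.

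\textbf{Expected obstacle.} The computation is essentially routine; the only point requiring a little care is the existence of a conic neighborhood $V$ of $(x_0,\xi_0) \notin N(\rr n\times\{0\})$ on which $|(x_2,\xi_1)| \gtrsim |(x,\xi)|$ uniformly. This holds because $N(\rr n \times \{0\})$ is a linear subspace and the set $\{(x,\xi) \neq 0 : |(x_2,\xi_1)| \geqs \tfrac{1}{2}|(x,\xi)|\}$ is an open cone containing $(x_0,\xi_0)$ as soon as $(x_{0,2},\xi_{0,1}) \neq 0$, after possibly shrinking. A secondary point is confirming that the coordinate change $B^*$ appearing in Lemma~\ref{lem:conormalcoord} indeed coincides (up to the harmless constant $|B|^{1/2} = 1$) with a metaplectic operator, so that the transformation law for $\WF_G$ applies; this is exactly the first row of Table~\ref{tab:meta}.
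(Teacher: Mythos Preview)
Your argument is correct. The core step---taking $k=0$ (equivalently $\alpha=\beta=0$) in the defining estimate and observing that on a cone around any $(x_0,\xi_0)\notin N(Y)$ the ``transversal'' weight dominates $\eabs{(x,\xi)}$---is exactly what the paper does. The difference is structural: the paper works directly with a general $Y$ using orthogonal projections, writing the cone as $\{|(\pi_Y x,\pi_{Y^\perp}\xi)|<C|(\pi_{Y^\perp}x,\pi_Y\xi)|\}$ and invoking Definition~\ref{def:Gconormal} with trivial $L_j$, whereas you first rotate to the model subspace $\rr n\times\{0\}$ via Lemma~\ref{lem:conormalcoord} and the metaplectic covariance of $\WF_G$. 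Your route is slightly longer but has the advantage of making the cone and the estimate completely explicit in coordinates; the paper's version is shorter and avoids invoking the transformation law for $\WF_G$ altogether.

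Two small points of phrasing to tighten. First, the set $\{(x,\xi)\neq 0:|(x_2,\xi_1)|\geqs\tfrac12|(x,\xi)|\}$ with the fixed constant $\tfrac12$ need not contain $(x_0,\xi_0)$; what you want is that for \emph{some} $c\in(0,1)$ the open cone $\{|(x_2,\xi_1)|>c|(x,\xi)|\}$ contains $(x_0,\xi_0)$, which is immediate since $|(x_{0,2},\xi_{0,1})|>0$. Second, for $B\in\On(d)$ one may have $\det B=-1$, so the metaplectic operator is $|\det B|^{1/2}B^*=B^*$ only up to the usual sign/phase ambiguity; this is harmless for $\WF_G$ but worth stating cleanly.
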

\begin{proof}
Suppose $(x,\xi) \notin N(Y)$. This means $(\pi_{Y^\perp} x,\pi_Y\xi) \neq 0$, so $(x,\xi) \in V$ where
the open conic set $V \subseteq T^* \rr d$ is defined by 
\begin{equation*}
V = \{ (x,\xi) \in T^* \rr d : \ |(\pi_Y x, \pi_{Y^\perp} \xi)| < C |( \pi_{Y^\perp} x,\pi_Y \xi )| \}
\end{equation*}
for some $C>0$. 
Using 
\begin{equation*}
|(x,\xi)|^2 = |(\pi_Yx, \pi_{Y^\perp} \xi)|^2 + |( \pi_{Y^\perp} x,\pi_Y\xi)|^2, 
\end{equation*}
$\dist(x,Y) = | \pi_{Y^\perp} x|$, $\dist(x,Y^\perp) = |\pi_Yx|$ and 
\begin{equation*}
\dist^2( (x,\xi), N(Y) ) = \dist^2( x ,Y) + \dist^2(\xi,Y^\perp), 
\end{equation*}
the result follows from Definition \ref{def:Gconormal} (with trivial operators $L_j$). 
\end{proof}

\begin{cor}\label{cor:WFpsdokernel}
If $a \in \Gamma^m(\rr {2d})$ and $a^w(x,D)$ has Schwartz kernel $K_a$ then
\begin{equation*}
\WF_G(K_a) \subseteq N(\Delta) \subseteq T^* \rr {2d}.  
\end{equation*}
\end{cor}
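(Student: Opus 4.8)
The plan is to derive Corollary~\ref{cor:WFpsdokernel} as an immediate specialization of Proposition~\ref{prop:WFconormal}. First I would recall the example appearing just before Lemma~\ref{lem:IGchar}, namely that by Proposition~\ref{prop:LGchar} together with \eqref{eq:kernelchar2} the Schwartz kernels of Weyl quantized Shubin operators with symbol in $\Gamma^m(\rr {2d})$ are precisely the $\Gamma$-conormal distributions with respect to the diagonal: $I^m_\Gamma(\rr {2d},\Delta) = \{ K_a : a \in \Gamma^m(\rr {2d}) \}$. In particular, if $a \in \Gamma^m(\rr {2d})$ then $K_a \in I^m_\Gamma(\rr {2d}, \Delta)$.

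Next I would apply Proposition~\ref{prop:WFconormal} with $d$ replaced by $2d$, $Y = \Delta$, and $n = d$, which is an admissible choice since $\Delta$ is a $d$-dimensional linear subspace of $\rr {2d}$. This directly yields $\WF_G(K_a) \subseteq N(\Delta)$, which is the assertion. The conormal space here is $N(\Delta) = \Delta \times \Delta^\perp \subseteq T^* \rr {2d}$, consistent with the notation introduced in Section~\ref{sec:gconorm}.

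There is essentially no obstacle here: the corollary is a one-line consequence of two results already established. The only point requiring a moment's care is bookkeeping of dimensions and the identification of the subspace --- one must note that the ambient dimension in the definition of $\Gamma$-conormal distributions is being taken to be $2d$ and that $\Delta$ plays the role of the subspace $Y$, with $N(\Delta) = \Delta \times \Delta^\perp$ as the relevant conormal space. Once this identification is in place, the inclusion $\WF_G(K_a) \subseteq N(\Delta)$ follows verbatim from Proposition~\ref{prop:WFconormal}.
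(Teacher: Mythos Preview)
Your proposal is correct and matches the paper's approach: the corollary is stated without proof in the paper, being an immediate consequence of Proposition~\ref{prop:WFconormal} applied with ambient dimension $2d$ and $Y=\Delta$, together with the identification $I^m_\Gamma(\rr{2d},\Delta)=\{K_a:a\in\Gamma^m(\rr{2d})\}$ from the example preceding Lemma~\ref{lem:IGchar}.
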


It is well known that Shubin pseudodifferential operators are microlocal with respect to $\WF_G$, that is if $a \in \Gamma^m( \rr {2d})$ and $u \in \cS' (\rr d)$ then
\begin{equation*}
\WF_G(a^w(x,D) u)\subseteq \WF_G(u),
\end{equation*}
see e.g. \cite{Hormander1,SW2}. We show that they also preserve $\Gamma$-conormality.

\begin{prop}
\label{prop:pseudocomp}
Let $Y \subseteq \rr d$ be an $n$-dimensional linear subspace, $0 \leqs n \leqs d$. 
If $a \in \Gamma^{m'}(\rr {2d})$ then $a^w(x,D)$ is continuous from $I^m_\Gamma(\rr d,Y)$ to $I_\Gamma^{m+m'}(\rr d,Y)$.
\end{prop}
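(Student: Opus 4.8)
The plan is to reduce the statement to the special case $Y=\rr d$ — where it says that $a^w(x,D)$ maps $\Gamma^m(\rr d)=I^m_\Gamma(\rr d,\rr d)$ continuously into $\Gamma^{m+m'}(\rr d)=I^{m+m'}_\Gamma(\rr d,\rr d)$ — and then to prove this case directly by combining the characterization of Shubin symbols in Proposition \ref{prop:symbchar} with the kernel estimate of Proposition \ref{prop:LGchar}.

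For the reduction, given an $n$-dimensional subspace $Y$ I would pick $B\in\On(d)$ with $B(\rr n\times\{0\})=Y$. By Lemma \ref{lem:conormalcoord}, $B^*$ is a homeomorphism from $I^\bullet_\Gamma(\rr d,Y)$ onto $I^\bullet_\Gamma(\rr d,\rr n\times\{0\})$, and by Lemma \ref{lem:IGchar} the partial Fourier transform $\cF_2$ in the last $d-n$ variables is a homeomorphism from $I^\bullet_\Gamma(\rr d,\rr n\times\{0\})$ onto $\Gamma^\bullet(\rr d)$. Both $B^*$ and $\cF_2$ are metaplectic operators, so by \eqref{metaplecticoperator} the composition $\Phi=\cF_2 B^*$ conjugates $a^w(x,D)$ into $\widetilde a^w(x,D)$ with $\widetilde a\in\Gamma^{m'}(\rr{2d})$, because $\Gamma^{m'}$ is invariant under linear invertible coordinate transformations. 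Since $\Phi(a^w(x,D)u)=\widetilde a^w(x,D)(\Phi u)$ and $\Phi$ is a homeomorphism, the assertion for $Y$ follows from the case $Y=\rr d$ applied to $\widetilde a$.

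So let $c\in\Gamma^{m'}(\rr{2d})$, $b\in\Gamma^m(\rr d)$, and fix $g,h\in\cS(\rr d)\setminus 0$. From $[\partial_j,c^w(x,D)]=(\partial_{x_j}c)^w(x,D)$ one gets the finite expansion $\partial^\alpha(c^w(x,D)b)=\sum_{\alpha'+\alpha''=\alpha}\binom{\alpha}{\alpha'}(\partial_x^{\alpha'}c)^w(x,D)(\partial^{\alpha''}b)$ with $\partial_x^{\alpha'}c\in\Gamma^{m'-|\alpha'|}(\rr{2d})$ and $\partial^{\alpha''}b\in\Gamma^{m-|\alpha''|}(\rr d)$, so by \eqref{eq:diffident} and Proposition \ref{prop:symbchar} it is enough to prove, for all such $c,b$ and all $N$, that $|\cT_g(c^w(x,D)b)(z,\zeta)|\lesssim\eabs{z}^{m+m'}\eabs{\zeta}^{-N}$. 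By the reproducing formula \eqref{eq:reproducing}, $c^w(x,D)b=\|h\|_{L^2}^{-2}c^w(x,D)\cT_h^*\cT_h b$, so $\cT_g(c^w(x,D)b)$ is the result of applying the operator $\cT_g c^w(x,D)\cT_h^*$ to $\cT_h b$. The Schwartz kernel $K$ of that operator is, by \eqref{eq:cTpkernelident}, a reflection of $\cT_{g\otimes\overline h}K_c$, which by Proposition \ref{prop:LGchar} (with $\rho=1$, $\alpha=\beta=0$) satisfies $|K(w_1,w_2)|\lesssim\eabs{w_1+w_2}^{m'}\eabs{w_1-w_2}^{-N}$ for every $N$, while $|\cT_h b(z,\zeta)|\lesssim\eabs{z}^m\eabs{\zeta}^{-M}$ for every $M$ by Proposition \ref{prop:symbchar}. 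It remains to bound $\int_{\rr{2d}}|K(w_1,w_2)|\,|\cT_h b(w_2)|\,\dd w_2$, which converges absolutely by these bounds; it also represents the operator applied to $\cT_h b$, by continuity of $\cT_g c^w(x,D)\cT_h^*$ on $\cS'$ together with the rapid off-diagonal decay of $K$.

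The step that needs care — and which I expect to be the main obstacle — is this last estimate, because Peetre's inequality \eqref{eq:Peetre} must be organized to produce \emph{exactly} the power $\eabs{z_1}^{m+m'}$; a careless application only gives $\eabs{z_1}^{m+2|m'|}$, which is too weak. Writing $w_i=(z_i,\zeta_i)$, the right order is: first bound $\eabs{w_1+w_2}^{m'}\leqs\eabs{z_1+z_2}^{m'}\eabs{\zeta_1+\zeta_2}^{|m'|}$; then combine $\eabs{z_1+z_2}^{m'}\eabs{z_2}^m$ via $\eabs{z_1+z_2}^{m'}\lesssim\eabs{z_2}^{m'}\eabs{z_1-z_2}^{|m'|}$ followed by $\eabs{z_2}^{m+m'}\lesssim\eabs{z_1}^{m+m'}\eabs{z_1-z_2}^{|m+m'|}$; finally turn $\eabs{\zeta_1+\zeta_2}^{|m'|}\eabs{\zeta_2}^{-M}$ into $\eabs{\zeta_1}^{-M'}$, with $M'$ large once $M$ is, by one more application of \eqref{eq:Peetre}. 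All spurious powers of $\eabs{z_1-z_2}$ and $\eabs{\zeta_1-\zeta_2}$ are dominated by powers of $\eabs{w_1-w_2}$ and absorbed by choosing $N$ so large that $\int\eabs{w_1-w_2}^{-N''}\,\dd w_2<\infty$. Since every implied constant is then controlled by finitely many seminorms of $b$ (with $c$ fixed), summing over the finitely many terms of the Leibniz expansion yields continuity of $c^w(x,D)\colon\Gamma^m(\rr d)\to\Gamma^{m+m'}(\rr d)$, and undoing the reduction completes the proof.
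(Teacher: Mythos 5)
Your proof is correct and takes essentially the same route as the paper: the same reduction via an orthogonal change of coordinates and a partial Fourier transform to the case $Y=\rr d$, the same use of the kernel identity \eqref{eq:cTpkernelident} together with Propositions \ref{prop:symbchar} and \ref{prop:LGchar}, and the same Peetre bookkeeping to get $\eabs{z_1}^{m+m'}\eabs{\zeta_1}^{-N}$. The one cosmetic difference is that you dispose of the $x$-derivatives up front via the operator Leibniz identity $\partial^\alpha(c^w b)=\sum_{\alpha'+\alpha''=\alpha}\binom{\alpha}{\alpha'}(\partial_x^{\alpha'}c)^w(\partial^{\alpha''}b)$, whereas the paper implements the same observation inside the kernel integral by writing $\partial_{x_j}=(\partial_{x_j}+\partial_{y_j})-\partial_{y_j}$ and integrating by parts.
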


\begin{proof}
If $a \in \Gamma^{m'} (\rr {2d})$ and $U \in \On(d)$ then we have by symplectic invariance of the Weyl calculus 
\eqref{metaplecticoperator}
\begin{equation*}
(U^t)^* a^w(x,D) U^* = b^w(x,D)
\end{equation*}
where $b(x,\xi) = a(U^t x, U^t \xi) \in \Gamma^{m'} (\rr {2d})$. By Lemma \ref{lem:conormalcoord} we may therefore assume that $Y = \rr n \times \{0\}$. The symplectic invariance also guarantees that
\begin{equation*}
\mathscr{F}_2^{-1} b^w(x,D) \mathscr{F}_2 = c^w(x,D)
\end{equation*}
with $c(x,\xi) = b(x_1,\xi_2,\xi_1,-x_2)\in \Gamma^{m'} (\rr {2d})$ 
where $x=(x_1,x_2)\in \rr d$, $x_1 \in \rr n$, $x_2 \in \rr {d-n}$. 
To prove $a^w(x,D) u \in I_\Gamma^{m+m'}(\rr d,\rr n \times \{0\})$ for $a \in \Gamma^{m'} (\rr {2d})$ and $u\in I^m_\Gamma(\rr d,\rr n \times \{0\})$
is therefore by Lemma \ref{lem:IGchar} equivalent to proving that $a^w(x,D) u \in \Gamma^{m+m'}(\rr d)$ for $a \in \Gamma^{m'} (\rr {2d})$ and $u \in \Gamma^m(\rr d)$. 

Let $a \in \Gamma^{m'} (\rr {2d})$, $u \in \Gamma^m(\rr d)$ and set $A= a^w(x,D)$. By Proposition \ref{prop:symbchar} it suffices to verify
\begin{equation*}
|\partial^\alpha_{x} \cTp Au (x,\xi)| \lesssim \eabs{x}^{m+m'-|\alpha|}\eabs{\xi}^{-N}, \quad (x,\xi) \in T^* \rr d, 
\end{equation*}
for any $N \geqs 0$ and $\alpha \in \nn d$. 

Let $N \geqs 0$ and $\alpha \in \nn d$. 
Writing $\cTp Au=(\cTp A\cTp^*)\cTp u$ and using \eqref{eq:cTpkernelident} we are thus tasked with estimating $\partial^\alpha_{x}$ acting on 
\begin{equation}\label{eq:kernelreformulation}
\begin{aligned}
\cTp Au (x,\xi) & = \int_{\rr {2d}} \cTp K_a(x,y,\xi,-\eta) \cTp u(y,\eta)\ \dd y \, \dd \eta \\
& = \int_{\rr {2d}} e^{\frac{i}{2} \la x-y,\xi+\eta \ra} \, \cTp^\Delta K_a(x,y,\xi,-\eta) \, \cTp u(y,\eta)\ \dd y \, \dd \eta.
\end{aligned}
\end{equation}

The integral \eqref{eq:kernelreformulation} converges due to the estimates
\begin{equation*}
\label{eq:Gineqker}
|\partial_y^\alpha \cTp u(y,\eta)|\lesssim \eabs{y}^{m-|\alpha|}\eabs{\eta}^{-N}, \quad y, \eta \in \rr d, \quad \alpha \in \nn d, \quad N \geqs 0, 
\end{equation*}
which follows from Proposition \ref{prop:symbchar}, and the estimates
\begin{equation*}
\begin{aligned}
| (\partial_x + \partial_y)^\alpha \cTp ^\Delta K_a (x,y, \xi, -\eta)| 
& \lesssim \eabs{(x+y,\xi + \eta)}^{m'-|\alpha|} \eabs{(x-y,\xi-\eta)}^{-N}, \\
& \qquad x,y,\xi,\eta \in \rr d, \quad \alpha \in \nn d, \quad N \geqs 0, 
\end{aligned}
\end{equation*}
that are guaranteed by Proposition \ref{prop:LGchar}. 

Writing $\partial_{x_j} = \partial_{x_j} + \partial_{y_j} - \partial_{y_j}$ for $1 \leqs j \leqs d$ and differentiating under the integral in \eqref{eq:kernelreformulation} we obtain by integration by parts for any $N_1,N_2 \geqs 0$
\begin{equation*}
\begin{aligned}
& \left|\partial^\alpha_{x} \cTp Au (x,\xi)\right| \\
& = \sum_{\beta \leqs \alpha}  C_{\beta} \left| \int_{\rr {2d}} (\partial_x + \partial_y)^\beta \left(e^{\frac{i}{2} \la x-y,\xi+\eta \ra}\,\cTp^\Delta K_a(x,y,\xi,-\eta)\right) \, \partial^{\alpha-\beta}_y \cTp u(y,\eta)\ \dd y \, \dd \eta\right|\\
& = \sum_{\beta \leqs \alpha}  C_{\beta} \left| \int_{\rr {2d}} e^{\frac{i}{2} \la x-y,\xi+\eta \ra} \, (\partial_x + \partial_y)^\beta \, \cTp^\Delta K_a(x,y,\xi,-\eta) \, \partial^{\alpha-\beta}_y \cTp u(y,\eta)\ \dd y \, \dd \eta\right|\\
& \lesssim \sum_{\beta \leqs \alpha} \int_{\rr {2d}} \left|(\partial_x + \partial_y)^\beta \cTp^\Delta K_a(x,y,\xi,-\eta) \, \partial^{\alpha-\beta}_y \cTp u(y,\eta)\right|\ \dd y \, \dd \eta,\\
& \lesssim \sum_{\beta \leqs \alpha}  \int_{\rr {2d}} \eabs{(x+y,\xi + \eta)}^{m'-|\beta|} \eabs{(x-y,\xi-\eta)}^{-N_1} \, \eabs{y}^{m-|{\alpha-\beta}|}\eabs{\eta}^{-N_2} \, \dd y \, \dd \eta.
\end{aligned}
\end{equation*}

Finally we estimate 
\begin{align*}
\int_{\rr {2d}} &\eabs{(x+y,\xi+\eta)}^{m'-|\beta|} \eabs{(x-y,\xi-\eta)}^{-N_1}\eabs{y}^{m-|\alpha-\beta|} \eabs{\eta}^{-N_2} \dd y \, \dd \eta\\
& = \int_{\rr {2d}}  \eabs{(2x+y,2\xi+\eta)}^{m'-|\beta|}\eabs{(y,\eta)}^{-N_1}\eabs{y+x}^{m-|\alpha-\beta|} \eabs{\eta+\xi}^{-N_2} \dd y \, \dd \eta\\
&\lesssim \int_{\rr {2d}}  \eabs{x}^{m'-|\beta|} \eabs{y}^{|m'|+|\beta|}\eabs{\xi}^{|m'|+|\beta|}\eabs{\eta}^{|m'|+|\beta|}\eabs{(y,\eta)}^{-N_1}\eabs{x}^{m-|\alpha-\beta|} \\
& \qquad \qquad \qquad \qquad \qquad \times \eabs{y}^{|m|+|\alpha|} \eabs{\xi}^{-N_2} \eabs{\eta}^{N_2}\dd y \, \dd \eta\\
&\lesssim \eabs{x}^{m'+m-|\alpha|} \eabs{\xi}^{|m'|+|\alpha|-N_2} \int_{\rr {2d}} \eabs{y}^{|m'|+|m| + 2 |\alpha|} \eabs{\eta}^{|m'|+|\alpha|+N_2}\eabs{(y,\eta)}^{-N_1} \dd y \, \dd \eta\\
&\lesssim \eabs{x}^{m'+m-|\alpha|} \eabs{\xi}^{-N},
\end{align*}
provided
$N_1 > N_2+2|m'|+|m|+3|\alpha| +2d$ and $N_2 \geqs N+|m'| + |\alpha|$
This proves 
\begin{equation*}
\left|\partial^\alpha_{x} \cTp Au (x,\xi)\right|\lesssim \eabs{x}^{m^\prime+m-|\alpha|} \eabs{\xi}^{-N}, \quad (x,\xi) \in T^* \rr d
\end{equation*}
and as a by-product of these estimates we obtain the claimed continuity.
\end{proof}

\begin{rem}
The proof shows that the result can be generalized. If $a \in \Gamma_\rho^{m'}(\rr {2d})$ and $u \in I_{\Gamma,\rho}^m(\rr d, Y)$ 
then $a^w(x,D) u \in I_{\Gamma,\rho}^{m+m'}(\rr d, Y)$, for $0 \leqs \rho \leqs 1$. 
Here $I_{\Gamma,\rho}^m(\rr d, Y)$ is defined as in Definition \ref{def:Gconormal} with the modified estimate
\begin{equation*}
\left( 1 + \dist((x,\xi),V) \right)^{m- \rho k} \left( 1 + \dist((x,\xi),N(Y)) \right)^{-N}
\end{equation*}
in \eqref{eq:conormchar}. 
\end{rem}
Since Proposition \ref{prop:pseudocomp} shows how $\Gamma$-conormality is preserved under the action of a pseudodifferential operator, we obtain the following result on conormal elliptic regularity:

\begin{cor}[Conormal elliptic regularity]
Suppose $u \in \cS'(\rr d)$ solves the pseudodifferential equation $a^w(x,D) u = f$ with $f \in I_\Gamma^m(\rr d, Y)$ where $a \in \Gamma^{m'}(\rr {2d})$ is globally elliptic, that is satisfying
\begin{equation}\label{glell}
|a(x,\xi)| \geqs C \langle (x,\xi) \rangle^{m'}, \qquad |(x,\xi)| \geq R 
\end{equation} 
for $C,R>0$. Then $u \in I_\Gamma^{m-m'}(\rr d, Y)$.
\end{cor}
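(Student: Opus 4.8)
The plan is to reduce the claim to Proposition \ref{prop:pseudocomp} by constructing a parametrix for $a^w(x,D)$ inside the Shubin calculus. The global ellipticity assumption \eqref{glell} is precisely the condition under which the classical parametrix construction for Shubin operators applies (see \cite{Shubin1}): there exists $b \in \Gamma^{-m'}(\rr {2d})$ such that
\begin{equation*}
b^w(x,D) \, a^w(x,D) = I + r^w(x,D),
\end{equation*}
where $r \in \bigcap_{k \in \ro} \Gamma^{k}(\rr {2d}) = \cS(\rr {2d})$. Since $r$ has Schwartz symbol, $r^w(x,D)$ has Schwartz kernel in $\cS(\rr {2d})$ and therefore maps $\cS'(\rr d)$ continuously into $\cS(\rr d)$.

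First I would apply this left parametrix to the equation $a^w(x,D) u = f$, obtaining
\begin{equation*}
u = b^w(x,D) f - r^w(x,D) u.
\end{equation*}
Since $f \in I^m_\Gamma(\rr d, Y)$ and $b \in \Gamma^{-m'}(\rr {2d})$, Proposition \ref{prop:pseudocomp} (with $m'$ there replaced by $-m'$) yields $b^w(x,D) f \in I^{m-m'}_\Gamma(\rr d, Y)$. On the other hand $r^w(x,D) u \in \cS(\rr d) \subseteq I^{m-m'}_\Gamma(\rr d, Y)$ by the inclusion $\cS(\rr d) \subseteq I^{m-m'}_\Gamma(\rr d,Y)$ established above. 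As $I^{m-m'}_\Gamma(\rr d, Y)$ is a linear space, we conclude $u \in I^{m-m'}_\Gamma(\rr d, Y)$.

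The construction of the parametrix is entirely standard and is not the point of the corollary; the genuinely new ingredient is Proposition \ref{prop:pseudocomp}, which guarantees that the action of a Shubin operator preserves $\Gamma$-conormality with the expected shift in the order. Hence there is no real obstacle beyond recalling the parametrix; one should however be careful to invoke a \emph{left} parametrix, so that $u$ only enters the identity through the smoothing remainder $r^w(x,D) u$, and to keep track that the order of $b$ is $-m'$, which produces precisely the order $m - m'$ in the conclusion.
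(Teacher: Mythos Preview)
Your argument is correct and essentially identical to the paper's own proof: both construct a left parametrix $p \in \Gamma^{-m'}(\rr{2d})$ with $p^w(x,D)a^w(x,D) = I + R$ where $R: \cS' \to \cS$, write $u = p^w(x,D)f - Ru$, and conclude via Proposition~\ref{prop:pseudocomp} together with $\cS(\rr d) \subseteq I_\Gamma^{m-m'}(\rr d, Y)$. The only difference is that you spell out a few more details than the paper does.
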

\begin{proof}
Under condition \eqref{glell}, $a^w(x,D)$ admits a parametrix $p^w(x,D)$ with $p \in \Gamma^{-m'}$ and $p^w(x,D)a^w(x,D) = I+R$, where $R$ is continuous $\cS' \rightarrow \cS$ \cite{Shubin1}. Then $u=p^w(x,D)f-Ru$ and hence $u \in I_\Gamma^{m-m'}(\rr {d}, Y)$.
\end{proof}

\section*{acknowledgements}

The authors would like to express their gratitude to Luigi Rodino, Joachim Toft and Moritz Doll for helpful discussions on the subject. 



\begin{thebibliography}{2000}

\bibitem{Bony}
J.~M.~Bony, \textit{Second microlocalization and propagation of singularities for semi-linear hyperbolic equations}, Hyperbolic Equations and Related Topics, (Katata/Kyoto, 1984), Academic Press, Boston, MA, 1986, pp. 11--49.

\bibitem{CS}
M.~Cappiello and R.~Schulz, \textit{Microlocal analysis of quasianalytic Gelfand--Shilov type ultradistributions}, Complex Var. Elliptic Equ. \textbf{61} (4), 2016, 538--561.

\bibitem{DeGosson}
M.~de~Gosson, \textit{Maslov indices on the metaplectic group Mp(n)}, Ann. Inst. Fourier \textbf{40} (3), 1990, 537--555.

\bibitem{Folland1}
G.~B.~Folland, \textit{Harmonic Analysis in Phase Space}, Princeton University Press, 1989.

\bibitem{Grochenig1}
K.~Gr\" ochenig, \textit{Foundations of Time-Frequency Analysis}, Birkh\" auser, Boston, 2001.

\bibitem{Hormander0}
L.~H\"ormander,
\textit{The Analysis of Linear Partial Differential Operators}, Vol. I, III, IV
Springer, Berlin, 1990.

\bibitem{Hormander1}
L.~H\"ormander, \textit{Quadratic hyperbolic operators}, Microlocal Analysis and Applications, LNM vol. 1495, L. Cattabriga, L. Rodino (Eds.), 1991, pp. 118--160.
%
\bibitem{Hormander2}
L.~H\"ormander, \textit{Symplectic  classification  of quadratic  forms,  and  general  Mehler
formulas}, Math. Z. \textbf{219} (3), 1995, 413--449.
%
\bibitem{Joshi}
M.~S.~Joshi, \textit{An intrinsic characterisation of polyhomogeneous Lagrangian distributions}, Trans. Amer. Math. Soc. \textbf{125} (5), 1997, 1537--1543.
%
\bibitem{MelroseAPS}
R.~B.~Melrose, \textit{The Atiyah--Patodi--Singer Theorem}, AK Peters, Wellesley, 1993.
%
\bibitem{Nakamura1}
S.~Nakamura,
\textit{Propagation of the homogeneous wave front set for Schr\"odinger equations},
Duke Math. J. \textbf{126} (2), 2005, 349--367.
%
\bibitem{Nicola1}
F.~Nicola and L.~Rodino, \textit{Global pseudo-differential calculus on Euclidean spaces}, Birkh\"auser, Basel, 2010.
%
\bibitem{PRW1}
K. Pravda-Starov, L.~Rodino and P.~Wahlberg, \textit{Propagation of Gabor singularities for Schr\"odinger equations with quadratic Hamiltonians}, arXiv:1411.0251 (2014).
%
\bibitem{Rodino1}
L.~Rodino and P.~Wahlberg, \textit{The Gabor wave front set}, Monaths. Math. \textbf{173} (4), 2014, 625--655. 
%
\bibitem{SW}
R.~Schulz and P.~Wahlberg, \textit{The equality of the homogeneous and the Gabor wave front set}, to appear in Comm. Partial Differential Equations (2016), https://arxiv.org/abs/1304.7608.
%
\bibitem{SW2}
R.~Schulz and P.~Wahlberg, \textit{Microlocal properties of Shubin pseudodifferential and localization operators}, J. Pseudo-Differ. Oper. Appl. \textbf{7} (1), 2016, 91---111. 
%
\bibitem{Shubin1}
M.~A.~Shubin, \textit{Pseudodifferential Operators and Spectral Theory}, Springer, 2001.
%
\bibitem{Tataru}
D.~Tataru, \textit{Phase space transforms and microlocal analysis}, Phase Space Analysis of Partial Differential Equations, Vol. II, Pubbl. Cent. Ric. Mat. Ennio de Giorgi, Scuola Norm. Sup., Pisa, 2004, pp. 505--524.
%
\bibitem{Taylor1}
M.~E.~Taylor, \textit{Noncommutative Harmonic Analysis}, Mathematical Surveys and Monographs \textbf{22}, AMS, Providence, Rhode Island, 1986.
%

\end{thebibliography}
\end{document}